\documentclass[11pt, letterpaper]{article}
\usepackage{adjustbox}
\usepackage[ruled, vlined, linesnumbered]{algorithm2e}
\usepackage{amsfonts,amsmath,amsthm,amssymb,dsfont}
\usepackage{array}
\usepackage{bm,bbm}
\usepackage{comment}
\usepackage{enumerate}
\usepackage{float,wrapfig}
\usepackage[T1]{fontenc}
\usepackage{fullpage}
\usepackage{framed}
\usepackage[margin=1in]{geometry}
\usepackage{graphicx} % support the \includegraphics command and options

\usepackage[colorlinks,citecolor=blue,linkcolor=blue,urlcolor=black]{hyperref}
\usepackage[capitalize]{cleveref}

\usepackage{mathtools, mathdots, mathrsfs}
\usepackage{multirow}
\usepackage{natbib}

\usepackage{stmaryrd}
\usepackage{tcolorbox}
\usepackage{tgtermes}
\usepackage{tikz}
\usepackage{thmtools}
\usepackage{thm-restate}
\usepackage[normalem]{ulem}
\usepackage{url}
\usepackage{verbatim}
\usepackage{xspace}
\xspaceaddexceptions{]\}}
\usepackage{xfrac,nicefrac}

\usetikzlibrary{hobby}
\usetikzlibrary{decorations.markings}
\usetikzlibrary{quotes,angles}

\DeclarePairedDelimiter\abs{\lvert}{\rvert}%
\DeclarePairedDelimiter\norm{\lVert}{\rVert}%
\DeclarePairedDelimiter\greaterInt{\lceil}{\rceil}%
\DeclarePairedDelimiter\pair{<}{>}%
\DeclarePairedDelimiter\targetPairRaw{(}{)}%

\makeatletter
\newcommand{\target}{\@ifnextchar\bgroup{\target@with}{\target@without}}
\def\target@with#1{\Tilde{U}\targetPairRaw{#1}}
\def\target@without{\Tilde{U}}

\let\oldabs\abs
\def\abs{\@ifstar{\oldabs}{\oldabs*}}
\let\oldnorm\norm
\def\norm{\@ifstar{\oldnorm}{\oldnorm*}}
\let\oldgreaterInt\greaterInt
\def\greaterInt{\@ifstar{\oldgreaterInt}{\oldgreaterInt*}}
\let\oldpair\pair
\def\pair{\@ifstar{\oldpair}{\oldpair*}}

\newcommand{\frontierPair@star}[1]{\langle#1\rangle}
\newcommand{\frontierPair@nostar}[1]{\left<#1\right>}
\def\frontierPair{\@ifstar{\frontierPair@star}{\frontierPair@nostar}}
\makeatother

\theoremstyle{plain}

\newtheorem{theorem}{Theorem}[section]
\newtheorem{lemma}[theorem]{Lemma}
\newtheorem{observation}[theorem]{Observation}

\theoremstyle{definition}

\newtheorem{remark}[theorem]{Remark}

\def\ShowAuthNotes{1}
\ifnum\ShowAuthNotes=1
\newcommand{\authnote}[2]{\ \\ \textcolor{red}{\parbox{0.9\linewidth}{[{\footnotesize {\bf #1:} { {#2}}}]}}\newline}
\else
\newcommand{\authnote}[2]{}
\fi

\newcommand{\bydef}{\stackrel{\mathrm{def}}{=}}
\renewcommand{\epsilon}{\varepsilon}

\renewcommand{\hat}{\widehat}

\newcommand{\expectB}{B}
\newcommand{\anyB}{B}

\newcommand{\actualB}{B'}
\newcommand{\actualU}{U}

\newcommand{\ds}{\mathcal{D}}

\newcommand{\pull}{\textsc{Pull}}

\newcommand{\mergeds}{\textsc{Merge}}

\newcommand{\ins}{\textsc{Insert}}
\newcommand{\decrkey}{\textsc{Decrease-Key}}
\newcommand{\extract}{\textsc{Extract-Min}}

\newcommand{\nonEmpty}{\textsc{Non-Empty}}

\newcommand{\True}{\mathtt{True}}
\newcommand{\False}{\mathtt{False}}

\newcommand{\dirtyPj}{J}
\newcommand{\distV}{\mathtt{dis}}

\SetKwInput{AlgInput}{Input}
\SetKwInput{AlgOutput}{Output}
\SetKwInput{Require}{Require}
\SetKwInput{Ensure}{Ensure}
\SetKwInput{Note}{Note}
\SetKwProg{Fn}{Function}{:}{}
\SetKwProg{Alg}{Algorithm}{:}{}
\SetKw{Tuple}{Tuple}
\SetKwFunction{Relax}{Relax}
\SetKwFunction{Addition}{Addition}
\SetKwFunction{Comparison}{Comparison}
\SetKwFunction{FnDeactivate}{Deactivate}
\SetKwFunction{FindPivots}{FindPivots}
\SetKwFunction{BMSSP}{BMSSP}
\SetKwFunction{UpdatePj}{Update-Pj}
\SetKwFunction{Partition}{Partition}
\SetKw{LogicAnd}{and}
\SetKw{LogicOr}{or}
\SetKw{KeyWhile}{while}
\SetKw{KeyContinue}{continue}
\SetKw{KeyBreak}{break}
\SetKw{KeyWhere}{where}
\SetKw{KeyIf}{if}
\SetKw{KeyFrom}{from}
\SetKw{KeyTo}{to}
\DontPrintSemicolon

\title{A Faster Directed Single-Source Shortest Path Algorithm}

\author{
    Ran Duan \thanks{Tsinghua University. Email: duanran@mail.tsinghua.edu.cn,
    ylh21@mails.tsinghua.edu.cn.}
    \and Xiao Mao \thanks{Stanford University. Email: matthew99a@gmail.com.}
    \and Xinkai Shu \thanks{Max Planck Institute for Informatics. Email: xshu@mpi-inf.mpg.de.}
    \and Longhui Yin \footnotemark[1]
}

\date{February 2026}

\begin{document}
\maketitle

\begin{abstract}
    This paper presents a new deterministic algorithm for single-source shortest paths (SSSP) on real non-negative edge-weighted directed graphs, with running time $O(m\sqrt{\log n}+\sqrt{mn\log n\log \log n})$, which is $O(m\sqrt{\log n\log \log n})$ for sparse graphs. This improves the recent breakthrough result of $O(m\log^{2/3} n)$ time for directed SSSP algorithm [Duan, Mao, Mao, Shu, Yin 2025]. %and almost matches the time bound for undirected SSSP algorithm [Duan, Mao, Shu, Yin 2023][Yan 2025].

    % O(m\sqrt{n}(1+\sqrt{\frac{n\log \log n}{m}}))

\end{abstract}

\section{Introduction}
The single-source shortest path (SSSP) problem is one of the most foundational problems in graph theory,
whose algorithms have significant improvements since the 1950s.
Given a graph $G=(V,E)$ with $n$ vertices and $m$ non-negative real-weighted edges, the goal is to compute the distance from a source vertex $s$ to every vertex $v\in V$.
We work under the \emph{comparison-addition} model,
where only comparison and addition on edge weights are allowed --
a natural assumption that align with real-weighted inputs.

The textbook algorithm by Dijkstra \cite{Dij59}, when implemented with a Fibonacci heap \cite{FT87}, solves the problem in $O(m+n\log n)$ time.
In addition to computing the distances, Dijkstra's algorithm also produces the ordering of vertices. Haeupler, Hlad\'{\i}k, Rozho\v{n}, Tarjan and T\v{e}tek \cite{HHRTT24} demonstrated that Dijkstra's algorithm is universally optimal if this order is required. On the other hand, if only the distances are needed, there were no substantial improvements until recently.
Duan, Mao, Shu and Yin \cite{DMSY23} proposed a randomized algorithm with $O(m\sqrt{\log n\log \log n})$ running time for undirected graphs, which was subsequently derandomized by Yan \cite{Yan25}.
Later, a groundbreaking result by Duan, Mao, Mao, Shu and Yin \cite{SSSP25} introduced a deterministic $O(m\log^{2/3} n)$-time algorithm for directed graphs, first to break the $O(n\log n)$ time bound in sparse graphs.

In this work, we further improve the running time of SSSP to $O(m\sqrt{\log n}+\sqrt{mn\log n\log \log n})$ on directed graphs, matching the previous results for undirected graphs.
\begin{theorem}\label{thm:main}
Single-source shortest path on directed graphs with real non-negative edge weights can be solved in deterministic $O(m\sqrt{\log n}+\sqrt{mn\log n\log \log n})$ time.
\end{theorem}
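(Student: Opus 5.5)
We follow the recursive BMSSP framework of \cite{SSSP25} and rebalance it. Recall that there the procedure $\BMSSP(l,B,S)$ is called with a bound $B$ and a frontier $S$ of size at most $2^{lt}$. It first runs $\FindPivots$, which does $k$ rounds of Bellman--Ford-style relaxation; this shrinks the frontier to at most $|\tilde U|/k$ pivots. It then repeatedly calls $\pull$ on a partial-sorting data structure $\ds$ to extract batches of $2^{(l-1)t}$ pivots and recurses on them. The cost per vertex is $O(k)$ for $\FindPivots$ plus $O(t)$ per level for the data-structure operations, over $\log n/t$ levels. This gives $O(m(k+t\cdot \log n/(tk)\cdot\ldots))$ and ultimately $\log^{2/3}n$.

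Our plan is to remove the $\log^{1/3}$ loss from the Bellman--Ford rounds so that the per-level work becomes $O(\sqrt{\log n})$. First, we reduce to constant degree in the usual way, but keep track of $m$ and $n$ separately. This is what produces the two terms: the $m\sqrt{\log n}$ term comes from edge relaxations, and the $\sqrt{mn\log n\log\log n}$ term comes from heap-like operations on vertices, balanced by choosing the parameter $t$ as a function of $m/n$. Second, we replace $\FindPivots$ with a cheaper pivot selection that uses the \emph{bundling} idea of \cite{DMSY23}. Each non-pivot vertex is attached to a nearby representative found by a local Dijkstra of size $k$. The data structure then only handles representatives, and bundled vertices are settled when their representative is pulled. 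This costs $O(\log k)$ per vertex for the local Dijkstra instead of $O(k)$ for relaxation rounds. Third, we choose $k\approx\sqrt{\log n/\log\log n}$ and $t\approx\sqrt{\log n\log\log n}$. Then the recursion depth is $\log n/t$, the per-level data-structure cost is amortized over $k$ vertices per pivot, and the total is $O(\sqrt{\log n\log\log n})$ per vertex and $O(\sqrt{\log n})$ per edge. Fourth, we verify correctness: each $\BMSSP$ call must still return either a complete set $U$ with bound $B$ or a partial execution with $|U|=\Theta(k2^{lt})$. The bundling must not break the invariant that every incomplete vertex below $B$ depends on a complete pivot.

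The main obstacle is the second step in the directed setting. In \cite{DMSY23} bundles rely on undirectedness: a vertex's distance is close to its representative's distance in both directions. In a directed graph, a bundled vertex $v$ may receive an improved tentative distance from outside its bundle after its representative has been extracted. My first attempt would be to allow such a $v$ to be re-inserted into $\ds$ as its own key, charging the operation to the edge that caused the improvement. Since each edge relaxes into a bundle $O(1)$ times per level, this costs $O(t)$ per edge per level, which is too much. So the re-insertion would instead use a $\decrkey$-friendly batch structure where insertions of keys below the current batch threshold cost $O(1)$. That keeps edge cost at $O(\log n/t\cdot 1+\log k)=O(\sqrt{\log n})$. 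Checking that this amortization survives the partial-execution case, where a call aborts early, is where I expect most of the difficulty.
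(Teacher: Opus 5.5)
Your skeleton matches the paper: keep the recursion of \cite{SSSP25}, replace Bellman--Ford by size-$k$ local Dijkstra searches, and let the data structure hold one handle per group. But the step that actually has to be proved is missing, and the repair you propose rests on a miscount. When some $u\in U_i$ validly relaxes $(u,v)$ with $d[u]+w_{uv}\in[B_i,B)$, the paper simply inserts $v$ into $\mathcal D$ as its own key at cost $O(t)$. Along the chain of calls whose output contains $u$, the relaxation windows are disjoint, so each edge triggers this insertion at most once in the whole recursion (\cref{ob:relaxation}). Your ``first attempt'' therefore costs $O(t)$ per edge in total, not per level. The \textsc{Decrease-Key}-friendly structure with $O(1)$ insertions below a threshold, for which you give no construction in the comparison-addition model, is not needed.

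What you do not supply are the four ingredients that make handles work on directed graphs:
\begin{enumerate}
\item[(a)] \emph{A deterministic grouping} with at most $|\tilde U|/k$ handles and $O(\log k)$ amortized search cost per vertex. The paper stops each search after $k$ vertices or as soon as it reaches an existing tree, so each vertex is heap-inserted once as a tree vertex. It then cuts the trees into edge-disjoint subtrees $F_j$ of size $\Theta(k)$ (\cref{lemma:partition}) and puts only $p_j=\arg\min_{x\in P_j}d[x]$, with $P_j\subseteq S\cap F_j$, into $\mathcal D$. No vertex is attached to a ``nearest'' representative.
\item[(b)] \emph{An invariant that survives labels in $P_j$ dropping invisibly inside sub-calls.} The paper maintains only $d[p_j]\le d_B[\hat P_j]$, where $\hat P_j$ excludes members whose shortest path visits a complete vertex of $\mathcal D$, and derives the frontier property from this (\cref{lemma:bmssp-correctness-frontier}).
\item[(c)] \emph{A way to pay $O(t)$ each time the current handle is completed} and a new one must be re-selected and inserted. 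The paper charges this to edges of $F_j$ joining two different sub-call outputs. Because the $F_j$ are edge-disjoint and the call is a full execution, each such edge is charged by exactly one call. Partial executions are paid for by $|U|>t|S|$.
\item[(d)] \emph{A bound on searches that fail to reach $k$ vertices.} Their roots can re-enter frontiers and cost $O(k\log k)$ each time. In full executions each recurrence is charged to a distinct in-edge relaxation of the root, hence at most $\delta$ times; in partial executions it is charged to $|U|>t|S|$.
\end{enumerate}

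Your parameter accounting also cannot yield the theorem. After reducing to constant degree the graph has $\Theta(m)$ vertices. Your $O(\sqrt{\log n\log\log n})$ per vertex then already gives $\Theta(m\sqrt{\log n\log\log n})$, which misses both terms whenever $m\gg n$. Without a degree bound, a local search that scans a high-degree vertex does not cost $O(\log k)$ per vertex. The paper reduces to degree $\delta=\Theta(\min\{m/n,\log\log n\})$, giving $O(m/\delta)$ vertices. It needs $\delta\le\log k$ so that the edge scans of local searches are absorbed into the $O(\log k)$ per-vertex heap cost. It then proves a total of $O(m(t+\frac{\log n\log t}{\delta t}))$: $O(t)$ per edge for insertions into $\mathcal D$, plus $O(\log t)$ per vertex per level. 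With $t=\Theta(\sqrt{\log n\log\log n/\delta})$ this is $m\sqrt{\log n}$ or $\sqrt{mn\log n\log\log n}$, according to which term of the minimum defines $\delta$. The two terms of the theorem are these two regimes, not edges versus vertices, and no $O(1)$-insertion structure is involved.

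Finally, the size parameters must be rescaled. Expanding each pulled handle by up to $O(k)$ group members breaks the $2^{lt}$ frontier bound you inherit. This is why the paper uses $|S|\le t^2 2^{lt}$ and lets partial executions reach $|U|=\Theta(t^3 2^{lt})>t|S|$. That slack pays for re-inserting $S$, picking group members, and re-selecting handles in partial calls, which is exactly the case you flagged as the expected difficulty.
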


An equivalent statement is, $O(m\sqrt{\log n})$ time for $m\geq n\log\log n$, $O(\sqrt{mn\log n\log \log n})$ for $m<n\log\log n$, and $O(n\sqrt{\log n\log \log n})$ for $m=O(n)$.

\subsection{Technical Overview}
\label{sec:overview}
Let us begin with Dijkstra's algorithm,
which maintains a ``frontier'' $S$ of vertices, such that if a vertex is ``incomplete'' --- meaning its current distance label is greater than its true distance --- then its shortest path visits some complete vertex in $S$.
In other words, the frontier encloses a set of vertices sufficient for finding the shortest paths of all remaining incomplete vertices.

Dijkstra's algorithm each time extracts from the frontier the vertex with smallest distance label, which is guaranteed to be complete. It then relaxes all out-going edges and adds these neighbors to frontier.
This produces an ordering of all vertices by distances, creating an inherent $\Omega(n\log n)$ sorting barrier.

%Our algorithm builds upon the general framework in \cite{SSSP25}, with significant improvements inspired by \cite{SSBP18}.
The breakthrough in \cite{SSSP25} overcomes this barrier by introducing a divide-and-conquer procedure with $(\log n)/t$ levels, namely Bounded Multi-Source Shortest Path (BMSSP), which operates on a ``frontier'' $S$ with respect to a bound $B$. BMSSP aims to settle all vertices in $\target = \target{B, S}= \{v: \distV(v) < B \text{ and its shortest path visits some vertex in }S\}$.
It recursively invokes subcalls of BMSSP on sets consisting of the smallest $1/2^t$ fraction of vertices extracted from the frontier.
A na\"ive implementation would still spend $\Theta(t)$ time on each frontier vertex, yielding a total running time of $\Theta(\log n)$ per vertex.

To solve this problem, \cite{SSSP25} reduces the size of frontier $S$ by a factor $1/k$ regarding to the target set $\target$,
by running Bellman-Ford algorithm for $k$ iterations. Any remaining incomplete vertex must lie in the shortest path tree of some $x\in S$ with size at least $k$. Such vertices $x$ are called ``pivots'', and their number is bounded by $\abs*{\target} / k$.
However, this approach still poses a bottleneck since Bellman-Ford requires $\Omega(k)$ time per vertex on average.
This cost seems unavoidable, because the algorithm needs to determine, for each vertex $v\in\target$ which pivot's shortest path tree it belongs to.
% A naive alternative proposal could be to start a Dijkstra search from each vertex $x\in S$, but this does not work because if the search from $x$ and $y$ meed at some vertex $z$, $z$ and vertices under $z$ may need to be scanned twice or even more.
% Few algorithms does this better than Bellman-Ford algorithm.

A key observation is that, such explicit mapping is unnecessary.
Our new insight is that we can construct a spanning forest that connects the frontier $S$ via local Dijkstra searches from $S$, and partition it into edge-disjoint subtrees of sizes $\Theta(k)$. This idea is inspired by \cite{SSBP18} that tackles the single-source bottleneck path (SSBP) problem, and reduces the cost of finding pivots from $O(k)$ to $O(\log k)$ per vertex. Then, for each subtree, the algorithm only tracks the vertex with the smallest distance label, referred to as a ``pivot'', which serves as a handle for the vertices in its corresponding subtree.
Then we are able to maintain $1/k$ of the vertices in $S$ in the partial-sorting data structure, and meanwhile efficiently select a minimal subset of the whole frontier. For other vertices, we only need to check whether they fit the range of the subset without sorting them. Then the time for inserting a vertex in the data structure can be amortized to an edge on the subtree containing it, but that edge is not supposed to be in recursive subcalls. 
So the overall running time is roughly $O(m\sqrt{\log n})$. However, in this recursive procedure for shortest path, the upper bound $B$ used for each invocation is dynamically generated. If the range by bound $B$ contains too many vertices with respect to $S$, the algorithm can only perform partial execution, and some vertices may be reinserted into $S$ after extraction. This behavior must be handled carefully to achieve the running time mentioned above.
% This behavior must be handled carefully to achieve the running time mentioned above.

%It is also the core insight that allows us to improve the algorithm in \cite{SSSP25}.

% Compared to \cite{SSBP18}, our case involves significantly more subtlety.
% The upper bound $B$ used for each invocation of BMSSP is dynamically generated. Therefore, we cannot directly control the size of the corresponding $\target$. If $\target$ contains too many vertices with respect to $S$, the algorithm can only perform partial execution, and some vertices may be reinserted into $S$ after extraction.
% This behavior must be handled carefully to achieve the running time mentioned above.

% Compared to \cite{SSSP25}, o
% Our another important observation is that under a full execution, a non-pivot vertex\footnote{in \cite{SSSP25}, a vertex from which Bellman-Ford steps does not reach to $k$ vertices; in our paper, a vertex from which local Dijkstra does not reach to $k$ vertices.}, once complete in the current round, does not need to be processed again even in subsequent rounds until a new valid relaxation --- which, on the contrast, \cite{SSSP25} always had them reinserted back.
% Consequently, we are able to associate such vertices' processing time globally with its degree, such that our algorithm is able to improve time complexity.
% Besides, we also simplified the data structure by leveraging information in sub-calls, supplemented and extracted more properties on relaxations, tie-breaking and frontier to make the algorithm easier to be understood.

\subsection{Related Works}
Single-source shortest path has been extensively studied in various special cases. Pettie and Ramachandran \cite{PR05} proposed an algorithm running in $O(m\alpha(m,n)+\min\{n\log n,n\log\log r\})$ time for undirected graphs, where $\alpha$ is the inverse-Ackermann function and $r$ is the max-min ratio of edge weights.
In word-RAM model with bounded integer edge weights, a sequence of works \cite{FW93,FW94, Thorup96, Raman96, Raman97, TM00, HT20} culminated in Thorup's linear-time algorithm for undirected graphs \cite{Thorup00} and $O(m+n\log\log\min\{n,C\})$-time for directed graphs \cite{Thorup04}, where $C$ is the maximum edge weight.

If negative edge weights are allowed, the Bellman-Ford algorithm \cite{Bellman1958} needs $O(mn)$ time. Beyond this, there are nearly linear time algorithms for integer weights (with respect to bit complexity) \cite{flow,BNW22, BCF23}, and strongly subcubic time algorithms for real weights \cite{Fineman24,HJQ25b,HJQ25}.
\section{Preliminaries}

The single-source shortest path problem is defined on a directed graph $G = (V, E)$ with a non-negative weight function $w: E \to \mathbb{R}_{\geq 0}$ (typically denoted $w_{uv}$) assigned to each edge. Let $n = \abs{V}$ and $m = \abs{E}$ denote the number of vertices and edges respectively. Given a source $s\in V$, the goal is to compute the length of the shortest path from $s$ to every vertex $v\in V$.
Without loss of generality we assume that
every vertex is reachable from $s$,
so $m \geq n - 1$.

For clarity, we restate two relevant definitions.
A \emph{directed tree} is a directed acyclic graph whose underlying undirected graph is a tree.
An \emph{arborescence} rooted at some vertex $v$ is a directed tree such that every other vertex can be reached from $v$ via exactly one directed path.

\subsection{Degree reduction}

% It is beneficial for our algorithm to constrain the degree of the graph (as also done in \cite{DMSY23,SSSP25}).
For any degree bound $3 \leq \delta \leq \frac{m}{n}$, we preprocess $G$ in $O(m)$ time to a graph with $O(m)$ edges, $O(m/\delta)$ vertices, and maximum (in- and out-) degree bounded by $\delta$, while preserving all shortest path lengths.
A classical way to achieve this, similar to that in \cite{Frederickson83}, proceeds as follows:
\begin{itemize}
    \item Substitute each $v$ with a zero-weighted cycle $C_v$ of $\greaterInt{\frac{\Delta_v}{\delta-2}}$ vertices, where $\Delta_v$ denotes the degree of $v$. For each $v$'s neighbor $u$ (either incoming or outgoing), assign one vertex in $C_v$ to represent the edge between them, denoted as $x_{vu}$; make sure each vertex in $C_v$ represents at most $\delta-2$ different $x_{vu}$.
    \item For every edge $(u, v)\in G$, add a directed edge from vertex $x_{uv}$ to $x_{vu}$ with weight $w_{uv}$.
\end{itemize}

Note that when $m\geq n\log n$, Dijkstra's algorithm implemented with Fibonacci heap already reaches the optimal time $O(m+n\log n) = O(m)$, thus it suffices to consider the regime $m\leq n\log n$ in our algorithm.

\subsection{Comparison-Addition Model}

Our algorithm works under the \emph{comparison-addition} model, in which edge weights can only be compared or added, each taking $O(1)$ time. No other arithmetic or algebraic operations on edge weights are allowed.

\subsection{Edge Relaxation and Tie-Breaking}

We maintain global distance labels $d[v]$ for all vertices $v\in V$.
Initially $d[s]=0$ and $d[v]=\infty$ for all $v\neq s$.
All updates to $d[v]$ are through relaxation: $d[v] \gets \min\{d[v], d[u] + w_{uv}\}$ of some edge $(u, v)\in E$.
A relaxation is \emph{valid}, if $d[v]$ was updated, namely $d[u] + w_{uv} \leq d[v]$.
In our algorithm, we also introduce an upper bound $B$ and require $d[u] + w_{uv} < B$ to prevent immature modification.
Hence, $d[v]$ is non-increasing and always corresponds to the length of some path from $s$ to $v$.
Let $\distV(v)$ denote the length of the true shortest path from source $s$ to $v$, then $d[v] \geq \distV(v)$ always holds.

Similar to previous works \cite{DI04APSP,SSSP25}, for clarity in presenting the algorithm, we define a tie-breaking rule for comparing distance labels in order to:
\begin{enumerate}
    \item Keep the structure of the shortest path tree throughout the algorithm;
    \item Establish a relative ordering among the vertices with identical $d[]$.
\end{enumerate}

In principle, we treat a path of length $l$ that traverses $q$ edges, represented by the vertices $v_0=s, v_1, ..., v_q$, as a tuple $\langle l, q, v_q, v_{q-1},...,v_0\rangle$, where vertices are listed in reverse order, and sort them based on the lexicographical order.
However, it turned out that the first 4 items of the above tuple are sufficient for tie-breaking: path length ($\mathtt{length}$), number of edges ($\mathtt{nEdges}$), $v$ itself ($\mathtt{curr}$), and $v$'s predecessor ($\mathtt{pred}$).
% When comparing two distance labels, we perform lexicographical comparison on the two tuples in $O(1)$ time.
% Minimizing $\mathtt{nEdges}$ ensures that zero-weighted loops are avoided, while $\mathtt{prev}$, $\mathtt{curr}$ serves as additional tie-breakers when necessary.
This is because our algorithm performs such comparisons in the following scenarios:
\begin{description}
    \item[Relaxing an edge $(u,v)$:] If $u\neq d[v].\mathtt{pred}$, even if $\mathtt{length}$ and $\mathtt{nEdges}$ are tied, it suffices to compare $u$ and $d[v].\mathtt{pred}$; if $u=d[v].\mathtt{pred}$, then $d[u]$ has been updated to currently ``shortest'' and $d[v]$ must get updated accordingly;
    \item [Comparing two different {$d[u]$ and $d[v]$} for $u \neq v$:] In this case $\mathtt{curr}$ acts as a tie-breaker.
    \item [Upper bound $\expectB$:] In the algorithm the upper bound $\expectB$ is also represented by such a 4-tuple. %Such bounds and relaxations besides \FindPivots are always formed by a complete shortest path plus an additional edge, so it is easy to compare them using 4-tuples.
\end{description}

Algorithm \ref{alg:relaxation} illustrates how an edge $(u, v)$ is relaxed.
Note that an upper bound $\expectB$ is introduced to prevent premature update of vertices whose tentative distances exceed the current bound in the main algorithm.

\begin{algorithm}[ht]
\label{alg:relaxation}
\caption{Relaxation of an edge $(u,v)\in E$ with an upper bound $\expectB$}
% \Fn(\tcp*[f]{$(u, v)$ must be an edge in $E$})
\Fn
{\Relax{vertex $u$, vertex $v$, upper bound $\expectB$}}{
    \tcp{$\expectB$ is introduced to prevent early update of vertices out of bound}
    \If { $d[u] + w_{uv} \leq d[v]$ \LogicAnd $d[u] + w_{uv} < \expectB$ }{
        $d[v] \gets d[u] + w_{uv}$\;
        \Return $\True$\;
    }
    \Return $\False$\;
    % $\mathtt{IsValid} \gets \mathbb{I}[d[u] + w_{uv} \leq d[v]]$\;
    % $d[v] \gets \min\{d[v], d[u] + w_{uv}\}$\;
    % \Return $\mathtt{IsValid}$\;
}
\Tuple $d$ \textbf{:} $\pair{\mathtt{length}, \mathtt{nEdges}, \mathtt{curr}, \mathtt{pred}}$\;
\Fn{\Addition{distance label $d[u]$, edge weight $w_{uv}$}}{
    \Return $\pair{d[u].\mathtt{length} + w_{uv}, d[u].\mathtt{nEdges} + 1, v, u}$
}
\Fn{\Comparison{distance label $d[u]$, distance label $d[v]$}}{
    \Return Lexicographically compare $\pair{\mathtt{length}, \mathtt{nEdges}, \mathtt{curr}, \mathtt{pred}}$ of $d[u]$ and $d[v]$\;
}
\end{algorithm}

% We show that this scheme is sufficient for path length tie-breaking in SSSP by an inductive argument:
% assume that for any predecessors $u_1, u_2, \cdots, u_q$ of $v$, the shortest paths ending at each $u_i$ are already uniquely determined. For any paths of equal $\mathtt{length}$ terminating at $v$, each such path must pass through some predecessor $u_i$. Since the shortest path to each $u_i$ is unique by the induction hypothesis, there can be at most one shortest path to $v$ passing through each $u_i$.
% These candidate paths can then be uniquely ordered by $\mathtt{nEdges}$ and $\mathtt{prev}$. Finally, $\mathtt{curr}$ is used only to break ties between two vertices that share the same predecessor and have equal-length paths. {\color{red} How to compare without RECURSION?}

%Finally, $\mathtt{curr}$ is used only when comparing two vertices with the same predecessor and equal-length paths.
%Assuming no ambiguity over the paths which ends at predecessors $u_1, u_2, \cdots, u_q$ of $v$, for the paths of equal $\mathtt{length}$ which ends at $v$, they pass through some $u_i$, so there is only one shortest path passing through each $u_i$, and they can be uniquely ordered by $\mathtt{nEdges}$ and $\mathtt{prev}$. Finally, $\mathtt{curr}$ is used only when comparing two vertices with the same predecessor and equal-length paths.

\subsection{Completeness and Frontier}
For a vertex $v$, if $d[v] = \distV(v)$, we say $v$ is \emph{complete}.
If all vertices in a set are complete, we say that set is complete.
% Throughout the algorithm, $d[v]$ is updated iteratively while $\distV(v)$ remains unknown but fixed, determined by input.
% In the algorithm's operations, we only refer to $d[v]$;
% however, in the analysis we may refer to the true distance $\distV(v)$ and $v$'s completeness.
We need to specify the time step when referring to $d[v]$ and completeness because they are sensitive to the algorithm progress.
% As $d[v]$ upper bounds $\distV(v)$, if at some moment $v$ is complete, it continues to be complete ever since.
% and all vertices along the shortest path from $s$ to $v$ must also be complete.

%\subsection{Frontier and Dijkstra-like Algorithm Framework}
%\label{para:frontier-and-dijkstra-like-alg-framework}

Let $S\subseteq V$ be a set of vertices and $\expectB$ be a bound. We define $\target = \target{\expectB, S}$ the set of every vertex $v$ such that $\distV(v) < \expectB$ and the shortest path of $v$ visits some vertex in $S$. The goal is to make $\target$ complete.

Additionally, for a vertex set $U \subseteq V$ and any bound $\anyB$, define $d_{\anyB}[U] = \min(\{\anyB\}\cup \{d[x]: x\in U\})$ and $\distV_{\anyB}(U) = \min(\{\anyB\}\cup \{\distV(x): x\in U\})$. ($\anyB$ is introduced here in case $U$ is an empty set.)

We introduce \emph{frontier}, which is crucial for understanding our algorithm.
This concept also appeared in \cite{SSSP25}: % and most Dijkstra-like algorithms in equivalent forms:
at any moment and, for two sets $X, Y \subseteq \target$, we say $\frontierPair{X, Y}$ is a frontier for $\target$, if any $v\in \target$ satisfies at least one of the following:
\begin{enumerate}
    \item $v\in X$ and $v$ is complete;
    \item the shortest path of $v$ visits some \emph{complete} vertex (possibly itself) in $Y$.
\end{enumerate}

% \paragraph{Remark.}
% In this definition
Note $X$ is not necessarily complete; but any incomplete vertex of $X$ must satisfy the second condition.
% $\frontierPair{\emptyset, S}$ is not trivially a frontier for $\target{\expectB, S}$.
% Specifically, it requires that for any vertex $v\in \target$ ($\distV(v) < \expectB$) if the shortest path of $v$ visits some vertex in $S$, such a path must also visit a \emph{complete} vertex in $S$.
Based on the definition, we can make the following observations:
\begin{observation}
\label{lemma:frontier-observation}
At any given moment, if $\frontierPair{X, Y}$ is a frontier for $\target=\target{\expectB, S}$:
\begin{enumerate}
    \item $\frontierPair{X, Y}$ continues to be a frontier for $\target$ ever since;
    \item $d_{\expectB}[Y] = \distV_{\expectB}(Y)$;
    \item $\target{\expectB, Y} \subseteq \target{\expectB, S}$;
    \item $\frontierPair{\emptyset, Y}$ is a frontier for $\target{\expectB, Y}$;
    % noticing that $\target{\expectB, Y} \subseteq \target{\expectB, S}$;
    \item $\target{d_{B}[Y], S} \subseteq X$ and $\target{d_{B}[Y], S}$ is complete;
    %\item Denote $Y^*$ the set of complete vertices in $Y$. Then $\frontierPair{X, Y^*}$ is also a frontier of $\target$.
\end{enumerate}
\end{observation}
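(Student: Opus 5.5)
We prove the five items in order, each directly from the frontier definition. Two standing facts are used throughout. First, $d[\cdot]$ is non-increasing and always at least $\distV$, so once a vertex is complete it stays complete. Second, under the tie-breaking rule shortest paths are unique, and every prefix of a shortest path is itself a shortest path. Consequently, if the shortest path of $v$ passes through $u$, then $\distV(u)\le \distV(v)$, and $u$'s shortest path is the prefix.

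\textbf{Items 1 and 2.} Item 1 holds because both conditions only assert completeness of certain vertices, and completeness is preserved over time. The sets $X$, $Y$ and $\target$ are fixed, and $\distV$ does not change.

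For item 2, we need $d[y]=\distV(y)$ for some minimizer, or the bound $B$ is attained. Since $d\ge\distV$ pointwise, $d_B[Y]\ge \distV_B(Y)$. For the reverse, suppose $\distV_B(Y)<B$, and let $y\in Y$ attain $\distV(y)=\distV_B(Y)$. Then $y\in\target$, because $Y\subseteq\target$. So either $y\in X$ and $y$ is complete, or $y$'s shortest path visits a complete $y'\in Y$. In the second case $\distV(y')\le\distV(y)$, and $d[y']=\distV(y')$. Either way some element of $Y$ has $d$ value at most $\distV_B(Y)$, which gives equality. If $\distV_B(Y)=B$, then both sides equal $B$, since $d\ge\distV$.

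\textbf{Items 3 and 4.} For item 3, take $v$ with $\distV(v)<B$ whose shortest path visits some $y\in Y$. Since $y\in\target{B,S}$, the shortest path of $y$ visits some $x\in S$. The shortest path of $y$ is a prefix of $v$'s path, so $v$'s path visits $x$, and hence $v\in\target{B,S}$.

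For item 4, take $v\in\target{B,Y}$. By item 3, $v\in\target$. If condition 2 holds for $v$, we are done. If only condition 1 holds, then $v$ is complete. Its shortest path visits some $y\in Y$, and $y$ is complete as a vertex on the shortest path of a complete vertex. We should double-check this step: if $d[v]=\distV(v)$, the stored path via predecessors need not pass through the current labels of intermediate vertices. So we argue it directly instead. Since $y\in\target$, either $y\in X$ and $y$ is complete, or $y$'s path visits a complete $y'\in Y$, which then lies on $v$'s path. In both cases $v$'s path visits a complete vertex of $Y$, as required.

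\textbf{Item 5.} This is the main point. Let $b=d_B[Y]$, which equals $\distV_B(Y)$ by item 2, and take $v\in\target{b,S}\subseteq\target$. If $v$'s shortest path visits a complete $y\in Y$, then $\distV(y)\le\distV(v)<b=\distV_B(Y)\le\distV(y)$, a contradiction. So condition 1 must hold, meaning $v\in X$ and $v$ is complete.

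The main subtlety throughout is the prefix property and uniqueness under the 4-tuple tie-breaking. We take this as established by the paper's tie-breaking discussion, interpreting $\distV$ as the tuple-valued distance. With that interpretation, the strict inequalities in item 5 are comparisons of tuples, and the prefix comparison $\distV(y)\le\distV(v)$ remains valid.
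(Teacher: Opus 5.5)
Your proof is correct and follows the paper's sketch essentially item by item. There are two small deviations. In item 2 you start from a $\distV$-minimizer of $Y$, whereas the paper shows that the $d$-minimizer of $Y$ is complete. In item 4 you apply the frontier condition to the intermediate vertex $y$ instead of invoking ``every vertex on a complete vertex's shortest path is complete''; this avoids that fact, which the paper uses without proof, and also shows that the case split on whether $v$ is complete is unnecessary.
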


Proof sketch:
the first observation lies from the fact that when a vertex becomes complete, it continues to be complete ever since.
The second observation is because the vertex with minimum distance label $d[]$ in $Y$ must be complete.
The third observation is because: for any vertex in $\target{\expectB, Y}$, its shortest path visits some vertex $y$ in $Y$; and the shortest path of $y$, as a prefix of the shortest path of $v$, owing to $Y\subseteq \target = \target{\expectB, S}$, visits some vertex in $S$, so the shortest path of $v$ visits $S$;
The fourth observation is because: for any incomplete vertex in $\target{\expectB, Y}$, it is also in $\target{\expectB, S}$, then its shortest path visits some complete vertex in $Y$; for any complete vertex $v$ in $\target{\expectB, Y}$, by definition its shortest path visits some vertex $y\in Y$, and $y$ must be complete as $v$ is complete.
The fifth is because if the distance of a vertex is less than $d_{\expectB}[Y]$, then it cannot visit any complete vertex in $Y$.

\subsection{A Framework of Dijkstra-like Algorithms}\label{para:frontier-and-dijkstra-like-alg-framework}
%In this section 
We found that Dijkstra's algorithm, \cite{SSSP25} and our algorithm can be restated under a unified framework: %, and almost all Dijkstra-like algorithms:
\begin{enumerate}[(1)]
    \item Given a frontier $\frontierPair{X, Y}$ for $\target=\target{\expectB, S}$. Our goal is to make $\target$ complete;
    \item Identify a complete set $Z\subseteq \target$; or make $Z$ complete, possibly after performing some relaxations;
    \item Relax the outgoing edges from $Z$ and denote $R$ as the set of out-going neighbors of $Z$ but not in $Z$ with distance less than $\expectB$, and relaxations from $Z$ to them are valid.
    Clearly $R\subseteq \target$;
    \item Update to a new frontier for $\target$ as $\frontierPair{X', Y'} \bydef \frontierPair{X \cup Z, (Y\setminus Z)\cup R}$; 
    \item Return to the first step and repeat the process until $Y = \emptyset$, then $\target = X$ is complete.
\end{enumerate}

% We show that for any $Z\subseteq\target$, once $Z$ becomes complete and its outgoing edges are relaxed, $\frontierPair{X', Y'}$ is indeed a frontier for $\target$.
We show that for $Z$ given by (2), after operations in (3), $\frontierPair{X', Y'}$ is indeed a frontier for $\target$.
Since adding vertices never violates frontier conditions,
% but removing vertices could,
it suffices to verify for $v\in \target$ whose shortest path visits some complete $u\in Z$ such that $u$ is in $Y$ but removed from $Y'$, $v$ still satisfy one of the two frontier conditions.
\begin{itemize}
    \item If $v\in Z$, we are already done;
    \item If $v\notin Z$, since $u\in Z$, along the shortest path from $u$ to $v$, there must exist two adjacent vertices $x$ and $y$ such that $x\in Z$ and $y\notin Z$; then $y$ is in $R$ and added to $Y'$, ensuring that the shortest path of $v$ still visits a complete vertex $y\in Y'$.
\end{itemize}

The efficiency of the algorithm depends critically on how we can efficiently identify $Z$ or make $Z$ complete.
In Dijkstra's algorithm, the smallest vertex $\arg\min_{x\in Y}d[x]$ is guaranteed to be complete. Therefore, the algorithm organizes $Y$ using a priority queue and simply selects such $x$ to form $Z$.
% Many previous works focused on constructing such a set $Z$ more efficiently.
In \cite{SSSP25} and our approach, the algorithm repeatedly selects a subset of smallest vertices $S_i$ from the frontier, and recursively invoke a lower-level algorithm on $S_i$ to obtain a complete set $\actualU_{i}$, which is then used to update the frontier.

\section{The Main Result}

Recall that we work on a preprocessed graph of $O(m)$ edges, $O(m/\delta)$ vertices with max degree $\delta$.
Following the approach in \cite{SSSP25}, our algorithm is based on a divide-and-conquer framework in which, at each step, we attempt to extract the smallest $1/2^t$ fraction of vertices from the frontier ($t \leq \log n$).
% We invoke recursive sub-calls on it to make some set complete and update the frontier.

We fix parameter $t$ and define $k = \greaterInt{t / \log t}$, also ensuring $\delta \leq \log k = O(\log \log n)$ holds.\footnote{The algorithm also works when $\delta > \log k$. Setting $\delta = \Theta(\min\{\frac{m}{n}, \log\log n\})$ optimizes the time bound, so we require $\delta \leq \log k$ to simplify the analysis of time complexity.}
A more precise choice of $t$ will be given later; roughly $t \approx \sqrt{\log n}$.
We are now ready to present our main lemma.

\begin{lemma}[Bounded Multi-Source Shortest Path]
\label{lemma:BMSSP}
Given an upper bound $\expectB$, a vertex set $S$ of size at most $t^22^{lt}$, and an integer $l\in [0, \greaterInt{(\log n)/ t}]$.
Suppose $\frontierPair{\emptyset, S}$ is a frontier for $\target = \target{\expectB, S}$.

\BMSSP{$\expectB, S, l$} (\cref{alg:BMSSP}) outputs a new bound $\actualB(\leq \expectB)$,
a vertex set $\actualU\subseteq \target$ of size $O(t^32^{lt})$,
a set of vertices $\ds$ arranged by data structure as in \cref{lemma:data-structure} parameterized by $M := t2^{(l-1)t}$,
% a vertex set $R$ with multiplicities,
in $O(\abs{\actualU}(l\log t + \delta t))$ time.
After running \cref{alg:BMSSP}, $\actualU = \target{\actualB, S}$ is complete and $\frontierPair{\actualU, \ds}$ is a frontier for $\target$. Moreover, one of the following is true:
\begin{itemize}
    \item Full execution ($\actualB = \expectB$): $\ds = \emptyset$ and $\actualU = \target$;
    \item Partial execution ($\actualB < \expectB$):  $\abs{\actualU} = \Theta(t^32^{lt})$.
\end{itemize}
\end{lemma}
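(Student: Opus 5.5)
We argue by induction on the level $l$, following the structure of \cref{alg:BMSSP} and the unified framework of Section~\ref{para:frontier-and-dijkstra-like-alg-framework}.

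\textbf{Base case $l=0$.} Here $|S|\le t^2$, and we run a bounded Dijkstra from $S$ with a binary heap. We stop either when the heap empties, which gives full execution with $\actualB=\expectB$, or once $\Theta(t^3)$ vertices have been extracted, which gives partial execution. In the partial case, $\actualB$ is set to the label of the next heap minimum. Correctness follows because each extracted minimum is complete, by item 2 of \cref{lemma:frontier-observation}. Then item 5 gives $\actualU=\target{\actualB,S}$, and the remaining heap (restricted to labels below $\expectB$) is loaded into $\ds$. The time is $O(|\actualU|\delta\log t)$. This is within $O(|\actualU|\delta t)$ since $\log t\le t$.

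\textbf{Inductive step, invariant.} We maintain a frontier $\frontierPair{X,Y}$ for $\target$ with $X=\bigcup_{j<i}\actualU_j$ complete. The frontier $Y$ is stored partly in the data structure $\ds$ of \cref{lemma:data-structure} (parameter $M$) and partly as forest subtrees handled through their pivots. The first step is a preprocessing phase. We run local Dijkstra searches from $S$, each truncated at about $k$ vertices. Their union gives a spanning forest, which we cut into edge-disjoint subtrees of size $\Theta(k)$. Only one pivot per subtree is inserted into $\ds$, so $\ds$ receives $O(|S|/k+\text{(vertices reached)})$ items. Vertices completed by these local searches go directly into $X$.

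\textbf{Inductive step, main loop.} Each iteration pulls the smallest batch $S_i$ from $\ds$, with $|S_i|\le M$. When a pivot is pulled, its subtree vertices are expanded into $S_i$ after checking whether they lie below the separating bound $B_i$. We then recurse with \BMSSP{$B_i,S_i,l-1$}. The hypothesis that $\frontierPair{\emptyset,S_i}$ is a frontier for $\target{B_i,S_i}$ comes from items 3 and 4 of \cref{lemma:frontier-observation} applied to the current frontier. By induction, $\actualU_i=\target{\actualB_i,S_i}$ is complete. We then apply framework steps (3)--(4). We relax the out-edges of $\actualU_i$. New neighbours with label in $[B_i,\expectB)$ are inserted into $\ds$, while those with label in $[\actualB_i,B_i)$, together with the returned set $\ds_i$ and any unfinished part of $S_i$, are batch-prepended. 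Framework step (4) shows this preserves the frontier. The loop stops when $\ds$ is empty, which is full execution, or when $|X|$ exceeds $t^3 2^{lt}$. In the latter case we set $\actualB$ to the minimum remaining bound. Item 5 of \cref{lemma:frontier-observation} then yields $\actualU=\target{\actualB,S}$ complete, together with the final vertices $x$ with $d[x]<\actualB$ relaxed from $X$. The size bound $O(t^32^{lt})$ holds because the last subcall adds only $O(t^32^{(l-1)t})$ vertices, plus $|S|\le t^22^{lt}$.

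\textbf{Running time.} Each vertex of $\actualU$ pays $O(\delta)$ relaxations per level, $O(\log t)$ for the data-structure operations on the $\le 2^t$ scale (using that $\log(M)$-type costs amortize to $t$ per $k$ vertices, i.e.\ $O(\log t)$ per vertex since $k=t/\log t$), and its subcall cost $O((l-1)\log t+\delta t)$ by induction. Summing over the disjoint $\actualU_i$ gives $O(|\actualU|(l\log t+\delta t))$.

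\textbf{Main obstacle.} The hard part is the partial-execution case. There, vertices of $S_i$ and whole pivot subtrees are re-inserted, so the same vertex could be charged many times. The plan is to charge each re-insertion of a pivot to the $\Theta(k)$ edges of its subtree, which are never reused in later subcalls, together with the $\Theta(t^32^{(l-1)t})$ vertices that the partial subcall completed. Since $|S_i|\le M=t2^{(l-1)t}$, this is a factor-$t^2$ surplus, so reinsertions cost $O(1/t)$ per completed vertex.
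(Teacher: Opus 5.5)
Your outline of the algorithm and of the base case matches the paper. The running-time argument, however, has a genuine gap, and it is where the paper's new ideas are.

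\textbf{Running time.} You charge $O(\delta+\log t)$ per vertex per level, but the ``one pivot per $k$ vertices'' saving covers only the initial pivot insertions and the \textsc{FindPivots} work on the $P_j$'s. Three other operations cost $\Theta(t)$ each and can recur at every level, even in full executions. They are the insertions into $\mathcal D$ of validly relaxed out-neighbours with labels in $[B_i,B)$ (each $\log(N/M)\approx t$), the insertions of re-selected pivots, and the local search from every root in $Q$ ($\Theta(k\log k)=\Theta(t)$). Paid level by level this is up to $\Theta(\delta t)$ per vertex per level, i.e.\ $\Theta(\delta\log n)$ over $(\log n)/t$ levels. The subcall hypothesis $O(|U_i|((l-1)\log t+\delta t))$ cannot absorb this, so your induction does not close. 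The paper proves the stronger bound $C|U|(l+\log t)\log t+Ct|N^+_{[\mathtt{dis}_B(S),B)}(U)|+Ct|E(U)|+Ct\sum_{Y\in\mathcal T(X)}|Q_Y|$, under which each $\Theta(t)$ cost is paid once globally:
\begin{itemize}
\item A direct insertion is charged to its edge $(u,v)$. This happens once, because the windows $[B_{X_{i+1}},B_{X_i})$ along the chain of calls containing $u$ are disjoint.
\item A pivot re-selection in a full call is charged to an edge of $F_j$ joining two different $U_i$'s. If the removed pivots lie in $g_j$ distinct $U_i$, there are at least $g_j-1$ such edges, and no other call charges them. Your claim that the $\Theta(k)$ subtree edges are ``never reused'' is false for edges inside one $U_i$, which that subcall's own forests may reuse.
\item The $Q$-searches, which you never bound, are the paper's key new ingredient: $\sum_Y|Q_Y|\le|U|(\delta+l/t)$. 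For partial $Y$, $|Q_Y|\le|S_Y|<|U_Y|/t$. For full $Y$, each appearance of $v$ in $Q_Y$ is charged to the last valid relaxation edge $(u,v)$ that inserted $v$. Since $Q_Y$ is not passed down and $u\notin U_Y$, a later appearance in a descendant needs a different incoming edge, so there are at most $\delta$ appearances.
\end{itemize}
Partial executions, which you single out as the obstacle, are actually the easy case: $t|S|<|U|$ pays for all $O(t)$ work on $S$. Your factor-$t^2$ surplus also ignores that pivot expansion enlarges $S_i$ by up to a factor $k$.

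\textbf{Correctness.} Two ingredients are missing.
\begin{itemize}
\item Items 3--4 of \cref{lemma:frontier-observation} do not give the subcall's precondition. You need \cref{lemma:frontier-split}: $S_i$ must contain every frontier vertex with label below $B_i$, including non-pivot members of the $P_j$. Expanding only pulled pivots achieves this only if $d[p_j]$ keeps lower-bounding the relevant members of $P_j$. It need not, because their labels drop through direct relaxations and also inside partial subcalls without being returned. The paper therefore updates $p_j$ on direct relaxations and proves the invariant $d[p_j]\le d_B[\widehat P_j]$, where $\widehat P_j$ is the set of members whose shortest path avoids every complete vertex of $\mathcal D$.
\item Vertices reached by failed local searches cannot be declared complete at \textsc{FindPivots} time. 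The algorithm does not know which roots in $Q$ were complete, and in a partial execution these distances may exceed $B'$. The paper keeps the loop's frontier only for $\tilde U(B,P_{ini})$. After the loop it adds $W'=\{x\in W\setminus U: d[x]<B'\}$, which it proves complete from $d_B[\mathcal D\cup P]\ge B'$. It then relaxes the out-edges of $W'$ into $\mathcal D$ and re-inserts every $x\in S$ with $d[x]\in[B',B)$. Without these steps $\langle U,\mathcal D\rangle$ is not a frontier for $\tilde U$ after a partial execution.
\end{itemize}
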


In the top layer, we call \cref{alg:BMSSP} with parameters $S = \{s\}$, $\expectB = \infty$ and $l = \greaterInt{(\log n)/t}$.
Since $\abs{\actualU} \leq \abs{V} = o(t^3n)$, it is a full execution with all vertices complete in the end.
Furthermore,
% in the top layer,
the degree of each vertex in $V$ is at most $O(\delta)$.
% (\cref{lemma:bmssp-time-analysis}).
Thus the total running time is $O(m(t + \frac{\log n \log t}{\delta t}))$.

With $t = \sqrt{\log n \log \log n / \delta}$ and $\delta = \frac{1}{4} \min\{\frac{m}{n}, \log \log n\}$, it reduces to $O(m\sqrt{\log n \log \log n / \delta})$.
If $\frac{m}{n} \geq\log \log n$, it is $O(m\sqrt{\log n})$.
If $\frac{m}{n} < \log \log n$, it is $O(\sqrt{mn\log n\log\log n})$.

\subsection{Finding Pivots}

Recall that we hope to manage the frontier $\frontierPair{\emptyset, S}$ more efficiently. 
As mentioned in \cref{sec:overview}, Algorithm \ref{alg:find-pivots} tries to connect $S$ with edge-disjoint $\Theta(k)$ sized sub-trees. We perform a local Dijkstra search from every vertex $x\in S$, only adding edges with valid relaxation into the sub-tree. There are three possible cases:
\begin{itemize}
    \item If at least $k$ vertices are found, we stop and record the result as a directed tree $\bar{F}_j$;
    \item If the search encounters an vertex that already belongs to an existing $\bar{F}_j$ (i.e., it overlaps with an existing sub-tree), we stop and merge the current explored sub-tree with that $\bar{F}_j$;
    \item Otherwise the search fails to reach $k$ vertices from $x$, we record it as an arborescence $W_j$.
\end{itemize}
Finally we partition each $\bar{F}_j$ into $\Theta(k)$ sized edge-disjoint sub-trees in linear time as described in \cref{sec:partition}.

An important observation is that, under a full execution, if the search fails to reach $k$ vertices from $x$, we no longer need to initiate any further search from $x$ at current value $d[x]$ with bound $\expectB$.
\begin{itemize}
    \item If $x$ is currently complete, every vertex $v$ with $\distV(v) < \expectB$ whose shortest path visits $x$ is already reached and updated;
    \item Otherwise, the shortest path of $x$ passes through some other complete vertex in $S$. We can rely on that vertex instead of $x$ for subsequent shortest path relaxations.
\end{itemize}

%In fact, another searching based on the current value $d[x]$ cannot cause any new update which we have already experienced.
Therefore, $x$ is no longer useful for relaxation;
unless $d[x]$ is updated by another relaxation and then $x$ reappears in $S$ in a subsequent call.
Intuitively, each search from $x$ failing to reach $k$ vertices is attributed to one of $x$'s incoming edges, so it occurs at most $\delta$ times. This statement is formally proved in \cref{lemma:bmssp-time-analysis}.
% As we will show in \cref{lemma:bmssp-time-analysis}, this happens at most $\delta$ times for any $x$.

%{\color{cyan} This observation effects on Line 21 of Algorithm 3 in \cite{SSSP25} and its corresponding position \cref{line:bmssp:add-back-Si} in our \cref{alg:BMSSP}: to add back remaining $S_i$ which are in the interval $[\actualB_i, \expectB_{i})$, may cause \cref{alg:find-pivots} called on a vertex for multiple times (a vertex pulled into $S_i$, added back to $\ds$ and then pulled again, added back again, repeat, $\cdots$); and in our algorithm we do not add back such vertices.}

\begin{algorithm}[ht]
\caption{Finding Pivots}
\label{alg:find-pivots}
\AlgInput{Upper bound $\expectB$, vertex set $S$}
\AlgOutput{Vertex sets $\{P_j\}_{j=1}^{p}$, $Q$, $W$}
% \Require {$\frontierPair{\emptyset, S}$ is a frontier for $\target{\expectB, S}$.}
% \Ensure {$S$ is a disjoint union of $\{P_j\}_{j=1}^{r}$ and $Q$.}
% \Ensure {$\frontierPair{W, P}$ is a frontier for $\target{\expectB, S}$. }
% \Ensure {Each $P_j$ is connected by a directed tree $F_j$; $F_j$'s are edge-disjoint; $\abs{F_j} = \Theta(k)$}
% \Ensure {$P_j \subseteq F_j \cap S$; $F_j\subseteq \target$ ; $\abs{F_j} = \Theta(k)$; $\abs{P_j} = O(k)$; $\abs{W_j} < k$.}
% \Ensure {Running time: $O(1)$ per visited edge; $O(\log k) = O(\log t)$ per visited vertex; $O(k\log k)$ per root of $W_j$, amortized to its root.}
\Alg{\FindPivots{upper bound $\expectB$, vertex set $S$}}{
    \ForAll {$x \in S$}{
        \lIf{$x$ is in any existing $\bar{F}_j$}{Skip $x$ and proceed to the next vertex in $S$}
        % \lIf{$x$ is not \activeV\ }{Skip $x$ and proceed to the next vertex in $S$}
        Initialize Fibonacci heap $H\gets \{\pair{x, d[x]}\}$ and subgraph $K\gets \{x\}$\;
        \While(\tcp*[f]{Local Dijkstra search from $x$}){$H.\nonEmpty()$ \LogicAnd $\abs{K} < k$} {
            $u\gets H.\extract()$\;
            \ForAll(\tcp*[f]{Recall $\delta \leq k$}){edge $(u, v)$}{
                % \If {\Relax{$u, v$} \LogicAnd $d[u] + w_{uv} < \expectB$} {
                \If {\Relax{$u, v, \expectB$}} {
                    Add vertex $v$ and edge $(u, v)$ into $K$\;
                    \uIf {$v$ is in some existing $\bar{F}_j$}{
                        $\bar{F}_j \gets \bar{F}_j \cup K$\;
                        Quit the \KeyWhile-loop, skip $x$ and proceed to the next vertex in $S$\;
                    }\uElseIf {$v$ is in $H$ } {
                        Remove old incoming edge of $v$ from $K$\;
                        $H.\decrkey(v, d[v])$
                    }\lElse{
                        $H.\ins(v, d[v])$
                    }
                }
            }
        }
        \lIf {$\abs{K} \geq k$ } {
            Report $K$ as a new directed tree $\bar{F}_j$
        }\lElse{
            Report $K$ as a new arborescence $W_j$
        }
    }
    $W \gets \bigcup_{j}W_j$; $Q\gets \{\text{roots of }W_j\}$\tcp*[r]{$\abs{W} \leq O(k\abs{Q})$}
    Partition $\{\bar{F}_j\}$ into $\Theta(k)$ sized edge-disjoint sub-trees $\{F_j\}_{j=1}^{p}$ \tcp*[r]{See \cref{lemma:partition}}
    \For {$j \gets 1$ \KeyTo $p$} {
        $P_j \gets \{x \in (S\setminus Q) \cap F_j: \forall j' < j, x\notin F_{j'}\}$\tcp*[r]{$j$ minimality for disjointness}
    }
    \Return $\{P_j\}_{j=1}^{p}, Q, W$%\tcp*[r]{Suppose there are $p$ such non-empty $P_j$'s}
}
\end{algorithm}

\begin{lemma}[Finding Pivots]
\label{lemma:find-pivots}
%Under the same assumption with \cref{lemma:BMSSP} (except $l$): given an upper bound $\expectB$ and a vertex set $S$ of size at most $t^22^{lt}$,
Suppose $\frontierPair{\emptyset, S}$ is a frontier for $\target = \target{\expectB, S}$,
\FindPivots{$\expectB, S$} (\cref{alg:find-pivots})
%is a sub-routine of \cref{alg:BMSSP}, and it
returns subsets $\{P_j\}_{j=1}^{p}$ of $S$ each of size $O(k)$, another subset $Q\subseteq S$ and a set $W\subseteq \target$ of size $O(k\abs{Q})$, in $O((p+\abs{Q})k(\delta + \log k))=O((p+\abs{Q})k\log k)$ time.
After running \cref{alg:find-pivots}, $\frontierPair*{W, \bigcup_{j=1}^{p}P_j}$ is a frontier for $\target$.
\end{lemma}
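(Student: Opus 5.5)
The proof splits into three parts: the structural claims about the output sets, the frontier property, and the running time.

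\textbf{Structural claims.} Every vertex added to $K$ is reached by a valid relaxation with $d[u]+w_{uv}<\expectB$ from a vertex whose label comes from a path through $x\in S$. I will argue inductively that $K\subseteq\target$. Then each tree $W_j$ has fewer than $k$ vertices, so $\abs{W}<k\abs{Q}$.

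For the pivots, I will invoke \cref{lemma:partition}: it splits the edge-disjoint trees $\bar F_j$ into edge-disjoint subtrees $F_j$ of size $\Theta(k)$. Hence $\abs{P_j}\le\abs{F_j}=O(k)$. The $P_j$ are disjoint by the minimality rule, and $S\setminus Q\subseteq\bigcup_j F_j$, because every non-$Q$ vertex of $S$ lies in some $\bar F_j$.

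\textbf{Frontier property.} Take $v\in\target$. Since $\frontierPair{\emptyset,S}$ was a frontier, the shortest path of $v$ visits a complete $x\in S$; choose the last such complete vertex of $S$ on the path. If $x\notin Q$, then $x\in\bigcup_j P_j$ and we are done, since completeness persists (\cref{lemma:frontier-observation}).

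If $x\in Q$, then the local Dijkstra from $x$ ran to exhaustion with fewer than $k$ vertices. This case is the main obstacle. I would argue that a Dijkstra search from a complete source, restricted to bound $\expectB$, settles every vertex on the shortest-path subtree below $x$. Take the first vertex $y$ on the $x\to v$ path that is not correctly settled. Its predecessor was extracted with the correct label, so relaxing the edge into $y$ either updates $d[y]$, putting $y$ into $K$, or fails because $d[y]$ was already at most that value, in which case $y$ is already complete.

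In the second case the tie-breaking rule forces the relaxation to count as valid whenever $d[y]$ already equals that path. So there is a subtle point: an early exit occurs only when merging into an existing $\bar F_j$, and that case is not $x\in Q$. I would handle the remaining case as follows. If $y$ became complete without being in $K$, then $y$ was set complete through some other search whose tree now lies in $W$ or in some $\bar F$. This is why I choose the last complete vertex of $S$, and why the argument must also cover vertices of $S$ lying inside trees. I would show that the whole tail of the path then lies in $W$, or visits a complete vertex of some $P_j$ via the tree it lies in.

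To simplify, I will prove the invariant that each tree ($W_j$ or $\bar F_j$) contains its vertices with labels derived inside it. Then any complete vertex $z$ in a tree $F_j$ has its tree path from roots in $S$, and the pivot of that subtree ($S\cap F_j$ at minimal index) is complete on $z$'s shortest path. This gives either condition 1 ($v\in W$ complete) or condition 2 (via a pivot).

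\textbf{Running time.} Each search from an $x$ not skipped extracts fewer than $k$ vertices and scans at most $\delta$ edges per extraction. With the Fibonacci heap, one search costs $O(k(\delta+\log k))$.

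I charge each search to one of three things:
\begin{itemize}
\item a searched vertex in $Q$;
\item a newly created $\bar F_j$ of size at least $k$;
\item a merged search. Its cost is charged to the vertices it adds to $\bar F_j$: it added at least $\abs{K}$ new vertices, including $x$ itself.
\end{itemize}

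So the total work is $O(k\log k)$ times $\abs{Q}$ plus $\sum_j\abs{\bar F_j}/k$. The latter is $O(p)$, since the partition produces $\Theta(\sum\abs{\bar F_j}/k)$ subtrees. Merged searches cost $O(\log k)$ per added vertex. The partition and the construction of the $P_j$ are linear by \cref{lemma:partition}. Using $\delta\le\log k$ gives $O((p+\abs{Q})k\log k)$.
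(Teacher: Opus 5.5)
Your plan follows the paper's proof. You split on whether the complete vertex of $S$ on $v$'s shortest path lies in $Q$ or in some $P_j$, use the exhausted local search in the $Q$ case, and charge search cost to $\abs{Q}$ and to new vertices of the $\bar{F}_j$'s, whose total size is $O(kp)$ by \cref{lemma:partition}. The running-time part is fine.

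\textbf{The frontier part ends on a false step.} Your own observation already settles $x\in Q$. The relaxation test is $d[u]+w_{uv}\le d[v]$, so relaxing the shortest-path edge into $y$ from a correctly labelled predecessor is valid even if $y$ is already complete. A search from $x\in Q$ never hit an existing $\bar{F}_j$, so it emptied $H$. By induction along the path, every vertex including $v$ enters $K$ and is extracted with $d[y]=\distV(y)$, so $v\in W$ is complete. This is the paper's one-line argument made precise, and there is no ``remaining case''.

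The invariant you then propose to prove is false, for two reasons:
\begin{itemize}
\item A later search can validly relax into a vertex of an earlier $W_j$. That vertex lies in no $\bar{F}$ and not in the new heap, so it is simply inserted. Hence labels in $W_j$ need not be derived inside $W_j$.
\item \cref{lemma:partition} cuts $\bar{F}_j$ as an undirected tree without looking at $S$. So a piece can consist entirely of vertices outside $S$, for instance an $S$-free branch of about $k$ vertices in a merged $\bar{F}_j$ that the DFS groups on its own. Then $P_j=\emptyset$, and a complete $z$ in that piece has no vertex of $P_j$ on its shortest path.
\end{itemize}
Delete that paragraph; an argument relying on it is wrong.

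\textbf{Your justification of $K\subseteq\target$ does not work.} A chain of valid relaxations from $x$ with labels below $\expectB$ shows only that \emph{some} path to the vertex passes through $S$, not that its \emph{shortest} path does. Under the lemma's hypothesis alone, the containment can fail. Take the graph with only the edges $s\to x$ (weight $0$), $s\to a$ (weight $1$) and $x\to a$ (weight $5$), and set $S=\{x\}$, $\expectB=\infty$, $d[x]=0$, $d[a]=\infty$. Then $\target=\{x\}$ and $\frontierPair{\emptyset,S}$ is a frontier, yet for $k\ge 3$ the search from $x$ puts $a$ into $W$.

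What makes the containment true inside BMSSP is an extra fact: when \textsc{FindPivots} is called, every $y\notin\target$ with $\distV(y)<\expectB$ is already complete. This follows from the caller's frontier invariants. Given it, a valid relaxation from $u\in K$ into such a $y$ must produce exactly $\distV(y)$. That forces $u$ to be $y$'s shortest-path predecessor, so $y\in\target$ after all, and induction from $x$ gives $K\subseteq\target$. The paper's proof is silent on this containment too. You should state that extra hypothesis explicitly rather than claim the induction from the given one.
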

\begin{proof}
We only need to prove that $\frontierPair*{W, \bigcup_{j=1}^{p}P_j}$ is a frontier for $\target$.

For any $v\in\target$, its shortest path visits some complete $u\in S$.
If $u$ is contained in some $P_j$, we are already done;
otherwise the sub-tree searched from $u$ contains less than $k$ vertices. Since $v$ lies in such a sub-tree, $v$ is included into $W$ and becomes complete. 
% One thing to note is that the first statement does not mean all vertices in $W_j$ are complete, but if a vertex in $W_j$ is incomplete, it satisfies the second statement.

As for the running time, the local Dijkstra search takes $O(1)$ time per processed edge, and $O(\log k)$ time per processed vertex as promised by Fibonacci heap~\cite{FT87};
by \cref{lemma:partition}, the partition step takes $O(1)$ time per edge.
Every processed edge is incident to a vertex in some $\bar{F}_j$ or some $W_j$, so there are most $O((p+\abs{Q})k\delta)$ of them.
Every processed vertex evokes heap insertion only once as a vertex of some $\bar{F_j}$, or possibly many times as a vertex in several $W_j$'s, which results in $O((p+\abs{Q})k\log k)$ time. Recall that $\delta \leq \log k$, so the total time is $O((p+\abs{Q})k\log k)$.
\end{proof}

\begin{remark}
\label{remark:find-pivots}
We also note that: 
\begin{itemize}
    \item Each $P_j$ is connected by a directed tree $F_j$, where $F_j$'s are edge-disjoint, and vertices of $F_j$ are in $\target$.
    \item $\{P_j\}_{j=1}^{p}$ and $Q$ are disjoint, and their union is $S$. Moreover, $p \leq \min\{\abs{S}, \abs*{\target}/k\}$.
    \item $W$ is a union of $\abs{Q}$ arborescences $\{W_j\}_{j=1}^{\abs{Q}}$; $Q$ are their roots.
    \item However, $\bigcup_{j=1}^{p}F_j$ and the arborescences in $W$ are not necessarily disjoint.
\end{itemize}

% Moreover, if $q_j\in Q$ is the root of $W_j$ and is complete at the beginning, then after running \cref{alg:find-pivots}, every vertex $v$ with $\distV(v) < \expectB$ whose shortest path visits $v$ is in $W_j$ and is complete. As a result, denote $Q^* = \{q\in Q: q\text{ is complete in the beginning}\}$, then $\target{\expectB, Q^*} \subseteq W$ and $\target{\expectB, Q^*}$ is complete.
% % i.e., $\target{\expectB, q_j}$ is a subset of $W_j$ and is complete.

\end{remark}

\subsection{Data Structure}

% (rearranging them into $M$-sized unsorted blocks),
\cite{SSSP25} designed a data structure to support the operations required for partially sorting the ``frontiers''.
In this section we present a simpler data structure with a slight improvement.
% By exploiting the remaining content in the data structure from sub-calls,
We implement by a self-balanced binary search tree (BST) of $M$-sized blocks: blocks sorted by the BST, but items inside a block unsorted.
Requiring $\log(N/M)\leq M$, an $O(\log (N/M))$ time BST operation is amortized to $O(1)$ per item in blocks, which enables us to perform operations except insertions in amortized linear time.

\begin{lemma}[Data Structure]
\label{lemma:data-structure}
Given at most $N$ key/value pairs involved, a parameter $M$, and an upper bound $\expectB$, there exists a data structure $\ds$ that supports the following operations:
\begin{description}
    \item[Insert] Insert a key/value pair. If the key already exists, keep the one with smaller value.
    \item[Merge] For another data structure $\ds'$ with smaller parameter $M'(< M/3)$ as specified in this lemma, such that all the values in $\ds'$ are smaller than all the values in $\ds$, insert all pairs of $\ds'$ into $\ds$ (keep the smaller value for duplicate keys).
    % If $\ds'$ is empty, do nothing; otherwise, we require $\abs{\ds'} = \Omega(M)$.
    \item[Pull] Return a subset $S'$ of keys where $\abs{S'}\leq M$ associated with the smallest $\abs{S'}$ values and an upper bound $x$ that separates $S'$ from the remaining values in $\ds$. Specifically, if there are no remaining values, $x = \expectB$; otherwise, $\abs{S'} = M$.
\end{description}
% \begin{description}
%     \item[Insert] Insert a key/value pair in $O(\log(N/M))$ amortized time. If the key already exists, keep the smaller.
%     \item[Merge] For another data structure $\ds'$ with smaller parameter $M'(< M/3)$ as specified in this lemma, such that all the values in $\ds'$ are smaller than all the values in $\ds$, insert all elements of $\ds'$ into $\ds$ (keep the smaller value if there are duplicate keys) in $O(\abs{\ds'})$ amortized time.    
%     \item[Pull] Return a subset $S'$ of keys where $\abs{S'}\leq M$ associated with the smallest $\abs{S'}$ values and an upper bound $x$ that separates $S'$ from the remaining values in $\ds$, in $O(\abs{S'})$ amortized time. Specifically, if there are no remaining values, $x$ should be $\expectB$. Otherwise, we have $\abs{S'} = M$.
% \end{description}

If $M = 1$, \ins\ and \pull\ take $O(\log N)$ time with \mergeds\ unsupported, for base case only (\cref{subsec:base-case}).

If $M > 1$, we require $M\geq \log(N / M)$. \ins\ takes $O(\log (N/M))$ amortized time; other operations are linear: \mergeds\ takes $O(\abs{\ds'})$ time and \pull\ takes $O(\abs{S'})$ time.
\end{lemma}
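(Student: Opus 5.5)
We will realize $\ds$ as a self-balancing binary search tree (say a red-black tree) whose nodes are \emph{blocks}. Each block is an unsorted list of at most $M$ key/value pairs and carries an upper bound on its values. The blocks are kept disjoint and ordered: every value in one block is at most every value in the next. Alongside the tree we keep a global dictionary (an array indexed by vertex) mapping each key to its current value and its block, so that a duplicate key can be detected in $O(1)$ time. When a smaller value for an existing key arrives, we mark the old copy stale and delete it lazily. Each stale copy is charged to the insertion that superseded it.

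We maintain the size invariant that every block except possibly the last holds $\Theta(M)$ live elements. The tree then has $O(N/M+1)$ nodes, and a search costs $O(\log(N/M))$. The base case $M=1$ is simply a balanced BST of single elements, giving $O(\log N)$ for \ins\ and \pull.

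The three operations are implemented and charged as follows.

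For \ins, we locate the block whose upper bound is the first one $\geq$ the value, in $O(\log(N/M))$ time, and append the pair. If the block now exceeds $M$ elements, we split it into two halves by a linear-time median selection, and the new block is inserted into the tree in $O(\log(N/M))$ time. The split costs $O(M)$, which we charge to the $\Omega(M)$ insertions that filled the block since it last had size $M/2$. Hence \ins\ costs $O(\log(N/M)+1)=O(\log(N/M))$ amortized.

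For \pull, we repeatedly take the leftmost block. The tree can return it in $O(1)$ amortized time if we keep finger pointers to the leftmost node, or in $O(\log(N/M))\le O(M)$ time otherwise. We take blocks until either about $M$ elements have been collected or the tree is empty. Because each block has $\Omega(M)$ elements, only $O(1)$ blocks are taken. If the total exceeds $M$, we select the $M$ smallest from the taken blocks by linear-time selection and reinsert the remainder as a new leftmost block, so no search is needed. The returned bound $x$ is the minimum of the remainder, or $\expectB$ if nothing remains. The cost is $O(|S'|)$ plus $O(\log(N/M))$ tree work. This is where $M\ge\log(N/M)$ is needed: it lets us absorb the tree work into $O(M)=O(|S'|)$ when $|S'|=M$. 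When $|S'|<M$ the structure becomes empty, and we charge the leftover cost to the insertions of those elements.

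For \mergeds, we rely on the hypothesis that all values of $\ds'$ are smaller than those of $\ds$. We extract all live pairs of $\ds'$ in $O(|\ds'|)$ time, resolving duplicates with the global dictionary. We then pack them, without sorting, into new leftmost blocks of $\ds$ by successive selection. Direct selection into chunks of size $M$ is not linear, so the plan is different: because $M'<M/3$, the blocks of $\ds'$ are already ordered in chunks of size at most $M'$. We concatenate consecutive $\ds'$-blocks greedily until a group reaches size between $M/2$ and $M$. Each such group becomes a new leftmost block, prepended in $O(\log(N/M))$ time, which is $O(M)$ and therefore charged to the $\Omega(M)$ elements of the group. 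A final undersized group is merged into the current leftmost block, with a split if needed. The total cost is $O(|\ds'|)$.

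The main obstacle is the amortization. We must show that stale duplicates, undersized blocks created by \pull\ and \mergeds, and tree rebalancing never cost more than $O(1)$ per element for non-insert operations. The plan is a potential function equal to a constant times the number of stored elements. Each insertion deposits $O(1)$ credit, in addition to its $O(\log(N/M))$ search cost. Every split, removal of a stale copy, or handling of a small block is then paid from these credits, using $\log(N/M)\le M$ to convert each $O(\log(N/M))$ tree operation into $O(1)$ per element of a $\Theta(M)$-sized block.
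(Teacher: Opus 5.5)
Your construction is the paper's: a balanced BST of unsorted blocks of size $\Theta(M)$ with a key-to-position table, and linear-time selection for splitting and pulling. As in the paper, you use $M\ge\log(N/M)$ to absorb each $O(\log(N/M))$ tree operation into a $\Theta(M)$-sized block, and $M'<M/3$ to glue consecutive blocks of $\mathcal{D}'$ into $\Theta(M)$-sized groups in \textsc{Merge}. You delete superseded copies lazily, whereas the paper deletes them in $O(1)$ through the table. That is a harmless variation, provided your size invariant counts stored copies (live plus stale). Supersession can drain a block's live count to zero, so the invariant as you state it, in live elements, does not hold.

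The step that does not go through as written is the block-size bookkeeping after \textsc{Pull}. You reinsert the remainder as a block of arbitrary size in $[1,M]$.
\begin{itemize}
\item If its size is close to $M$, a single \textsc{Insert} forces an $O(M)$ split. So does a \textsc{Merge} of one element whose final undersized group lands in that block. That block never had size $M/2$, so your charge ``to the $\Omega(M)$ insertions since it last had size $M/2$'' has nothing to charge.
\item If it is tiny, such blocks can accumulate behind blocks prepended by later merges. Then ``only $O(1)$ blocks are taken'' in a pull is no longer justified.
\end{itemize}
Your fallback potential, a constant times the number of stored elements, cannot pay for either case, because a split or a join leaves that number unchanged.

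The missing ingredient is the paper's normalization:
\begin{itemize}
\item At the end of every \textsc{Pull}, using its $O(M)$ budget, join the remainder with the next block and split by median, so that blocks lie in a window such as $[M/2,2M/3]$.
\item After \textsc{Insert} and \textsc{Merge}, split or join only when a block leaves $[M/3,M]$.
\end{itemize}
Every violation is then preceded by $\Theta(M)$ inserted or merged pairs touching that block, which pays for it.

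Separately, keep taking blocks until you hold at least $M+1$ live elements, or $\mathcal{D}$ is exhausted. Otherwise, when exactly $M$ are collected and blocks remain, your separator $x$ (``the minimum of the remainder'') is undefined.
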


The proof of \cref{lemma:data-structure} is given in \cref{subsec:proof-of-data-structure}.

\subsection{Bounded Multi-Source Shortest Paths}

In this section we are ready to present our main \BMSSP algorithm, see~\cref{alg:BMSSP}.

\begin{algorithm}[p]
\caption{BMSSP}
\label{alg:BMSSP}
\AlgInput{Upper bound $\expectB$, vertex set $S$($\abs{S}\leq t^22^{lt}$), layer $l$}
\AlgOutput{New bound $\actualB (\leq \expectB)$; vertex set $\actualU(\abs{\actualU} = O(t^32^{lt}))$; data structure $\ds$}
\Alg{\BMSSP{upper bound $\expectB$, vertex set $S$, layer $l$}}{
    \lIf{$l = 0$}{
        Operate and return as \cref{alg:base-case} (Base Case)
    }
    Initialize data structure $\ds$ parameterized by $M = t2^{(l-1) t}$\;
    $\{P_j\}_{j=1}^{p}, Q, W \gets$ \FindPivots{$\expectB, S$}\label{line:bmssp:find-pivot-call}\;
    % \ForAll{$1\leq j \leq p$}{
    \For{$j \gets 1$ \KeyTo $p$} {
        $p_j \gets \arg\min_{x\in P_j}d[x]$\;
        $\ds.\ins(\pair{p_j, d[p_j]})$ \label{line:bmssp:initial-pivot-insert}
    }
    $\actualB \gets \actualB_{0} \bydef d_{\expectB}[\{p_j: 1\leq j\leq p\}]$;
    $\actualU\gets \emptyset$;
    $i \gets 1$;
    $\dirtyPj \gets \emptyset$
    \label{line:bmssp:define-B-zero}\;
    \While {$\abs{\actualU} \leq t^3 2^{lt}$ \LogicAnd $\ds$ is non-empty\label{line:bmssp:while-start}} {
        $S_i, \expectB_{i} \gets \ds.\pull()$ \label{line:bmssp:pull}\;
        \ForAll (\tcp*[f]{$\abs{S_i}$ expands by at most $k$ times.}) {$x \in S_i$\ } {
            \If {$x = p_j$\ for some $j$} {
                $S_i \gets S_i \cup \{v \in P_j: d[v] < \expectB_{i}\}$ \label{line:bmssp:pick-from-Si}           
            }
        }
        % \ForAll (\tcp*[f]{$\abs{S_i}$ expands by at most $k$ times.}) {$p_j \in S_i$ for some $j$ } {
        % $S_i \gets S_i \cup \{v \in P_j: d[v] < \expectB_{i}\}$ \label{line:bmssp:pick-from-Si}\;
        % }
        % $S_i \gets S_i \cup \{x \in P_j: d[x] < \expectB_{i} \text{\ and\ }  p_j \in S_i\}$\label{line:bmssp:pick-from-Si} \;
        $\actualB_{i}, \actualU_{i}, \ds_i \gets $ \BMSSP{$\expectB_{i}, S_i, l - 1$}\label{line:bmssp:sub-call}\;
        $\ds.\mergeds(\ds_i)$ \label{line:bmssp:merge}\;
        \ForAll {$u \in \actualU_{i}$} {
            \If {$u \in P_j$ for some $j$} {
                Remove $u$ from $P_j$\label{line:bmssp:remove-u-from-pj}\;
                \If{$u = p_j$ \LogicAnd $P_j\neq \emptyset$}{Add $j$ into $\dirtyPj$} \label{line:bmssp:pivot-remove-add-j}
            }
        }
        \ForAll {$u \in \actualU_{i}$} {
            \ForAll (\tcp*[f]{$u$ is complete, $d[u]=\distV(u)$}) {edge $(u, v)$}{
                 \If {\Relax{$u, v, \expectB$} \LogicAnd $d[u] + w_{uv} \geq \expectB_{i}$}{ \label{line:bmssp:edge-direct-check}
                    $\ds.\ins(\pair{v, d[v]})$ \label{line:bmssp:edge-direct-insert}\;
                    % Insert $\pair{v, d[v]}$ into $\ds$ \label{line:bmssp:edge-direct-insert}\;
                    \If {$v\in P_j$ for some $j$ \LogicAnd $j\notin \dirtyPj$ \LogicAnd $d[p_j] > d[v]$ } {
                        $p_j \gets v$\label{line:bmssp:update-pj-direct-insert}\;
                    }
                }
            }
        }
        \ForAll{$j\in \dirtyPj$}{
                $p_j \gets \arg\min_{x\in P_j}d[x]$\;
                $\ds.\ins(\pair{p_j, d[p_j]})$ \label{line:bmssp:tree-pivot-insert}
                % Insert $\pair{p_j, d[p_j]}$ into $\ds$ \label{line:bmssp:tree-pivot-insert}
        }
        $\actualB \gets \actualB_{i}$;
        $\actualU\gets \actualU \cup \actualU_{i}$;
        $i \gets i + 1$;
        $\dirtyPj \gets \emptyset$
        \label{line:bmssp:while-end}\;
    }
    \ForAll {$x\in S$ \label{line:bmssp:quit-while-loop}} {
        \If {$d[x] \in [\actualB, \expectB)$ } {
            $\ds.\ins(\pair{x, d[x]})$\label{line:bmssp:add-back-Si}\;
        }
    }    
    % $W' \gets \{x\in W \setminus \actualU: d[x] < \actualB\}$\label{line:bmssp:define-w-prime}\;
    \ForAll{$u\in W' \bydef \{x\in W \setminus \actualU: d[x] < \actualB\}$ } {\label{line:bmssp:define-w-prime}
        \ForAll(\tcp*[f]{$u$ is complete, $d[u]=\distV(u)$}){edge $(u, v)$ } {
            \If {\Relax{$u, v, \expectB$} \LogicAnd $d[u] + w_{uv} \geq \actualB$ \label{line:bmssp:check-w} }  { 
                $\ds.\ins(\pair{v, d[v]})$ \label{line:bmssp:insert-w}\;
                % Insert $\pair{v, d[v]}$ into $\ds$\label{line:bmssp:insert-w}\;
            }
        }
    }
    \Return $\actualB$, $\actualU\gets \actualU\cup W'$, $\ds$ \label{line:bmssp:return}\;
}
% \Fn (\tcp*[f]{$O(1)$ time to update $p_j$ to minimum in $P_j$.}) {\UpdatePj{vertex $u$}}{
%     \lIf{$u \in P_j$ for some $j$ \LogicAnd ($p_j \notin P_j$ \LogicOr $d[p_j] > d[u]$)} {
%         $p_j\gets u$
%     }
% }
\end{algorithm}

%\cref{alg:BMSSP} shares many similarities with \cite{SSSP25} and \cite{SSBP18}, acting as kind of a mixture, and some details are specified to fit the situation of this single-source shortest path problem.

The base case ($l=0$) can be solved using Dijkstra's algorithm as in \cref{subsec:base-case}. Now we assume $l > 0$.

We first initialize data structure $\ds$ with parameter $M = t2^{(l-1)t}$ to hold the vertices we will scan.

Then we run \FindPivots of \cref{lemma:find-pivots} to reorganize $S$ into disjoint subsets $\{P_j\}_{j=1}^{p}$ and $Q$ of $S$, with a set $W$.
For each $P_j$ we find $p_j = \arg\min_{x\in P_j}d[x]$ the minimum vertex in it.
We will update $p_j$ accordingly such that $p_j$ always lower bounds necessary
\footnote{By ``necessary'' we mean a vertex in $P_j$ whose shortest path does not visit any complete vertex in $\ds$;
because if it does visit, such a vertex need not to be tracked.
For better readability,
we recommend that readers first form a high-level picture of the algorithm and then return to this definition.
A formal proof that the algorithm preserves this property appears in \cref{lemma:bmssp-correctness-frontier}.
}
vertices in $P_j$.

Set $\actualB \gets \actualB_{0} \bydef \min\{\expectB, \{d[p_j]: 1\leq j\leq p\}\}$;
$\actualU\gets \emptyset$.
$\actualB$ actually marks the progress of our algorithm. $\actualU$ will accumulate batches of complete vertices.
Set $\dirtyPj\gets \emptyset$; $\dirtyPj$ holds the set of $j$ such that $p_j$ needs to be re-selected.
The rest of algorithm repeats the following steps for multiple rounds until $\ds$ becomes empty or $\abs{\actualU} > t^32^{lt}$, where during the $i-$th iteration ($i\geq 1$), we:
\begin{itemize}
    \item Pull from $\ds$ a subset $S_i$ of keys with smallest values in $\ds$ upper bounded by $\expectB_{i}$ (also a lower bound for remaining vertices in $\ds$).
    Scan $x\in S_i$; if $x=p_j$ for some $j$, add $\{v\in P_j: d[v] < \expectB_{i} \}$ into $S_i$ (by brute-force search);
    \item Recursively call \BMSSP{$\expectB_{i}, S_i, l-1$} and collect its output $\actualB_{i}, \actualU_{i}, \ds_i$;
    \item Merge vertices of $\ds_i$ into $\ds$;
    % (for each $u\in \ds_i$, also update $p_j$ if it is in $P_j$ by \UpdatePj);
    \item For each $u\in \actualU_{i}$, if $u \in P_j$ for some $j$, remove $u$ from $P_j$;
    and if $u = p_j$, also add $j$ into $\dirtyPj$.
    \item For each $u\in \actualU_i$, relax every edge $(u, v)$ such that $d[u] + w_{uv} \in [\expectB_{i}, \expectB)$;
    if the relaxation is valid, insert $\pair{v, d[v]}$ into $\ds$;
    if $v\in P_j$ for some $j\notin \dirtyPj$,
    also update $p_j$ to the smaller one between $v$ and $p_j$;
    \item For each non-empty $P_j$ from $j\in \dirtyPj$, re-select $p_j$ to the minimum vertex in $P_j$ by brute-force search and insert it into $\ds$;
    \item Update $\actualU$ to include $\actualU_{i}$
    and $\actualB\gets \actualB_{i}$;
    Reset $\dirtyPj \gets \emptyset$;
    \item If $\ds$ becomes empty, then this is a full execution and we quit the loop;
    \item If $\abs{\actualU} > t^32^{lt}$,
    this is a partial execution where we encountered a large workload. Also quit the loop.
    % \item Otherwise if $\ds$ is non-empty and $\abs{\actualU} \leq t^32^{lt}$, return to the first step.
\end{itemize}

After quitting the loop, we operate the following finalizing operations:
\begin{itemize}
    \item For every $x\in S$ such that $d[x] \in [\actualB, \expectB)$, insert $\pair{x, d[x]}$ into $\ds$.
    \item For every edge $(u, v)$ from every $u\in W' \bydef \{ x\in W \setminus \actualU: d[x] < \actualB\}$ such that $d[u] + w_{uv} \in [\actualB, \expectB)$, relax it;
    if the relaxation is valid, insert $\pair{v, d[v]}$ into $\ds$.
    % \item For every such $u$ above, i.e., for every $u\in W'$, add it into $\actualU$.
    \item $\actualU \gets \actualU \cup W'$, i.e., add every vertex $u\in W'$ into $U$.
\end{itemize}

% Finally, before termination, we update $\actualU$ to include every vertex $x$ in any $W_j$ returned by \FindPivots with $d[x] < \actualB$.

\subsection{Observations and Discussions on the Algorithm}
\label{subsec:obserevation}

We first present some informal explanations to give reader an intuitive overview on the algorithm, and in subsequent subsections we will provide more formal proofs on correctness and running time.

\paragraph{Recursion.} We first think of the recursion tree $\mathcal{T}$ of \cref{alg:BMSSP} where each node $X$ denotes a call of \BMSSP algorithm. We subscript $X$ to all the parameters to denote those used in the call of $X$: $l_X, \expectB_{X}, S_X$ for input, $\actualB_X, \actualU_{X}, \ds_X$ for output, $\{P_{X, j}\}_{j=1}^{p_X}, Q_X, W_X$ for output of \FindPivots (\cref{line:bmssp:find-pivot-call}). Also denote $W'_X = \{x\in W_X \setminus \actualU_{X}: d(x) < \actualB_X\}$, $P_X = \bigcup_{j=1}^{p_X}P_{X, j}$, and $\target_{X} = \target{\expectB_{X}, S_X}$ as above.
Then we have the following properties:
\begin{enumerate}
    \item In the root $R$ of $\mathcal{T}$, $\actualU_{R}=V$;
    \item Think of the construction of $S_X$ in the caller $Z$ of $X$: when pulled (\cref{line:bmssp:pull}), $\abs{S_{X}} \leq M_{Z} = t2^{(l_{Z}-1)t} = t2^{l_{X}t}$.
    Adding $\{v\in P_{Z,j}: d[v] < \expectB_{X}, p_{Z, j} \in S_X\}$%\footnote{$S_X$ is exactly $S_i$ as $X$ denotes the execution of this round.} 
    into $S_{X}$ (\cref{line:bmssp:pick-from-Si}) expands $\abs{S_X}$ by at most $O(k)$ times.

    Therefore $\abs{S_X} \leq t^22^{l_{X}t}$, so the depth of tree $\mathcal{T}$ is at most $(\log n)/t$;
    \item We break when $\abs{\actualU_{X}} > t^32^{l_Xt} \geq t\abs{S_X}$. Intuitively, $\abs{\actualU_{X}}$ grow slowly enough so we still have $\abs{\actualU_{X}} = O(t^32^{l_Xt})$ (see~\cref{lemma:bmssp-correctness-size}).
    \item If $X$ is a full execution, then $\target_{X} = \actualU_{X}$. If $X$ is a partial execution, then $\abs*{\actualU_{X}} > t\abs{S_X}$.
    In both cases,
    % $\min\{\abs*{\target_{X}}, t\abs{S}\} \leq \abs{\actualU_X}$.
    by \cref{remark:find-pivots}, $p_X \leq \min\{\abs*{\target_X}/k, \abs{S_X}\}\leq \abs{\actualU_{X}} / k$. \label{item:recursion:both-size-constraint}
    \item \label{item:recursion:ui-disjoint} Suppose $Y_1, Y_2, \cdots, Y_f$ are the sub-calls invoked by $X$, i.e., the children of $X$ in $\mathcal{T}$. Then we have:
    \begin{itemize}
        \item $\actualU_{Y_1}, \actualU_{Y_2}, \cdots, \actualU_{Y_f}$ are disjoint. For $i < j$, $\distV()$ of vertices in $\actualU_{Y_i}$ are smaller than that of $\actualU_{Y_j}$.
        \item $\actualU_{X}$ is a disjoint union of $W'_X$ and $\actualU_{Y_1}, \actualU_{Y_2}, \cdots \actualU_{Y_f}$.
        \item The $\actualU_{X}$'s for all nodes at the same layer in $\mathcal{T}$ are disjoint, and total size of all these $\abs{\actualU_{X}}$ in each layer is $O(\abs{V})$.
    \end{itemize}
    \item For any vertex $u$, there is only at most one call in each layer whose $\actualU$ contains it.
    Thus, the calls $X$ whose $\actualU_{X}$ contains $u$ forms a path in $\mathcal{T}$: $R = X_0, X_1, \cdots, X_q$ such that: \label{item:recursion:relax-unique}
    \begin{itemize}
        \item For $0 \leq i < q$, $X_{i+1}$ is a sub-call of $X_{i}$;
        \item For $0 \leq i < q$, $u\in \actualU_{X_i}$; relaxations from $u$ (\cref{line:bmssp:edge-direct-insert}) lie in interval $[\expectB_{X_{i+1}}, \expectB_{X_i})$;
        \item For the last call $X_q$ containing $u$, either $u\in W'_{X_q}$ and relaxations from $u$ (\cref{line:bmssp:insert-w}) lie in interval $[\actualB_{X_q}, \expectB_{X_q})$;
        or $u\in \actualU_{X_q}$ where $X_q$ is a base case call and relaxations from $u$ (\cref{line:base:relax}) lie in interval $[0, \expectB_{X_q})$;
        \item So the intervals for $X_0, \cdots, X_q$ are disjoint, and any edge can be validly relaxed at most once, in \cref{line:bmssp:edge-direct-insert}, \cref{line:bmssp:insert-w}, or a base case call. %, among all calls in $\mathcal{T}$.
    \end{itemize}

\end{enumerate}

\paragraph{Correctness.} Base case is solved using Dijkstra's algorithm and is correct. Below we assume $l > 0$.

The framework of \cref{para:frontier-and-dijkstra-like-alg-framework} helps understand how our algorithm works and verify correctness.
In a call of \BMSSP, $\target = \target{\expectB, S}$ is the set that contains all vertex $v$ with $\distV(v) < \expectB$ whose shortest path visits some vertex in $S$, and is our expected target. Then \BMSSP should return $\actualU = \target$ in a successful execution, or $\actualU = \{u\in \target: \distV(u) < \actualB\}$ in a partial execution, with all vertices in $\actualU$ complete.

At the beginning, we reorganize $S$ as $\Theta(k)-$sized chunks $\{P_{X, j}\}_{j=1}^{p_X}$ to insert into $\ds$, with some vertices complete and added into $W$. \cref{lemma:find-pivots} ensures that the shortest path of remaining incomplete vertices in $\target$ visit some complete vertex in $P$, i.e., we push the frontier to $\frontierPair{W, P}$.

We do not insert all of $P$, but only the minimum vertex (``pivot'') of each $P_j$ , i.e., $p_j = \arg\min_{x\in P_j}d[x]$ into $\ds$.
Then the shortest paths of remaining incomplete vertices in $\target$ visit some complete vertex in $\ds \cup P$.
For any bound $\expectB_{i} \leq \expectB$, if $\distV(v) < \expectB_{i}$,
the shortest path of $v$ must also visit some complete vertex in $u\in \ds \cup P$ with $\distV(u) < \expectB_{i}$.
When we pull $S_i$ a subset from $\ds$, the pivots $p_j\in S_i$ helps augment $S$ to contain necessary vertices bounded by $\expectB_{i}$ from $\ds \cup P$.
Then we call sub-procedure \BMSSP on $\expectB_{i}$ and $S_i$.

By the inductive hypothesis, each recursive call on \cref{line:bmssp:sub-call} of \cref{alg:BMSSP} returns $\actualU_{i}$ and $\ds_i$, vertices in $\actualU_{i}$ are complete,
% and the shortest path of remaining incomplete vertices in $\target_i = \target{\expectB_{i}, S_i}$ visit some complete vertex in $\ds_i$.
% $\ds_i$ holds $\actualU_{i}$'s out-neighbors $x$ with $d[x] \in [\actualB_{i}, \expectB_{i})$ and remaining vertices in $P_{j}$.
and $\ds_i$ contains vertices remaining in $S_i$, and vertices validly relaxed from $U_i$ with distance less than $\expectB_{i}$.
What we do actually matches \cref{para:frontier-and-dijkstra-like-alg-framework}:
relax edge $(u, v)$ from $u\in U_i$, if $\actualB_i \leq\distV(u) + w_{uv} < \expectB_{i}$, $v$ is already in $\ds_{i}$; otherwise, $v$ is relaxed and inserted to $\ds$ directly.
% add $\ds_i$ into $\ds$ and relaxes edges from $u\in \actualU_{i}$ and inserts all the updated out-neighbors $x$ with $d[x] \in [\expectB_{i}, \expectB)$ into $\ds$,
Once again remaining incomplete vertices in $\target$ now visits some complete vertex in $\ds \cup P$.

% To preserve minimality of $p_j$,
If $p_j$ is in $U_i$ and removed,
we re-select a new minimum $p_j$ and insert it into $\ds$.
It becomes complicated when $p_j$ is not removed while still $d[v]$ gets decreased for some other $v\in P_j$, because we cannot afford to re-select a new $p_j$.
In this case we update $p_j$ to lower bound only a subset $\hat{P}_j \subseteq P_j$, the set of vertices in $P_j$ whose shortest path do not visit any complete vertex in $\ds$ --- we do not explicitly compute $\hat{P}_j$, merely introducing it for explanation.
Because we may rely on $\ds$, we do not need to keep track of vertices not in $\hat{P}_j$. This observation will be explained further in \cref{lemma:bmssp-correctness-frontier}.

After we quit the loop, if this is a full execution ($\actualB = \expectB$), we return $\ds = \emptyset$. If this is a partial execution, $\ds$ already includes all vertex $v$ relaxed from all $u\in\actualU_i$ which satisfies $\actualB \leq\distV(u) + w_{uv} < \expectB$. Then we first add back remaining vertices in $S$, whose distance labels are in $[\actualB, \expectB)$; they are premature vertices and need to be added back to frontier again in the upper layer. We also need to relax from $W' = \{x\in W \setminus \actualU: d[x] < \actualB \}$ which includes vertices whose shortest paths visits complete vertices in $Q$.

%We also relax from $W' = \{x\in W \setminus \actualU: d[x] < \actualB \}$.
%Here we explain why we find $W$ from the very beginning, but only process it at the end.
%Recall in \FindPivots, $W$ is generated by local Dijkstra searches which fail to reach $k$ vertices from their roots in $Q$.
% For vertices whose shortest path visits complete vertices in $Q$, they are already complete (denote by $W_{com}$); for other vertices, they may be reached by complete or incomplete vertices in $Q$ but their shortest path 
%For vertices reached by complete vertices in $Q$, they are already complete (denote by $W_{com}$);
%for those reached by incomplete vertices in $Q$, their shortest paths visit complete vertices in $P$, and they will be updated by the major while-loop of our algorithm (denote by $W_{inc}$).
%The problem is, we do not know which vertices in $Q$ are complete immediately after \FindPivots call.
%At the end when we quit the while-loop, we know $W_{inc}$ vertices whose distance label are less than $\actualB$ are already complete;
%and $W_{com}$ vertices are complete by \FindPivots from the beginning.
%Therefore, without knowing whether a vertex in $Q$ is complete,
%we know $\{x\in W: d[x] < \actualB\}$ is complete.
%Then we use it to update the frontier as in \cref{para:frontier-and-dijkstra-like-alg-framework}; excluding $\actualU$ because they have been updated in the while-loop.

\paragraph{Running time} The running time is dominated by calls of \FindPivots, the overheads inside the data structures $\ds$, selecting $p_j = \arg\min_{x\in P_j}d[x]$ from $P_j$'s and picking $\{x\in P_j: d[x] < \expectB_{i}\}$ for each $p_j \in S_i$. First we make an observation:
\begin{observation}\label{ob:relaxation}
    Any edge $(u, v)$ can only enter the execution of \cref{line:bmssp:edge-direct-insert} once in the whole algorithm.
    % Any edge $(u,v)$ can only be relaxed and $v$ added into $\ds$ on \cref{line:bmssp:edge-direct-insert} once in the whole algorithm.
\end{observation}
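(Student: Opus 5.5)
The plan is to exploit the fact that whenever an edge $(u,v)$ is relaxed on \cref{line:bmssp:edge-direct-insert}, the tail $u$ lies in some $\actualU_{i}$ returned by a sub-call, so $u$ is complete. We then need three facts: $u$ is handled in at most one call per layer, the intervals of $d[u]+w_{uv}$ checked in different calls are disjoint, and $d[u]+w_{uv}$ is a fixed value once $u$ is complete. Together these show the edge can pass the test of \cref{line:bmssp:edge-direct-check} in at most one call. Inside a single call, the edge also appears at most once, because the sets $\actualU_{Y_1},\dots,\actualU_{Y_f}$ of the children are disjoint (property~\ref{item:recursion:ui-disjoint}), so $u$ is in at most one $\actualU_i$ of that call.

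The first step is to make property~\ref{item:recursion:relax-unique} rigorous. We show by induction on the recursion that the calls $X$ with $u\in\actualU_X$ form a root-to-node path $R=X_0,X_1,\dots,X_q$ in $\mathcal{T}$. This rests on $\actualU_X$ being the disjoint union of $W'_X$ and the children's $\actualU_{Y}$. From it, $u$ reaches \cref{line:bmssp:edge-direct-insert} within call $X_a$ exactly when $u\in\actualU_{X_{a+1}}$ and $X_{a+1}$ is the child of $X_a$. That happens at most once for each $a<q$ and never at $X_q$ itself, since there $u\in W'_{X_q}$ or $X_q$ is a base case.

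The second step shows the relevant intervals are disjoint. In call $X_a$ the test requires $d[u]+w_{uv}\in[\expectB_{X_{a+1}},\expectB_{X_a})$, because the sub-call was invoked with bound $\expectB_{X_{a+1}}$, which is exactly $\expectB_i$ in $X_a$. Along the path the bounds are non-increasing, $\expectB_{X_{a+1}}\le\expectB_{X_a}$, since each pulled bound $\expectB_i$ is at most the caller's $\expectB$ by \cref{lemma:data-structure}. Hence the intervals $[\expectB_{X_{a+1}},\expectB_{X_a})$ for $a=0,\dots,q-1$ are pairwise disjoint. Before any of these relaxations, $u\in\actualU_{X_{a+1}}$ is complete by the inductive correctness of \cref{lemma:BMSSP} for the sub-call, so $d[u]=\distV(u)$ never changes afterwards. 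The quantity $d[u]+w_{uv}$ is therefore the same in every such call and lies in at most one of the intervals, which proves the claim.

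The main obstacle is the tie-breaking subtlety. Bounds are 4-tuples, and we must check that comparing $d[u]+w_{uv}$ against $\expectB_i$ in tuple order is consistent across calls. This holds because \Addition\ is deterministic once $d[u]$ is fixed. We also must ensure the argument does not silently assume the correctness statements of \cref{lemma:BMSSP} it is later used to prove. To avoid circularity, I would run the argument jointly with the induction on $l$ that proves \cref{lemma:BMSSP}, using only the completeness of $\actualU_i$ and the nesting $\expectB_i\le\expectB$, both available from the inductive hypothesis.
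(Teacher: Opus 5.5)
Your proposal is correct and follows the same route as the paper's proof. The paper also argues that the calls whose $\actualU$ contains $u$ form an ancestor--descendant chain, with $u$ in at most one child's $\actualU$ per call, and that $d[u]+w_{uv}$ is fixed once $u$ is complete while the bounds $\expectB$ are non-increasing down the chain. It phrases the final step as ``$\expectB_Z\le\expectB_X\le d[u]+w_{uv}$ for every descendant $Z$ of $X$'' rather than as pairwise disjointness of the intervals $[\expectB_{X_{a+1}},\expectB_{X_a})$, but the content is identical.
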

\begin{proof}
    This can only happen when $u$ becomes complete in a call $X$ which is a child of current call $Y$ in $\mathcal{T}$, so there is at most once in one layer. Other call $Z$ where $U_Z$ contains $u$ must be ancestor or descendant of $X$. Also in the current call $Y$ we need $\expectB_X \leq d[u]+w_{uv}< \expectB_Y$ to make \cref{line:bmssp:edge-direct-insert} happen, but for descendant $Z$ of $X$ (including $X$ itself) in $\mathcal{T}$, $\expectB_Z\leq \expectB_X$, so it is impossible that $d[u]+w_{uv}< \expectB_Z$. So \cref{line:bmssp:edge-direct-insert} will not happen in lower layers, thus only once in the whole recursion.
\end{proof}

By \cref{item:recursion:both-size-constraint} and \cref{lemma:find-pivots} above, $p_X \leq \abs{\actualU_{X}}/k$, so the total running time of \FindPivots over one layer of $\mathcal{T}$ is $O(\abs{V}\log t + \sum_{\text{$X$ in one layer of $\mathcal{T}$}}\abs{Q_X}t)$, since $t=k\log t>k\log k$.
To estimate $\sum_{X\in \mathcal{T}}\abs{Q_X}$:
\begin{itemize}
    \item If $X$ is a partial execution, $\abs{Q_X} \leq \abs{S_X} < \abs{U_X} / t$.
    \item If $X$ is a full execution, for any $v\in Q_X$, $v\in\actualU_X$.
    For $v$'s latest relaxed edge $(u, v)$ at the beginning of call $X$ (Line~\ref{line:bmssp:edge-direct-insert}, \ref{line:bmssp:insert-w}):
    \begin{itemize}
        \item Since $v\in S_X$, $u$ is not in $\actualU_X$.
        \item If there is another full execution $Y$ with $v\in Q_Y$, $X$ and $Y$ is of ancestor-descendant relationship. 
        \item We do not put vertices in $Q_X$ in $\ds_X$ directly, thus if $v\in Q_Y$ for a descendant $Y$ of $X$ which is also a full execution, its latest relaxation edge must be different from $(u, v)$.
    \end{itemize}
    %but $u$ must be in $\actualU_Y$ where $Y$ is the parent of $X$. So $Y$ is like a least common ancestor (LCA) of $u,v$, so this can only happen once for each edge $(u,v)$.
    %Edge $(u, v)$ will never be relaxed (\cref{item:recursion:relax-unique}) again. $v$ becomes complete after $X$, and will not be added back to $\ds_{X}$ until the next relaxation in \cref{line:bmssp:edge-direct-insert} or \cref{line:bmssp:insert-w}.
    % in sub-call $X'$ under the subtree of $X$ in $\mathcal{T}$.
    Thus the number of appearances of $v$ in $Q_X$ in all full executions $X$ is bounded by $\delta$, the degree bound of $v$. Therefore $\sum_{X \in \mathcal{T}} \abs{Q_X} = O(\abs{V}(\delta+l/t))$.
\end{itemize}
The total time for \FindPivots is $O(\abs{V}l\log t+ t\sum_{X\in \mathcal{T}}\abs{Q_X}) = O(mt +  (m\log n\log t )/ (t\delta))$. (Recall the number of vertices is $O(m/\delta)$.)

For the data structure $\ds$ in a call of \BMSSP, the total running time is dominated by \ins\ as all other operations take linear time.
It is clear that vertices added into $\ds_X$ include a subset of $S_X$, and some out-going neighbors of $\actualU_{X}$, so the size of $\ds_X$ is bounded by $\abs{S_X}+\abs{\actualU_{X}}\delta \leq t^42^{lt}$, so each insertion takes time $O(t)$.

\begin{itemize}
    \item For pivots $p_j$ to $\ds$ in \cref{line:bmssp:initial-pivot-insert}, $p_X \leq \abs{\actualU_{X}}/k$, so the time is $O(\abs{\actualU_{X}}t/k)=O(\abs{\actualU_{X}}\log t)$. %the total time for nodes over one layer of $\mathcal{T}$ is $O(\frac{m}{\delta} \frac{t}{k} ) = O(\frac{m\log t}{\delta})$. Summing over all depths we get $O(m\frac{\log n\log t}{\delta t})$.
    \item For vertices inserted into $\ds$ in \cref{line:bmssp:add-back-Si}, they are non-zero only when $X$ is a partial execution ($\actualB_{X} < \expectB_{X}$), such that $\abs{S_X} \leq \abs{\actualU_{X}} / t$, so the time is $O(t\abs{S_X}) \leq O(\abs{\actualU_X})$.
    %The sum over one layer of $\mathcal{T}$ is $O(m/\delta)$ and summing over all depths we get $O(m\frac{\log n}{\delta t})$.
    \item By \cref{ob:relaxation}, a vertex $v$ can only be inserted to $\ds$ by edge $(u, v)$ in \cref{line:bmssp:edge-direct-insert} once. Also a vertex $u$ can only be in $W'$ once as $W'\subseteq \actualU\setminus(\bigcup \actualU_i)$, so each edge $(u,v)$ can only be relaxed once in \cref{line:bmssp:insert-w}.
    
    %inserted to $\ds$ by edge $(u, v)$ in \cref{line:bmssp:edge-direct-insert} or \cref{line:bmssp:insert-w} in some call $X$, $(u, v)$ is relaxed at most once among all calls. In total the time is $O(mt)$.
\end{itemize}
Therefore, the total time for \cref{line:bmssp:initial-pivot-insert}, \cref{line:bmssp:add-back-Si}, \cref{line:bmssp:edge-direct-insert} and \cref{line:bmssp:insert-w} is $O(mt+(m \log t\log n)/ (t \delta))$.

Next we consider the time of picking (\cref{line:bmssp:pick-from-Si}), re-selection and insertion (\cref{line:bmssp:tree-pivot-insert}) of $X$. Similar to \cref{item:recursion:ui-disjoint}, suppose $Y_1, Y_2, \cdots, Y_f$ are sub-calls of $X$.

\begin{itemize}
    \item Picking $\{x\in P_j: d[x] < B_i\}$ (\cref{line:bmssp:pick-from-Si}) takes time $O(k)$ per $p_j\in S_{Y_i}$.

\begin{itemize}
    \item If $Y_i$ is a partial execution, time is bounded by $O(k\abs{S_{Y_i}}) = O(\abs{\actualU_{Y_i}})$;
    \item If $Y_i$ is a full execution, for any $p_j\in S_{Y_i}$, $p_j\in \actualU_{Y_i}$ and is removed from $P_j$, so $P_j$ needs to re-select and insert a new $p_j$ which takes $O(t)$ time.
    % greater than $O(k)$,
    We amortize this $O(k)$ to that operation as $k\leq t$. (If $P_j$ becomes empty, this happens at most one time for each $P_j$; these contribute to at most $O(kp)=O(\abs{\actualU_{X}})$.)
\end{itemize}
So this step takes time $O(\abs{\actualU_{X}})$ for $X$, and summing over all calls we get $O((m \log n)/(t\delta))$.

    \item Re-selection and insertion (\cref{line:bmssp:tree-pivot-insert}) take $O(t)$ time per $p_j$. For a call $X$,
\begin{itemize}
    \item If $X$ is a partial execution, time is bounded by $O(\abs{S_X}t)=O(\abs{\actualU_X})$.
    \item
    If $X$ is a full execution,
    % As $pt \leq \abs{\actualU} t/k = O(\abs{\actualU_{X}} \log t)$, we can directly afford $O(1)$ re-selection and insertions (\cref{line:bmssp:tree-pivot-insert}) for each $P_{X, j}$.
    for each re-selection and insertion (\cref{line:bmssp:tree-pivot-insert}) of each $P_{X, j}$, we want to establish an injection from them to one tree vertex of $F_{X, j}$ in each $U_{Y_i}$.
    The number of tree vertices in different $U_{Y_i}$'s is no more than one plus the number of ``cross-level'' edges in $F_{X, j}$, which are the edges connecting two different $\actualU_{Y_i}$'s, which are not used by full sub-calls (full $Y_i$ only uses edges connecting two vertices within $\actualU_{Y_i}$), nor callers of $X$ ($X$ is a full execution), such that these edges are globally uniquely consumed by $X$.

    % For each $P_{X, j}$, consider each $p_{X, j}$:
    % it is itself a tree vertex of $F_{X, j}$ and may belong to some $U_{Y_i}$, but this is not an injection, because $p_{X, j}$ may be swapped to other vertices (\cref{line:bmssp:update-pj-direct-insert}) such that two different $p_{X, j}$ may belong to the same $U_{Y_i}$.

    Instead of each $p_{X, j}$ in \cref{line:bmssp:tree-pivot-insert} itself, we think of its former pivot of $P_{X, j}$, $p'_{X, j}$, which is in $U_{Y_i}$ and was removed from $D$ in $i$-th round, so $j$ is put into $J$. Each $p_{X, j}$ uniquely corresponds to $p'_{X, j}$, and each $p'_{X, j}$ belongs to some unique $\actualU_{Y_i}$: there is at most one pivot $p'_{X, j}$ removed from $D$ in each $\actualU_{Y_i}$. 
    
    % For each $\actualU_{Y_i}$ where $P_{X, j}$ re-selected and inserted a new $p_{X, j}$ (\cref{line:bmssp:tree-pivot-insert}), there exists exactly one old $p_{X, j}'\in \actualU_{Y_i}$ which was removed (\cref{line:bmssp:pivot-remove-add-j}).

    % For each $P_{X, j}$, consider all the $p_{X, j}$'s which are re-selected and inserted into $D$ from $P_{X, j}$ (\cref{line:bmssp:tree-pivot-insert}), then each $\actualU_{Y_i}$ should contain at most one such $p_{X, j}$.
    
    % For each $p_{X, j}$ re-selected and inserted from $P_{X, j}$ over different rounds in $X$, it must correspond to exactly one old $p_{X, j}' \in \actualU_{Y_i}$ which was removed (\cref{line:bmssp:pivot-remove-add-j}).
    
    % Suppose there are $g_j$ such $p_{X, j}$ vertices for $P_{X, j}$ in the current call of $X$. They distributes in $g_j$ distinct $U_{Y_i}$'s. Then in each $F_j$, there exists at least $g_j-1$ ``cross-level'' edges, which are the edges connecting two different $U_{Y_i}$'s.
    % These edges are not used by sub-calls (full $Y_i$ only uses edges connecting two vertices within $U_{Y_i}$), nor callers of $X$ (because $X$ is a full execution).

    Therefore, we amortize the time of re-selection and insertion (\cref{line:bmssp:tree-pivot-insert}) for $P_{X, j}$ to edges, such that each edge is amortized at most once in the whole algorithm.
    (Except once for each $P_{X, j}$ which sums up to $O(p_{X}t)=O(\abs{\actualU_{X}}\log t)$.)
    %by some call among all calls.
\end{itemize}
Summing over all calls we get $O((m\log n \log t)/(t\delta)+mt)$. %\duan{originally $O(m(t+\frac{\log n\log t}{\delta t}))$.}
\end{itemize}

In total, the algorithm runs in $O(m(t + \frac{\log n\log t}{\delta t}))$ time, dominated by \FindPivots. 

\subsection{Correctness Analysis of BMSSP}

In this section we verify \cref{alg:BMSSP} fits the Dijkstra-like algorithm framework proposed in \cref{para:frontier-and-dijkstra-like-alg-framework}.

\cref{lemma:frontier-split} explains why we can invoke a recursive call on a smallest set extracted from the frontier.
\begin{lemma}[Frontier Split]
\label{lemma:frontier-split}
Suppose at some moment, $\frontierPair{X, Y}$ is a frontier for $\target = \target{\anyB, S}$. For a smaller $\anyB_{s} \leq \anyB$, suppose $\{x\in Y: d[x] < \anyB_{s}\} \subseteq Y_{s} \subseteq \target$.
Then $\frontierPair{\emptyset, Y_s}$ is a frontier for $\target{\anyB_{s}, Y_{s}}$.
\end{lemma}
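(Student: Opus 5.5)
We unfold the definition of a frontier directly. Since the first component is $\emptyset$, we must show that every $v\in\target{\anyB_s, Y_s}$ has a shortest path that visits some \emph{complete} vertex of $Y_s$. Fix such a $v$. Then $\distV(v)<\anyB_s\le\anyB$, and the shortest path of $v$ visits some $y_0\in Y_s\subseteq\target=\target{\anyB,S}$. Because the shortest path of $y_0$ is a prefix of that of $v$ and visits $S$, the shortest path of $v$ also visits $S$; hence $v\in\target$. (This is the same argument as item 3 of \cref{lemma:frontier-observation}.)

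Since $\frontierPair{X,Y}$ is a frontier for $\target$, one of the two conditions holds for $v$.

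\emph{Condition 2.} The shortest path of $v$ visits some complete $y\in Y$. As $y$ is a prefix vertex, $\distV(y)\le\distV(v)<\anyB_s$. Since $y$ is complete, $d[y]=\distV(y)<\anyB_s$, so $y\in\{x\in Y: d[x]<\anyB_s\}\subseteq Y_s$. Thus $y$ is the required complete vertex of $Y_s$.

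\emph{Condition 1.} Here $v\in X$ and $v$ is complete, and we cannot directly use $Y$. We instead use $y_0$: since $y_0$ lies on the shortest path of the complete vertex $v$, $y_0$ is itself complete. To justify this, note that $d[y_0]\ge\distV(y_0)$ always. Moreover, completeness of $v$ under the tie-breaking rule of Algorithm~\ref{alg:relaxation} means the recorded path to $v$ is the unique shortest path, and every vertex on that path must have had its label set correspondingly, since labels only come from relaxations along actual paths. This is the same fact used in item 4 of \cref{lemma:frontier-observation} ("$y$ must be complete as $v$ is complete"), so we invoke it. Then $y_0\in Y_s$ is complete and lies on the shortest path of $v$, which finishes this case.

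The only delicate point is Condition 1, namely that a vertex on the shortest path of a complete vertex is complete. We rely on the same reasoning the paper already uses in \cref{lemma:frontier-observation}. Everything else is monotonicity of distances along shortest-path prefixes together with the hypothesis $\{x\in Y:d[x]<\anyB_s\}\subseteq Y_s$.
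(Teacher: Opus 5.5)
Your argument is correct, but it is organized differently from the paper's. You first show $v\in\target$ and then split on which frontier condition $v$ itself satisfies. The paper never needs $v\in\target$; it splits instead on whether the vertex $y\in Y_s$ on $v$'s shortest path is complete. If $y$ is complete, the paper is done. If not, $y\in Y_s\subseteq\target$ is an \emph{incomplete} vertex of $\target$, so condition~1 is impossible for $y$. The frontier property applied to $y$ (rather than $v$) then gives a complete $y'\in Y$ on $y$'s shortest path, hence on $v$'s. Finally $y'\in Y_s$ by exactly your Condition~2 computation $d[y']=\distV(y')\le\distV(v)<\anyB_s$.

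What the paper's decomposition buys is that it uses only the definition of a frontier and monotonicity of $\distV$ along prefixes, so it never meets your ``delicate point''. Your Condition~1 case instead relies on the fact that every vertex on the shortest path of a complete vertex is complete. That fact does hold under the paper's tie-breaking: the label of $v$ was last set by relaxing from its true predecessor while that predecessor carried its exact label, labels never increase, and one inducts along the path. You left the non-increasing step implicit, and it is needed. The paper itself uses this fact in item~4 of the Observation, so invoking it is legitimate. But it is an invariant of the relaxation mechanics and the tie-breaking rather than of the frontier notion, and it would fail if completeness were judged by path length alone.

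You could also have avoided it inside your own case split. In Condition~1, if $y_0$ is incomplete, apply the frontier property to $y_0$ instead of $v$ --- which is precisely the paper's proof.
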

\begin{proof}
For any $v\in \target{\anyB_{s}, Y_{s}}$, its shortest path visits some $y\in Y_s$.
If $y$ is complete, then we are done.
% If $v$ is complete, then $x$ is complete.
If $y$ is incomplete,
as $y\in Y_s \subseteq \target$, the shortest path of $y$, as a prefix of the shortest path of $v$, must visit some complete $y'\in Y$. Then the shortest path of $v$ visits $y'$ and $y' \in Y_s$ as
$d[y'] = \distV(y') \leq \distV(v) < \expectB_s$.
% as $\target{\anyB_{s}, Y_s} \subseteq \target$, its shortest path visits some complete $x'\in Y$. Since $d[x']=\distV(x') < \distV(v) < \anyB_{s}$, $x'\in Y_s$.
\end{proof}

We formally define the set $R$ mentioned in \cref{para:frontier-and-dijkstra-like-alg-framework}.
For a \emph{complete} set $Z$ and any bound $\anyB$, after relaxing the outgoing edges from $Z$ (not exceeding $\anyB$), 
denote $R(\anyB, Z)$ as the set of out-neighbors $v$ of $Z$ but not in $Z$ with distance less than $\anyB$, and relaxations from some vertex $u\in Z$ to $v$ is valid.
Equivalently, after relaxing outgoing edges from $Z$,
$R(\anyB, Z) = \{v: \exists u\in Z, (u, v)\in E, v\notin Z, d[v] = d[u] + w_{uv} < \anyB\}$.
% denote $R(\anyB, Z) = \{v \notin Z : \exists u\in Z, d[u] + w_{uv} = d[v] < \anyB\}$.
% When we mention this set $R$, there are two concepts: 
% (i) the set of vertices whose shortest path visits $Z$; especially, the last vertex on this path is a vertex in $Z$;
% (ii) the set of vertices, which can be validly relaxed from $Z$;
% The set of (i) are the vertices we need, and is a concept predetermined by the input, but we do not know which are they;
% The set of (ii) are the vertices we have, and we know (i) is included in (ii), and we just have the time to insert everything in (ii).

Denote $\hat{P}_j$ the set of vertex in $P_j$ whose shortest path does not visit any complete vertex in $\ds$.
% $\hat{P}_j$ could be informally interpreted as ``roots'' of vertices in $P_j$.
Also denote $P = \bigcup_{j=1}^{p}P_j$; $\hat{P} = \bigcup_{j=1}^{p}\hat{P}_j$.
We define them to be up-to-date through algorithm progress.
% They are merely for proofs but not actually implemented.
Clearly, any shortest path visits some complete vertex in $\ds \cup P$, if and only if it visits some complete vertex in $\ds \cup \hat{P}$.

We introduce $\hat{P}_j$ because for some $v\in P_j$,
after partial $i-$th sub-call,
$d[v]$ may decrease but $v$ is not in $\ds_i$ or $U_i$ such that we cannot maintain $p_j$ to minimum among $P_j$.
This is because in a partial execution as we will see in \cref{lemma:bmssp-correctness-frontier},
$\ds$ contains $S\setminus U$ and only a thin shell of out-going neighbors from $U$,
while vertices in $\target$ two or more hops away from $\actualU$ may be modified but not returned.
Fortunately, the shortest path of such vertices visit some complete vertex in $\ds$ such that we do not need to track them from $P_j$.

% Because $P$ and $P_j$ vary through the algorithm,
We denote $P_{ini}$ the initial value of $P$, i.e., $\bigcup_{j=1}^{p}P_j$ returned by \FindPivots.

% Why do we have to introduce $W'$? Because their sources $Q$

\begin{lemma}[Correctness of BMSSP, focusing on frontier and completeness]
\label{lemma:bmssp-correctness-frontier}
Given an upper bound $\expectB$, a vertex set $S$,
suppose $\frontierPair{\emptyset, S}$ is a frontier for $\target = \target{\expectB, S}$.
After running \cref{alg:BMSSP}:
\begin{enumerate}
    \item $\frontierPair*{\actualU, \ds}$ is a frontier for $\target$;
    \item $\max_{x\in \actualU} \distV(x) < \actualB \leq d_{B}[\ds]$, $\actualU = \target{\actualB, S}$ and is complete;
    \item $\ds$ contains $S\setminus \actualU$ and every out-neighbor $v$ of $\actualU$ but not in $\actualU$ validly relaxed from $u\in \actualU$ with $d[u] + w_{uv} = d[v] < \expectB$.
    Namely, $\ds \supseteq (S \setminus \actualU) \cup R(\expectB, \actualU)$.
\end{enumerate}
\end{lemma}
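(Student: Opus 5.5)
We will argue by induction on the layer $l$. The base case $l=0$ is Dijkstra's algorithm (\cref{subsec:base-case}), which directly gives all three items. For $l>0$, we maintain a loop invariant across iterations of the while-loop (\cref{line:bmssp:while-start}) and then check the finalizing steps separately. The invariant, stated at the start of iteration $i$ (after the previous $\dirtyPj$ processing), is:
\begin{enumerate}[(a)]
\item $\frontierPair{\actualU\cup W_c, \ds\cup\hat P}$ is a frontier for $\target$, where $W_c$ denotes the complete part of $W$ (in the sense of \cref{lemma:find-pivots});
\item $\actualU=\target{\actualB,S}$ is complete, $\max_{x\in\actualU}\distV(x)<\actualB\le d_{\expectB}[\ds]$, and every vertex of $\hat P_j$ has $d[\cdot]\ge d[p_j]\ge\actualB$;
\item $\ds\supseteq R(\expectB,\actualU)$, restricted to vertices at distance at least $\actualB$.
\end{enumerate}
Initially (a) follows from \cref{lemma:find-pivots} with $\actualU=\emptyset$ and $\hat P=P$, and (b) holds because $\actualB_0$ is the minimum over the pivots.

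For the inductive step, let $Y_s := S_i$. The pull returns the smallest keys of $\ds$ below $\expectB_i$. Via (b), the augmentation in \cref{line:bmssp:pick-from-Si} adds every $v\in\hat P$ with $d[v]<\expectB_i$, since any such $v$ lies in some $\hat P_j$ whose pivot satisfies $d[p_j]\le d[v]<\expectB_i$ and so was pulled. Hence $\{x\in\ds\cup\hat P: d[x]<\expectB_i\}\subseteq S_i\subseteq\target$. Applying \cref{lemma:frontier-split} shows that $\frontierPair{\emptyset,S_i}$ is a frontier for $\target{\expectB_i,S_i}$, so the induction hypothesis applies to the subcall. It yields $\actualU_i=\target{\actualB_i,S_i}$ complete, a frontier $\frontierPair{\actualU_i,\ds_i}$, and $\ds_i\supseteq (S_i\setminus\actualU_i)\cup R(\expectB_i,\actualU_i)$. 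After the merge, the relaxations with $d[u]+w_{uv}\in[\expectB_i,\expectB)$ insert the remaining part of $R(\expectB,\actualU_i)$. The framework argument of \cref{para:frontier-and-dijkstra-like-alg-framework} with $Z=\actualU_i$ then restores (a) and (c) for $\actualU\cup\actualU_i$.

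For (b), we need $\actualU\cup\actualU_i=\target{\actualB_i,S}$. One inclusion is by induction. For the other, take $v\in\target$ with $\distV(v)<\actualB_i$. Its shortest path visits a complete vertex of $\ds\cup\hat P$ of distance below $\expectB_i$, so that vertex lies in $S_i$, and hence $v\in\actualU_i$. The hard part is the $\hat P_j$ bookkeeping. If $p_j$ is removed, it is re-selected as the true minimum of $P_j$. Otherwise, a vertex $v\in P_j$ whose label dropped either was relaxed directly in \cref{line:bmssp:edge-direct-insert}, in which case $p_j$ is updated, or had its label lowered inside the subcall. In the latter case $v$ is either in $\ds_i$, hence in $\ds$, or its shortest path passes through a complete vertex of $\ds_i$, which is exactly why $v$ leaves $\hat P_j$. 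I expect this case analysis to be the main obstacle. I would try to prove it by using item 3 of the induction hypothesis for the subcall: any vertex modified by the subcall but not returned lies two or more hops beyond $\actualU_i$, so its true shortest path crosses $R(\expectB_i,\actualU_i)\subseteq\ds_i$.

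Finally, consider the finalizing step. If $\ds$ is empty, then by (a) together with \cref{lemma:frontier-observation}(5) we get $\actualB=\expectB$. Otherwise we add back the vertices $x\in S$ with $d[x]\in[\actualB,\expectB)$, which gives $S\setminus\actualU\subseteq\ds$ once $W'$ is accounted for. The set $W'=\{x\in W\setminus\actualU:d[x]<\actualB\}$ is complete. Vertices of $W$ reached from complete roots in $Q$ are complete by \cref{lemma:find-pivots}; those reached from incomplete roots have their shortest paths through $\hat P\cup\ds$ and, being below $\actualB$, already lie in $\actualU$. Moreover $W'\subseteq\target{\actualB,S}$. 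Relaxing edges out of $W'$ and applying the framework step once more with $Z=W'$ yields all three items, with $\hat P$ absorbed into $\ds$ through the add-back.
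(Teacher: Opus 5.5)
Your proof has a genuine gap in the handling of $Q$ and $W$. Invariant (b) asserts $\actualU=\target{\actualB,S}$ throughout the loop, and this already fails at initialization. Suppose some root $q\in Q$ is complete, has no out-edges, and satisfies $d[q]<\actualB_0=\min_j d[p_j]$. Then $q\in\target{\actualB_0,S}$ while $\actualU=\emptyset$. Such a $q$ is not in $P$ and need never be pulled, so it enters $\actualU$ only at the end, through $W'$. Your inductive step for this equality (``its shortest path visits a complete vertex of $\ds\cup\hat P$'') silently drops the alternative $v\in W_c$ that your own invariant (a) allows.

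The more serious problem is that the finalization does not follow from (a). Because (a) puts $W_c$ on the complete side of the frontier, it records nothing about why complete vertices of $W$ at distance at least $\actualB$ are reachable from $\ds$. After the add-back, applying the framework step with $Z=W'$ to (a) gives only $\frontierPair{\actualU\cup W_c\cup W',\ds}$. Nothing in your argument removes $W_c\setminus(\actualU\cup W')$ from the first component, which item 1 requires.

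The paper avoids this by stating the loop invariants only for $\target_P=\target{\expectB,P_{ini}}$ and treating $Q$ after the loop. Every $v\in\target\setminus\target_P$ has a shortest path through some $q\in Q$ that was complete from the start. Since $q\notin\target_P\supseteq\actualU$, $q$ is not in $\actualU$. So either $q\in W'$, whose out-edges are relaxed at the end, or $q\in Q^*=\{x\in Q:d[x]\in[\actualB,\expectB)\}$, which the add-back loop (\cref{line:bmssp:add-back-Si}) reinserts into $\ds$. This use of the add-back for $Q$ is the missing idea; your sketch uses the add-back only to absorb $\hat P$.

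Within your setup there is another repair. First prove items 2 and 3, using (a), (c), the add-back and the $W'$ relaxations. Then derive item 1 from them and the initial frontier $\frontierPair{\emptyset,S}$: follow the shortest path of $v$ from its complete vertex in $S$ (which lies in $\ds$ if it is not in $\actualU$) to its last vertex in $\actualU$; the successor of that vertex lies in $R(\expectB,\actualU)\subseteq\ds$.

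Two smaller points:
\begin{enumerate}
\item A vertex in the local tree of a complete root need not be complete, since its true shortest path may bypass that root through $P$. Invariant (a) gives completeness of $W'$ directly, so this does not break the argument, but the stated dichotomy is wrong.
\item For the $\hat P_j$ step, the right tool is item 1 of the subcall, not the ``two hops beyond $\actualU_i$'' picture. A vertex of $\hat P_j$ whose label drops in the subcall lies in $\target{\expectB_i,S_i}$. So it is either in $\actualU_i$, or its shortest path meets a complete vertex of $\ds_i$, and that vertex may lie in $S_i\setminus\actualU_i$ rather than in $R(\expectB_i,\actualU_i)$.
\end{enumerate}
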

\begin{proof}
We prove by induction the following three propositions: at the beginning of $i-$th iteration (immediately before \cref{line:bmssp:pull}, or equivalently, immediately after \cref{line:bmssp:while-end}, at the end of $(i-1)-$th iteration):
\begin{enumerate}
    \item $\frontierPair*{U, \ds \cup P}$ is a frontier for $\target_{P}\bydef\target{\expectB, P_{ini}}$;
    \item $d_{B}[\ds \cup P] \geq B' (=B_{i-1}')$;
    \item For each non-empty $P_j$, $d[p_j] \leq d_{\expectB}[\hat{P}_j]$.
\end{enumerate}
Clearly they hold for the first iteration by \cref{lemma:find-pivots} and \cref{lemma:frontier-observation}. Now suppose they hold at the beginning of $i-$th iteration, we prove that they hold at the end of the $i-$th iteration.

We first show that: the prerequisite conditions of \cref{lemma:bmssp-correctness-frontier} are met for the $i-$th sub-call (\cref{line:bmssp:sub-call}).

Since $\frontierPair*{\actualU, \ds\cup P}$ is a frontier for $\target_{P}$, $\frontierPair*{\actualU, \ds \cup \hat{P}}$ is also a frontier for $\target_{P}$.
$\ds$ pulls $S_i$ bounded by $\expectB_i$ whereas $\expectB_i$ lower bounds remaining vertices in $\ds$ (\cref{line:bmssp:pull}).
Picking vertices in $P_j$ (\cref{line:bmssp:pick-from-Si}) adds every $v\in \hat{P}$ with $d[v] < \expectB_i$ into $S_i$, because $d[p_j]$ lower bounds each $\hat{P}_j$.
Thus, $S_i$ contains every $v\in \ds \cup \hat{P}$ with $d[v] < \expectB_i$.
By \cref{lemma:frontier-split}, $\frontierPair*{\emptyset, S_i}$ is a frontier for $\target{\expectB_i, S_i}$.
% the prerequisite conditions of \cref{lemma:bmssp-correctness-frontier} are met by the sub-call (\cref{line:bmssp:sub-call}).

Clearly base case satisfies \cref{lemma:bmssp-correctness-frontier}.
By induction, the sub-calls satisfy the conclusions of \cref{lemma:bmssp-correctness-frontier}.

% Hence, after $i-$th sub-call:
% $\actualU_i$ is complete;
% $\ds_i \supseteq (S_i \setminus \actualU_i) \cup R(\expectB_i, \actualU_i)$.

% Subsequent operations in $i-$th iteration is indeed using complete set $U_i$ to update its frontier $\frontierPair*{\actualU, \ds \cup P}$ to $\frontierPair*{\actualU \cup \actualU_{i}, ((\ds \cup P)\setminus \actualU_{i})\cup R(\expectB, \actualU_{i})}$ as described in \cref{para:frontier-and-dijkstra-like-alg-framework}:

Subsequent operations in $i-$th iteration is indeed using complete set $U_i$ to update its frontier $\frontierPair*{\actualU, \ds \cup P}$ as described in \cref{para:frontier-and-dijkstra-like-alg-framework}:
\begin{enumerate}
    \item $U$ obtains vertices in a complete set $\actualU_i$ (\cref{line:bmssp:while-end});
    \item $\ds \cup P$ loses \emph{only} vertices in $U_i$:
    \begin{itemize}
        \item $P$ loses vertices $U_i$ (\cref{line:bmssp:remove-u-from-pj});
        \item $\actualU_i$ is not in $\ds$, no need to remove: $\forall u\in \actualU_{i}$, $d[u] = \distV(u) < \actualB \leq d_{B}[\ds]$.  
        \item $S_i \setminus \actualU_i$ was removed from $\ds\cup P$, but merged back into $\ds$ again as $\ds_i\supseteq S_i\setminus U_i$ (\cref{line:bmssp:merge});
        % \item $\hat{P}$ might lose some vertex, the shortest paths of these vertices visits some complete vertex in $\ds$
    \end{itemize}
    \item $\ds \cup P$ receives $R(\expectB, \actualU_i)$, the set of out-neighbors $v$ of $Z$ but not in $Z$ validly relaxed from complete vertex $u\in \actualU_i$ with $d[u] + w_{uv} = d[v] < \expectB$:
    \begin{itemize}
        \item If $d[u] + w_{uv} < \expectB_i$, $v$ is included by $\ds_i$ and merged into $\ds$ (\cref{line:bmssp:merge});
        \item If $d[u] + w_{uv} \in [\expectB_i, \expectB)$, $v$ is directly inserted into $\ds$ (\cref{line:bmssp:edge-direct-insert});
    \end{itemize}
\end{enumerate}
Therefore, at the end of the $i-$th iteration, $\frontierPair*{\actualU, \ds \cup P}$ is a new frontier for $\target_{P}$.

Now we verify that: at the end of the $i-$th iteration, $d_{\expectB}[\ds\cup P] \geq \actualB (=\actualB_{i})$:
\begin{itemize}
    \item For $\ds$, after \pull\ (\cref{line:bmssp:pull}), $d_{\expectB}[\ds] \geq \expectB_{i}$. For any $v$ added into $\ds$, $d[v]\geq \actualB_{i}$. Finally, $d_{\expectB}[\ds] \geq \actualB_{i}$;
    \item For $\hat{P}$, we picked out every $v\in \hat{P}$ with $d[v] < \expectB_{i}$ (\cref{line:bmssp:pick-from-Si}), so for the remaining vertices, $d_{\expectB}[\hat{P}] \geq \actualB_{i}$.
    \item For $P \setminus \hat{P}$, their shortest paths visit some complete vertices in $\ds \cup \hat{P}$, $d_{B}[P\setminus \hat{P}]\geq d_{\expectB}[\ds \cup \hat{P}] \geq \actualB_{i}$.
\end{itemize}

Now we verify that: at the end of the $i-$th iteration, for each non-empty $P_j$, $d[p_j] \leq d_{\expectB}[\hat{P}_j]$.
\begin{enumerate}
    \item If a new $p_j$ is newly re-selected from $P_j$ (\cref{line:bmssp:tree-pivot-insert}), clearly $p_j = d_{B}[P_j] \leq d_{B}[\hat{P}_j]$.
    \item Otherwise, $p_j$ is not in $U_i$ and not removed from $P$.
    For any $v\in \hat{P}_j$, at the beginning, $d[p_j] \leq d[v]$. 
    \begin{itemize}
        \item If $d[v]$ is modified by $i-$th sub-call, then $v\in \target_i=\target{\expectB_i, S_i}$. As $\frontierPair*{\actualU_i, \ds_i}$ is a frontier of $\target_i$:
        \begin{itemize}
            \item If $v\in U_i$, $v$ is removed from $P_j$, so $v\notin \hat{P}_j$;
            \item If the shortest path of $v$ visits some complete vertex in $\ds_i$ merged into $\ds$, $v\notin \hat{P}_j$.
        \end{itemize}
        \item If $d[v]$ is modified by valid relaxation from $u\in U_i$ where $d[u] + w_{uv} = d[v] \geq B_i$ (\cref{line:bmssp:edge-direct-insert}), then $p_j$ is chosen to be the smaller one (\cref{line:bmssp:update-pj-direct-insert}): $d[p_j] \leq d[v]$;
        \item Otherwise, $d[v]$ does not change over $i-$th iteration, at the end, $d[p_j]\leq d[v]$.
    \end{itemize}
\end{enumerate}

The induction proof of three propositions are complete.
Finally, we proceed to prove \cref{lemma:bmssp-correctness-frontier}.

Suppose in total, there are $f$ iterations.
At the end of the $f-$th iteration, we have that:
$\frontierPair{\actualU, \ds \cup P}$ is a frontier for $\target_{P}=\target{\expectB, P_{ini}}$; $d_{\expectB}[\ds \cup P] \geq \actualB$.
We verify the following statements sequentially:
\begin{itemize}
    \item At the end of the $f-$th iteration, $W' = \{x\in W \setminus \actualU: d[x] < \actualB\}$ is complete.

    If $v\in W' \subseteq W$ is incomplete, because $\frontierPair{W, P_{ini}}$ is a frontier for $\target$, $v$ visits some complete vertex in $P_{ini}$. Then $v\in \target_{P}$, so the shortest path of $v$ visits some complete vertex $u \in \ds \cup P$, such that $\distV(v) \geq \distV(u) = d[u] \geq d_{\expectB}[\ds \cup P] \geq \actualB$, but this violates with $d[v] < \actualB_{i}$ by its definition.
    
    Thus $W'$ is complete.
    
    \item At the end of the $f-$th iteration,
    for any vertex $v$ in $\target$, $v$ is complete, or the shortest path of $v$ visits some complete vertex in $\ds \cup P$; or the shortest path of $v$ visits some complete vertex in $q\in Q$, such that $d[q] = \distV(q) < \actualB$ and $q\in W'$, or $d[q] = \distV(q) \in [\actualB, \expectB)$.

    In other words:
    $\frontierPair{\actualU, \ds \cup P \cup W' \cup Q^*}$ is a frontier for $\target$, where $Q^* = \{x\in Q: d[x] \in [\actualB, \expectB)\}$.

    For any $v\in \target$, if $v\in \target_{P}$, then $v$ is handled by $U$ and $\ds \cup P$.
    If $v\in \target \setminus \target_{P}$, the shortest path of $v$ visits some $q\in Q= S \setminus P_{ini} \subseteq W$ which has been complete at the beginning of this call.
    Currently $\actualU \subseteq \target_{P}$, so $q\notin \actualU$. Thus, $q\in W'$ or $q\in Q^*$.
    
    \item After \cref{line:bmssp:add-back-Si}, $\frontierPair{\actualU, \ds \cup W'}$ is a frontier of $\target$.
    
    For every vertex $v\in P$, $d[v]\geq d_{\expectB}[P]\geq \actualB$ and is inserted back to $\ds$ (\cref{line:bmssp:add-back-Si}). 
    For every vertex $v\in Q^*$, as $d[v]\geq \actualB$, $v$ is also inserted back to $\ds$ (\cref{line:bmssp:add-back-Si}). 

    \item When \cref{alg:BMSSP} returns, $d_{\expectB}[\ds] \geq \actualB$;
    
    Because all values inserted into $\ds$ after the $f-$th iteration (after \cref{line:bmssp:while-end}) are at least $\actualB$, $d_{\expectB}[\ds] \geq \actualB$.
    
    \item When \cref{alg:BMSSP} returns, $\frontierPair{\actualU, \ds}$ is a frontier of $\target$.
    
    The remaining part of the algorithm (from \cref{line:bmssp:define-w-prime} to \cref{line:bmssp:insert-w}) is indeed using complete set $W'$ to update its frontier $\frontierPair{\actualU, \ds \cup W'}$ to $\frontierPair{\actualU, \ds}$ as described in \cref{para:frontier-and-dijkstra-like-alg-framework}: $\max_{x\in W'}d[x] < \actualB \leq d_{\expectB}[\ds]$ so no need to remove $W'$ from $U$; all out-neighbors of $W'$ but not in $W'$ with distance less than $\expectB$ but validly relaxed by $W'$, i.e., vertices in $R(\expectB, W')$, are inserted into $\ds$.

    \item When \cref{alg:BMSSP} returns,  $U$ is complete;
    
    % As $d_{\expectB}[\ds] \geq \actualB$, by \cref{lemma:frontier-observation}, $\actualU$ is complete.

    By induction, $U_i$ are complete; and we have shown $W'$ is complete, so $U$ is complete.
    
    \item When \cref{alg:BMSSP} returns, $\max_{x\in \actualU}\distV(x) < \actualB$ and $\actualU = \target{\actualB, S}$.
    
    By induction,
    $\max_{x\in \actualU}\distV(x) \leq \max\{\max_{x\in \actualU_{f}}\distV(x), \max_{x\in W'}\distV(x)\} < \actualB_{f} = \actualB$.

    By \cref{lemma:frontier-observation} and since $\frontierPair{\actualU, \ds}$ is a frontier for $\target$ such that $d_{\expectB}[\ds] \geq \actualB$, we have $\target{\actualB, S} \subseteq \actualU$;
    on the other hand, since $\max_{x\in \actualU}\distV(x) < \actualB$, $\actualU \subseteq \target{\actualB, S}$. Thus $\actualU = \target{\actualB, S}$.

    \item When \cref{alg:BMSSP} returns, $\ds$ contains $S \setminus \actualU$ and every out-neighbor $v$ of $\actualU$ but not in $\actualU$ validly relaxed from $u\in \actualU$ with $d[u] + w_{uv} = d[v] < \expectB$, i.e., every vertex in $R(\expectB, \actualU)$.

    % As all values inserted into $\ds$ after the $f-$th iteration are at least $\actualB$,
    % Moreover, by definition, $\max_{x\in \actualU}\distV(x) \leq \max\{\max_{x\in \actualU_{i}}\distV(x), \max_{x\in W'}\distV(x)\} \leq \actualB$.
    
    For any $v\in S\subseteq \target$, if $d[v] < \actualB \leq d_{\expectB}[\ds]$, then $v\in \target{\actualB, S} \subseteq \actualU$.

    Thus $S \setminus \actualU \subseteq \{x\in S: d[x] \in [\actualB, \expectB)\} \subseteq \ds$.
    
    For every out-neighbor $v$ of $U$ but not in $\actualU$ validly relaxed from $u\in U$  with $d[u] + w_{uv} = d[v] < \expectB$,
    by $\actualU = \target{\actualB, S}$ we know that 
    $d[v] \geq \actualB$;
    and we can verify $v\in \ds$ by direct insertion or merge (\cref{line:bmssp:edge-direct-insert}, \cref{line:bmssp:merge}, \cref{line:bmssp:insert-w}).
    Thus, $\ds \supseteq (S \setminus \actualU) \cup R(\expectB, \actualU)$.
    
\end{itemize}
\end{proof}

\begin{lemma}[Correctness of BMSSP, focusing on size constraints]
\label{lemma:bmssp-correctness-size}
Under the same conditions of \cref{lemma:bmssp-correctness-frontier}, $\abs{\actualU} = O(t^32^{lt})$. Especially, for partial execution, $\abs{\actualU} = \Theta(t^32^{lt})$. Every insertion in \cref{alg:BMSSP} takes $O(t)$ time.

For $\actualU_{i}$'s returned by the sub-calls under one single call of \BMSSP, they are disjoint. As a result, $\actualU$ is a disjoint union of $\actualU_{i}$'s and $W'$.
\end{lemma}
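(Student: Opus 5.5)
We argue by induction on the layer $l$, proving all four claims together: the size bound on $\actualU$, the $\Theta$ bound for partial executions, the $O(t)$ cost of insertions, and disjointness. The base case $l=0$ is \cref{alg:base-case}, which stops once it has collected a bounded number of vertices; we take the analogous size statement from \cref{subsec:base-case} as given.

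We handle disjointness first, since the size bound relies on it. By \cref{lemma:bmssp-correctness-frontier} applied to the $i$-th sub-call, $\actualU_i=\target{\actualB_i,S_i}$ is complete, $\max_{x\in \actualU_i}\distV(x)<\actualB_i$, and every vertex later inserted into $\ds$ has label at least $\actualB_i$. Inside the proof of \cref{lemma:bmssp-correctness-frontier} we showed $d_{\expectB}[\ds\cup P]\ge \actualB_i$ at the end of iteration $i$. The $(i+1)$-th sub-call only touches $\target{\expectB_{i+1},S_{i+1}}$, where $S_{i+1}\subseteq \ds\cup P$. Hence every $u\in \actualU_{i+1}$ satisfies $\distV(u)\ge d_{\expectB}[\ds\cup P]\ge \actualB_i>\distV(x)$ for all $x\in\actualU_i$. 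So the $\actualU_i$ occupy increasing distance ranges and are pairwise disjoint. Also $W'$ is disjoint from $\actualU$ by its definition, and $\actualU$ as returned is $W'\cup\bigcup_i\actualU_i$.

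For the size bound, the loop condition guarantees $\abs{\actualU}\le t^32^{lt}$ before the last iteration. That iteration adds $\abs{\actualU_f}$, which is $O(t^32^{(l-1)t})=O(t^32^{lt})$ by induction, using $\abs{S_f}\le t^22^{(l-1)t}$ as noted in the recursion discussion. The finalization adds $W'\subseteq W$, and $\abs{W}=O(k\abs{Q})=O(k\abs{S})=O(kt^22^{lt})=O(t^32^{lt})$ by \cref{lemma:find-pivots}, since $k\le t$. Therefore $\abs{\actualU}=O(t^32^{lt})$.

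For a partial execution, $\actualB<\expectB$. This forces the loop to have exited with $\ds$ non-empty, since if $\ds$ empties then $\actualB_f=\expectB_f=\expectB$ by the \pull\ specification. So the exit came from $\abs{\actualU}>t^32^{lt}$, which gives $\abs{\actualU}=\Theta(t^32^{lt})$. This step needs care in one place: we must check that emptiness of $\ds$ really forces $\actualB=\expectB$. The last sub-call returned $\actualB_f<\expectB_f$ only if it was partial, in which case its $\ds_f$ is nonempty and gets merged into $\ds$. By contrapositive, an empty $\ds$ means every step was full and $\expectB_f=\expectB$.

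For insertion time, we bound the number of keys $N$ ever in $\ds$. These are the pivots ($\le\abs{S}$), the vertices merged from the $\ds_i$, the add-backs from $S$, and the relaxed out-neighbours of $\actualU$. Since degrees are at most $\delta$, $N\le\abs{S}+\delta\abs{\actualU}=O(t^42^{lt})$ once we note $\delta\le\log k\le t$. The merged $\ds_i$ are themselves subsets of $S_i$ and of relaxed neighbours of $\actualU_i$, so they are already covered by this count. With $M=t2^{(l-1)t}$ we get $\log(N/M)=O(t+\log t)=O(t)\le O(M)$, so the hypothesis of \cref{lemma:data-structure} holds. Each insertion then costs $O(\log(N/M))=O(t)$.

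The main obstacle is justifying the bound on $N$. Duplicate keys might seem to inflate it, but the data structure keeps only one entry per key. So $N$ counts distinct vertices, and these all lie in $S\cup\actualU\cup N^+(\actualU)$.
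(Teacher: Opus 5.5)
Your proof is correct and follows the paper's route. The size bound comes from the loop condition plus induction, the $O(t)$ insertion cost from bounding the number of keys by $\abs{S}+\delta\abs{\actualU}=O(t^42^{lt})$ against $M=t2^{(l-1)t}$, and disjointness from the increasing distance ranges of the $\actualU_i$. You are in fact more careful than the paper in two places: you charge $W'$ via $\abs{W}=O(k\abs{Q})$, which the paper's count $t^32^{lt}+t^32^{(l-1)t}$ omits, and you check that a partial execution must exit through the size condition.

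The one step to tighten is $\distV(u)\ge d_{\expectB}[\ds\cup P]$ for $u\in\actualU_{i+1}$. Since $d[y]\ge\distV(y)$ points the wrong way, knowing $S_{i+1}\subseteq\ds\cup P$ is not enough; you need that $\frontierPair{\emptyset,S_{i+1}}$ is a frontier for $\target{\expectB_{i+1},S_{i+1}}$. Then the shortest path of $u$ passes through a complete $y\in S_{i+1}$, giving $\distV(u)\ge\distV(y)=d[y]\ge d_{\expectB}[\ds\cup P]$. The paper phrases this as $\distV_{\expectB}(S_i)=d_{\expectB}[S_i]$.
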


\begin{proof}
$\abs{\actualU}$ constraints follow from the while-condition of \cref{alg:BMSSP} and induction: $t^32^{lt} + t^32^{(l-1)t} = \Theta(t^32^{lt})$. For insertion time, as $\abs{\actualU}\leq O(t^32^{lt})$, the number of vertices added into $\ds$ is bounded by $\abs{S} + \abs{\actualU}\delta \leq O(t^42^{lt})$ ($\delta \leq k \leq t$). With $M = t2^{(l-1)t}$, each insertion takes $O(\log (t^42^{lt}/M)) = O(t)$ time.

For $U_i$ disjointness, we only need to show: $\{\distV(v): v\in \actualU_{i}\} \subseteq [\actualB_{i-1}, \actualB_{i})$.
By \cref{lemma:bmssp-correctness-frontier}, $\max_{u\in \actualU_{i}} \distV(u) < \actualB_{i}$.
Also by \cref{lemma:bmssp-correctness-frontier}, in the $i-$th iteration,
$\distV_{\expectB}(\actualU_{i}) \geq \distV_{\expectB}(S_i) = d_{\expectB}[S_i] \geq d_{\expectB}[\ds \cup P] \geq \actualB_{i-1}$.
% Because the sub-calls satisfies \cref{lemma:bmssp-correctness-frontier} and that $\actualU_{i} = \target{\actualB_{i}, S_i}$, by definition $\max_{v\in \actualU_{i}} \distV(v) < \actualB_{i}$.
% Because the sub-calls satisfies \cref{lemma:bmssp-correctness-frontier}, $\max_{u\in \actualU_{i}} \distV(u) < \actualB_{i}$.
\end{proof}

\subsection{Time analysis of BMSSP}
Following \cite{SSSP25}, for a vertex set $\actualU$ and two bounds $c < d$,
% denote $N^+(\actualU)=\{(u, v\in E: u\in \actualU)\}$ to be the set of outgoing edges from $\actualU$,
denote $N^+_{[c, d)}(\actualU) = \{(u, v)\in E: u\in \actualU, \distV(u)+w_{uv}\in [c, d)\}$.
Clearly $\abs*{N^+_{[c, d)}(\actualU)} \leq \delta \abs{\actualU}$.

Denote $E(\actualU) = \{(u, v)\in E: u, v\in \actualU\}$.
% which is the set of edges within the induced subgraph $G|_{\actualU}$ on $\actualU$.
Clearly $\abs{E(\actualU)} \leq \delta \abs{\actualU}$.

Recall in \cref{subsec:obserevation} we defined the recursion tree $\mathcal{T}$ of \cref{alg:BMSSP}.
For $X$ a \BMSSP call, we subscript $X$ to all the parameters to denote those used in $X$.
Denote $\mathcal{T}(X) = \{X': X' \text{ is in the subtree rooted at }X\}$ and $\mathcal{T}^{\circ}(X) = \mathcal{T}(X)\setminus \{X\}$.

\begin{lemma}[Time analysis of BMSSP]
\label{lemma:bmssp-time-analysis}
Denote $C$ to be a sufficiently large time constant.
Under the same conditions with \cref{lemma:bmssp-correctness-frontier}, a call $X$ of \BMSSP runs in time: (here $\actualU=\actualU_X$)
\[C\abs{\actualU}(l+\log t)\log t + Ct\abs*{N^+_{[\distV_{\expectB}(S), \expectB)}(\actualU)} + Ct\abs{E(\actualU)} + Ct\sum_{Y \in \mathcal{T}(X)} \abs{Q_{Y}},\]
Moreover, $\sum_{Y\in \mathcal{T}(X)}\abs{Q_{Y}} \leq \abs{\actualU}(\delta + l / t)$.

Thus in the top layer when called with $\expectB=\infty, S=\{s\}$ and $l = \greaterInt{(\log n)/t}$, it takes time $O(m(t + \frac{\log n\log t}{\delta t}))$, optimized to $O(m\sqrt{\log n}+\sqrt{mn\log n\log \log n})$ with best parameters.
\end{lemma}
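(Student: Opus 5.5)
We prove the time bound by induction on $l$, following the recursion tree $\mathcal{T}$. For $l=0$ the base case is Dijkstra on $O(t^3)$-size sets with a heap of size $O(t^4)$, so each vertex costs $O(\log t)$ and each edge $O(1)$, which fits inside $C\abs{\actualU}\log t + C\abs{E(\actualU)}+C\abs*{N^+_{[\distV_{\expectB}(S),\expectB)}(\actualU)}$. For $l>0$, we bound the cost of $X$ itself, excluding sub-calls, and add the inductive bounds for the children $Y_1,\dots,Y_f$. The local costs are as follows.

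First, \FindPivots costs $O((p+\abs{Q})k\log k)$ by \cref{lemma:find-pivots}. Since $p\le\abs{\actualU}/k$ and $k\log k\le t$, this is at most $C\abs{\actualU}\log t + Ct\abs{Q_X}$.

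Second, insertions cost $O(t)$ each by \cref{lemma:bmssp-correctness-size}. The initial pivots contribute $O(pt)=O(\abs{\actualU}\log t)$. The add-backs are nonzero only in partial executions, where $\abs{S}\le\abs{\actualU}/t$, so they cost $O(\abs{\actualU})$. Edge insertions at \cref{line:bmssp:edge-direct-insert} from $u\in\actualU_{Y_i}$ have $\distV(u)+w_{uv}\in[\expectB_{Y_i},\expectB)$. Edges with head in $\actualU$ are charged to $E(\actualU)$; the rest have $\distV(u)+w_{uv}\ge\actualB$ and lie in $N^+_{[\distV_{\expectB}(S),\expectB)}(\actualU)$. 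The $W'$ relaxations are handled in the same way.

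Third, Pull, Merge, picking from $P_j$, and re-selection are all handled by the amortization in \cref{subsec:obserevation}. Pull and Merge are linear. Picking costs $O(k)$ per pivot and is charged either to $\abs{\actualU_{Y_i}}$ or to a re-selection. Each re-selection costing $O(t)$ is injected into a cross edge of $F_{X,j}$ lying between two different $\actualU_{Y_i}$, which lies in $E(\actualU)$ but in no $E(\actualU_{Y_i})$; the one extra re-selection per $j$ costs $O(pt)$.

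The main step is checking that the children's terms telescope into the parent's. Summing over children, the $E(\actualU_{Y_i})$ terms together with the cross edges are at most $E(\actualU)$, since the $\actualU_{Y_i}$ are disjoint. For $u\in\actualU_{Y_i}$, the child's $N^+_{[\distV(S_{Y_i}),\expectB_{Y_i})}$ edges have heads either inside $\actualU$, and are then charged once to $E(\actualU)$, which needs care to avoid double counting with cross edges, or outside $\actualU$. In the latter case their values are at least $\actualB$, so they lie in the parent's $N^+$ set; this set is disjoint from the edges the parent itself charges at \cref{line:bmssp:edge-direct-insert}, because those use the range $[\expectB_{Y_i},\expectB)$. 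The vertex terms give $\sum_i C\abs{\actualU_{Y_i}}(l-1+\log t)\log t$ plus the parent's $O(\abs{\actualU}\log t)$ local cost. The slack $C\abs{\actualU}\log t$ in the parent's term absorbs the local cost, provided $C$ exceeds the hidden constants. I expect the real obstacle to be this edge bookkeeping: each edge must be charged $O(1)$ times per level, so that a factor $t$ appears only once rather than once per level.

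For $\sum_{Y\in\mathcal{T}(X)}\abs{Q_Y}\le\abs{\actualU}(\delta+l/t)$, partial calls satisfy $\abs{Q_Y}\le\abs{S_Y}<\abs{\actualU_Y}/t$. Summing over layers, using disjointness of the $\actualU_Y$ within a layer, gives $\abs{\actualU}l/t$. For full calls, each occurrence $v\in Q_Y$ is charged to the latest valid relaxation edge into $v$. As argued in \cref{subsec:obserevation}, these full calls form an ancestor chain and each uses a distinct incoming edge, since $v\in\actualU_Y$ is not reinserted into $\ds_Y$ without a new relaxation. Hence there are at most $\delta$ occurrences per $v\in\actualU$.

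Finally, at the top level $\abs{\actualU}=O(m/\delta)$, $l=O(\log n/t)$, $\abs{E}, \abs{N^+}\le m$, and $t\abs{Q}\le t\cdot\frac{m}{\delta}(\delta+l/t)$. Plugging these in gives $O(m(t+\frac{\log n\log t}{\delta t}))$, and the parameter choice stated after \cref{lemma:BMSSP} yields the claimed bound.
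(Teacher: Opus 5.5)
Your skeleton matches the paper's: induction over $\mathcal{T}$, the same local cost items, the cross-edge injection for re-selections, and the $\delta+l/t$ count for $\sum\abs{Q_Y}$. However, the edge bookkeeping you identify as the main obstacle does not close the way you set it up. You send every edge whose head lies in $\actualU$ into the $Ct\abs{E(\actualU)}$ term. That includes edges from the child's $N^+$ term, from the parent's own insertions at \cref{line:bmssp:edge-direct-insert}, and from $W'$. Take any shortest-path-tree edge $(u,v)$ with $u,v\in\actualU_i$. It satisfies $\distV(u)+w_{uv}\in[\distV_{\expectB_i}(S_i),\expectB_i)$, so the child's bound already pays $Ct$ for it in $E(\actualU_i)$ and another $Ct$ in its $N^+$ term. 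Your routing moves both charges into $E(\actualU)$, so the parent needs coefficient $2Ct$ on $\abs{E(\actualU)}$. Cross edges can likewise collect $Ct$ or $C't$ on top of the $C't$ from re-selection. The induction with a fixed $C$ therefore fails, and letting the coefficient grow compounds over the $l$ levels.

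The missing idea is that no split by head location is needed. From the invariant in the proof of \cref{lemma:bmssp-correctness-frontier},
$\distV_{\expectB_i}(S_i)=d_{\expectB_i}[S_i]\ge d_{\expectB}[\ds\cup P]\ge\actualB_{i-1}\ge\actualB_0\ge\distV_{\expectB}(S)$.
So the child's interval and the parent's direct-insertion interval $[\expectB_i,\expectB)$ are disjoint pieces of $[\distV_{\expectB}(S),\expectB)$. With $C'=C/100$ this gives
\[
Ct\abs{N^+_{[\distV_{\expectB_i}(S_i),\expectB_i)}(\actualU_i)}+C't\abs{N^+_{[\expectB_i,\expectB)}(\actualU_i)}\le Ct\abs{N^+_{[\distV_{\expectB}(S),\expectB)}(\actualU_i)}.
\]
The $W'$ term $C't\abs{N^+_{[\actualB,\expectB)}(W')}$ fits the same way. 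Since the $\actualU_i$ and $W'$ partition $\actualU$, all relaxation-type charges land in the parent's $N^+$ term, wherever their heads are. The $E$ term then receives only $\sum_i\abs{E(\actualU_i)}$ and the cross edges $\mathcal{E}$, which are disjoint subsets of $E(\actualU)$. Keeping the two edge budgets separate is exactly what keeps every edge at coefficient at most $Ct$ in each term at every level.

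Separately, your claim that child edges with head outside $\actualU$ have value at least $\actualB$ is false in general. The head may have a shorter path avoiding $S$ while $(u,v)$ is not tight. With the interval argument above, that claim is not needed.
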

\begin{proof}
We first prove the time of one \BMSSP call by induction.
Clearly base case satisfies \cref{lemma:bmssp-time-analysis}. 

Now assume $l > 0$.
By induction, sub-calls satisfy \cref{lemma:bmssp-time-analysis} and take time
\[T_0 = C\sum_{i}\left(\abs{\actualU_{i}}(l - 1 + \log t) \log t + t\abs*{N^+_{[\distV_{\expectB_{i}}(S_i), \expectB_{i})}(\actualU_{i})} + t\abs{E(\actualU_{i})}\right) + Ct\sum_{Y\in \mathcal{T}^{\circ}(X)}\abs{Q_{Y}}.\]
By \cref{lemma:bmssp-correctness-size}, $\sum_{i}\abs{\actualU_{i}} \leq \abs{\actualU}$.
By \cref{lemma:bmssp-correctness-frontier},
in the $i$-th iteration,
$\frontierPair{\emptyset, S_i}$ is a frontier for $\target{\expectB_{i}, S_i}$ and $S_i$ is chosen from $\ds \cup P$,
% and by definition $d_{\expectB}[P \setminus \hat{P}] \geq d_{\expectB}[D]$,
so $\distV_{\expectB_{i}}(S_i) = d_{\expectB_{i}}[S_i] \geq d_{\expectB}[\ds \cup P]
% \geq d_{\expectB}[\ds \cup \hat{P}]
\geq \actualB_{i-1}$. Thus
\[ T_0 \leq C\abs{\actualU}(l - 1 + \log t) \log t + Ct\sum_{i}\abs*{N^+_{[\actualB_{i-1}, \expectB_{i})}(\actualU_{i})} + Ct\sum_{i}\abs{E(\actualU_{i})} +Ct\sum_{Y\in \mathcal{T}^{\circ}(X)}\abs{Q_{Y}}.\]
% For $\actualB_{0}=\min\{\expectB, \min_{1\leq j\leq p}d[p_j]\}$ as defined in \cref{line:bmssp:define-B-zero}, also $\actualB_{0} \geq \min_{x\in S}d[x] = \min_{x\in S}\distV(x)$.

In a full execution, $\actualU = \target$.
In a partial execution, $\abs{\actualU} > t\abs{S}$.
In both cases, $\abs{\actualU} \geq \min\{t\abs{S}, \abs*{\target}\}$.

We pick $C' = C/100$ to bound the time constant directly consumed by $X$ and list them sequentially:
% Now we list the time of the rest of \cref{alg:BMSSP} as follows:
\begin{enumerate}
    \item By \cref{remark:find-pivots}, $p\leq \min\{\abs{S}, \abs*{\target}/k\} \leq \abs{\actualU}/k$. Thus \FindPivots(\cref{line:bmssp:find-pivot-call}) and initial insertion (\cref{line:bmssp:initial-pivot-insert}) take time $T_1 = C'\abs{\actualU} \log t +C' t \abs{Q}$.
    \item Picking $\{x\in P_j: d[x] < \expectB_{i}\}$ (\cref{line:bmssp:pick-from-Si}) takes $C'k$ time for each $i$ and $j$ such that $p_j \in S_i$.
    \begin{itemize}
        \item If $i-$th sub-call is a partial execution, $T_{2, par, i} \leq C'k \abs{S_i} \leq C'\abs{\actualU_i}$.
        \item If $i-$th sub-call is a full execution, $p_j\in \actualU_{i}$ is removed from $P_j$ (\cref{line:bmssp:pivot-remove-add-j}), causes re-selection and insertion of new $p_j$ (\cref{line:bmssp:tree-pivot-insert}) of $O(t)$ time and can be absorbed by it (\cref{item:lemma-time:reselection}); except if $P_j$ becomes empty, but this happens at most one time for each $j$: $T_{2, full} \leq C' p k = C' \abs{\actualU}$.
    \end{itemize}
    In total $T_2 \leq C'\sum_{i}\abs{\actualU_i} + C' \abs{\actualU} = 2 C' \abs{\actualU}$.
    \item Iterating $\actualU_{i}$, updating $\dirtyPj$, \pull\ and \mergeds\ operations and checking edges of $\actualU_i$ and $W'$ (Line~\ref{line:bmssp:pivot-remove-add-j},\ref{line:bmssp:pull},\ref{line:bmssp:merge},\ref{line:bmssp:edge-direct-check}, \ref{line:bmssp:check-w}) together take linear time $T_3 = C'\delta\abs{\actualU} \leq C'\abs{\actualU}\log t$.
    \item Insertion of $v$ from edges $(u, v)$ in $N^+_{[\expectB_{i}, \expectB)}(\actualU_{i})$ (\cref{line:bmssp:edge-direct-insert}) takes time $T_4 = C't\sum_{i}\abs*{N^+_{[\expectB_{i}, \expectB)}(\actualU_{i})}$.
    \item Re-selection and insertion of each new $p_j$ (\cref{line:bmssp:tree-pivot-insert}) take time $O(k+t) \leq C't$. \label{item:lemma-time:reselection}
    \begin{itemize}
        \item If $X$ is a partial execution, $T_{5, par} = C't\abs{S} \leq C'\abs{\actualU}$.
        \item Below we assume $X$ is a full execution.
        For each $P_j$, in the $i-$th iteration, if it spends $C't$ time (\cref{line:bmssp:tree-pivot-insert}) to re-select and insert a new $p_j$, its old $p_j'$ must have been in $\actualU_{i}$ and removed (\cref{line:bmssp:pivot-remove-add-j}).
        For each $j$, $P_j$ has at most one such $p_j'$ removed from each $\actualU_{i}$.
        Denote $g_j$ the number of such $p_j'$'s, and then the time is $T_{5, full} = C't\sum_{j=1}^{p}g_j$.
        By \cref{lemma:bmssp-correctness-size}, $\actualU_{i}$'s are disjoint.
        For each $j$, those $p_j'$'s over different iterations distribute in $g_j$ distinct $\actualU_{i}$'s.
        
        By \cref{lemma:find-pivots}, each $P_j$ is connected by an undirected tree $F_j$ whose vertices are in $\target = \actualU$.
        Denote $\mathcal{E} = E(\actualU) \setminus (\bigcup_{i}E(\actualU_{i}))$, the set of edges connecting vertices from two different $\actualU_{i}$'s.
        Removing edges of $\mathcal{E}$ from $F_j$ splits it into at least $g_j$ weakly connected components. Thus $F_j$ contains at least $g_j -1$ edges in $\mathcal{E}$.
        As $F_j$'s are edge-disjoint and $tp \leq \abs{\actualU}t/k = \abs{\actualU}\log t$,
        \[T_{5, full} = C't\sum_{j=1}^{p}g_j \leq C'tp + C't\sum_{j=1}^{p}\abs{F_j \cap \mathcal{E}} \leq C'\abs{\actualU}\log t + C't\abs{\mathcal{E}}.\]
    \end{itemize}
    \item Insertion of vertices in $\{x\in S: d[x] \in [\actualB, \expectB)\}$ (\cref{line:bmssp:add-back-Si}) takes time $O(t\abs{S})$, 
    but it is non-zero only if this call is a partial execution ($\actualB < \expectB$) such that $\abs{S} \leq \abs{\actualU} / t$, so $T_6 = C't\abs{S} \leq C'\abs{\actualU}$.
    \item Insertion of $v$ from edges $(u, v)$ in $N^+_{[\actualB, \expectB)}(W')$ (\cref{line:bmssp:insert-w}) takes time $T_7 = C't\abs*{N^+_{[\actualB, \expectB)}(W')}$.
\end{enumerate}

To sum up, because $\actualB_{i-1} \geq \actualB_{0} \geq d_{\expectB}[S] = \distV_{\expectB}(S)$, $\actualU$ is a disjoint union of $\actualU_{i}$'s and $W'$, and $\abs{\mathcal{E}}+\sum_{i}\abs{E(\actualU_{i})} \leq \abs{E(\actualU)}$, the time of one \BMSSP call is
\begin{align*}
T \leq\ & T_0 + T_1 + T_{2} + T_3 + T_4 + T_{5, par} + T_{5, full} + T_6 + T_7 \\
\leq\ & \left(C\abs{\actualU}(l - 1 +\log t)\log t + 7 C'\abs{\actualU}\log t\right) \\
&+ \left(C't\abs*{N^+_{[\actualB, \expectB)}(W')} + \sum_i\left(Ct\abs*{N^+_{[\actualB_{i-1}, \expectB_{i})}(\actualU_{i})} + C't\abs*{N^+_{[\expectB_{i}, \expectB)}(\actualU_{i})} \right) \right)\\
&+ \left(C't\abs{\mathcal{E}} + Ct\sum_{i}\abs{E(\actualU_{i})}\right)
+ \left(Ct \sum_{Y\in \mathcal{T}^{\circ}(X)}\abs{Q_{Y}} + C't\abs{Q_{X}}\right)\\
\leq\ & C\abs{\actualU}(l + \log t)\log t + Ct\abs*{N^+_{[\distV_{\expectB}(S), \expectB)}(\actualU)} + Ct\abs{E(\actualU)} + Ct\sum_{Y\in \mathcal{T}(X)}\abs{Q_{Y}}.
\end{align*}

Now we show that, $\sum_{Y\in \mathcal{T}(X)} \abs{Q_{Y}} \leq \abs{\actualU}(\delta + l / t)$.
\begin{itemize}
    \item For a partial execution $Y$, $\abs{Q_{Y}} \leq \abs{\actualU_{Y}} / t$.
    The $\actualU_{Y}$'s for $Y$ under $\mathcal{T}(X)$ on each layer are disjoint and their union is a subset of $\actualU$.
    Summing over all layers we obtain $\sum_{partial\ Y\in \mathcal{T}(X)}\abs{Q_{Y}} \leq \abs{\actualU}l/t$.

    \item Below we assume $Y$ is a full execution.
    For any $v\in Q_{Y}$, think of the latest valid relaxation (\cref{line:bmssp:edge-direct-insert} or \cref{line:bmssp:insert-w} in \cref{alg:BMSSP}, or \cref{line:base:insert} in \cref{alg:base-case}) on edge $(u, v)$ for some complete vertex $u$ which caused insertion of $v$.
    Then $u$ is already complete before the execution of $Y$.
    
    If $v\in Q_{Z}$ for another full execution $Z$, since $v\in \actualU_Y$ and also $v\in \actualU_Z$, $Z\in \mathcal{T}(Y)$ or $Y\in \mathcal{T}(Z)$.
    Without loss of generality, assume $Z\in \mathcal{T}(Y)$.
    Because $v\in Q_Y$ is not passed down to descendants of $Y$ in $\mathcal{T}(Y)$, and also $u$ is not in $\actualU_{Y}$, so $v\in Q_{Z}$ must be owing to another edge $(u', v)$ relaxed in some descendant of $Y$.
    Therefore, the number of appearances of $v$ in $Q_{Y}$ of full executions $Y$ is bounded by the number of relaxations from its complete predecessors, which is bounded by $\delta$.
    
    Thus $\sum_{full\ Y\in \mathcal{T}(X)}\abs{Q_Y} \leq \delta \abs{\actualU}$.
\end{itemize}

To sum up, note that both $\abs{N^+(\actualU)}$ and $\abs{E(\actualU)}$ are bounded by $O(\delta \abs{U})$, a call $X$ of \BMSSP runs in time $O(\abs{U}(l\log t + \delta t))$.
Recall we work on a graph with $O(m)$ edges, $O(m/\delta)$ vertices and max degree $\delta$, with $l = \greaterInt{(\log n) / t}$,
so in the top layer the total time is $O(m(t + \frac{\log n\log t}{\delta t}))$.

The best\footnote{``best'' in terms of order analysis, the specific constant, though not a concern of this paper, may be adjusted in implementations.} parameter is $t = \greaterInt{\sqrt{\log n \log \log n / \delta}}$ and $\delta = \frac{1}{4}\min\{\frac{m}{n}, \log \log n\}$.
If $\frac{m}{n} \leq \log\log n$, the total time is $O(\sqrt{mn\log n\log \log n})$.
If $\frac{m}{n} > \log \log n$, the total time is $O(m\sqrt{\log n})$.
\end{proof}

\subsection{Base Case}
\label{subsec:base-case}
The base case is solved using Dijkstra's algorithm.
It is efficient enough to spend $\log t$ time on each edge.

\begin{algorithm}[H]
\caption{Base Case of BMSSP (\cref{alg:BMSSP}) when $l=0$}
\label{alg:base-case}
%\AlgInput{Same as \cref{alg:BMSSP}, except that $l = 0$ and $\abs{S} \leq t^2$.}
%\AlgOutput{Same as \cref{alg:BMSSP}.}
Initialize normal binary search tree $\ds \gets \{\pair{x, d[x]}: x\in S\}$ (parameterized by $M=1$)\;
$\actualU \gets \emptyset$\;
\While(\tcp*[f]{$\abs{\ds} \leq t^3\delta \leq t^4$, each operation $O(\log t)$ time.}){$\ds.\nonEmpty()$ \LogicAnd $\abs{\actualU} \leq t^3$ }{
    $u, \actualB \gets \ds.\pull()$\;
    $\actualU \gets \actualU \cup \{u\}$\;
    \ForAll(\tcp*[f]{$u$ is complete}){edge $(u, v)$}{
        \If{\Relax{$u, v, \expectB$}\label{line:base:relax}}{
            $\ds.\ins(v, d[v])$\label{line:base:insert}\tcp*[r]{On duplicate, keep the smaller one.}
        }
    }
}
% $\ds.\promote(t)$\;
\Return $\actualB$, $\actualU$, $\ds$
\end{algorithm}

\appendix

\section{Missing Proofs}
\subsection{Partition}
\label{sec:partition}
In this section we introduce the tree partition algorithm used for pivot selection.
We slightly modify the topological partition algorithm in \cite{Frederickson83}, which is also used in \cite{SSBP18}. Note that in the algorithm $T$ is a directed tree but we treat every edge in $T$ as undirected.

\begin{algorithm}[ht]
\caption{Tree partition}
\label{alg:partition}
\Fn{\Partition{tree $T$}}{
    $r \gets$ root of $T$\;
    $U \gets \{r\}$\;
    \ForAll{subtree $T'$}{
        $U \gets U\cup \Partition(T')$\;
        \If{$\abs{U}\geq s$}{
            Report a new group $U$\;
            $U \gets \{r\}$\;
        }
    }
    \Return $U$
}
\end{algorithm}

\begin{lemma}[Partition]
\label{lemma:partition}
    Given a tree $T$ with $n$ vertices and an integer $s\in[1,n]$, we can partition $T$ into edge-disjoint subtrees $T_1,T_2,\cdots,T_p$ in linear time, such that $|T_j|\in[s,3s)$ for all $j$.
\end{lemma}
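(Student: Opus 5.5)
The plan is to analyze \cref{alg:partition} directly, with $s$ as its threshold, and add one post-processing step at the root. All edges of $T$ are treated as undirected, and sets of vertices are kept as linked lists, so that each union $U\gets U\cup\Partition(T')$ is an $O(1)$ concatenation. The first thing to establish, by induction on the height of the tree, is an invariant for every call $\Partition(T')$ with root $r'$: the returned set $U$ induces a connected subtree of $T'$ that contains $r'$, and $\abs{U}<s$.

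\textbf{Step 1: the invariant.} Inside the loop of $\Partition(T)$, the current $U$ always contains $r$, is connected, and has size $<s$ before each child is processed. It is connected because each child's returned set contains the child root, which is adjacent to $r$ via the tree edge. By induction the child returns fewer than $s$ vertices. So after the union, $\abs{U}\le (s-1)+(s-1)<2s$. If $\abs{U}\ge s$ the group is reported and $U$ is reset to $\{r\}$; otherwise $\abs{U}<s$ still holds. Either way the invariant survives to the next child and to the return. As a byproduct, every reported group is a connected subtree of size in $[s,2s)$.

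\textbf{Step 2: edge-disjointness and coverage.} Assign each tree edge $(r,c)$ to the group that absorbs the set returned by the child $c$. Since $U$ is reset to $\{r\}$ after each report, two groups created at the same node share only the vertex $r$. Groups created deeper in the recursion were already reported inside $\Partition(T_c)$, so their edges never reappear. By induction, every edge lies in exactly one reported group or in the final leftover set $U_0$ returned at the top root.

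\textbf{Step 3: dispose of $U_0$.} This is the only delicate step. If $U_0=\{r\}$ with no edges, it is simply discarded; $r$ still lies in some group, or else $T$ is a single vertex, where $s=1$ and $\{r\}$ itself is a valid group. Otherwise $U_0$ has edges and $\abs{U_0}<s\le n$, so $U_0\neq T$. By connectivity of $T$ there is a reported group $T_j$ sharing a vertex with $U_0$. I merge $U_0$ into $T_j$: the union is connected because they share a vertex, it remains edge-disjoint from the other groups, and its size is $<2s+s=3s$. One thing must be checked: at least one group has been reported whenever $U_0\neq T$. This holds because every vertex of $T\setminus U_0$ was cut off precisely when its group was reported.

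\textbf{Running time.} Each vertex is visited once, each union is an $O(1)$ list concatenation, and each report outputs its list. To find $T_j$ in Step 3, I keep a record of the last reported group containing each vertex, which costs $O(1)$ per report. The total time is therefore $O(n)$. I expect the main obstacle to be the boundary handling in Step 3 rather than the size bounds, which follow routinely from the invariant.
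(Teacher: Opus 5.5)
Your proof is correct and takes essentially the paper's route. Both arguments use the invariant that each recursive call returns a connected set of size below $s$ containing its root, so reported groups have size in $[s,2s)$, and both absorb the leftover set into a reported group sharing a vertex with it, keeping the size below $3s$. The only difference is which group absorbs it: the paper uses the last reported group $T_p$, whose root $r_p$ must survive into the leftover, whereas you locate a suitable group via connectivity of $T$ and edge coverage.

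Two small fixes are needed. For $s=1$ your invariant $|U|<s$ cannot hold, since $U\ni r$; bound the returned set by $\max\{s-1,1\}$ instead, so groups have size at most $2s<3s$. Also, recording a group for every vertex of a reported group costs $O(|T_j|)$ per report rather than $O(1)$, which is still $O(n)$ overall because the groups are edge-disjoint.
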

\begin{proof}
    Algorithm \ref{alg:partition} outlines the tree partition procedure. We select an arbitrary vertex as the root.
    The main algorithm calls \Partition{$T$} that performs a depth-first search on $T$, and collects all the groups reported as $T_1,T_2,\cdots,T_p$ in order. The $\actualU$ returned by \Partition{$T$} is merged into $T_p$, or regarded as a new group if no group is reported before.
    
    \Partition{$T$} always returns a subset of vertices in $T$, including the root $r$. Prior to the merge, by induction we can deduce that each $T_j$ and the final $\actualU$ induce a subtree, and each edge in $T$ appears in exactly one of them. 
    If the merge occurs, let $r_p$ be the root node when $T_p$ is reported, then $r_p$ must be contained in both $T_p$ and $\actualU$, thus $T_p$ after the merge can still induce a subtree in $T$.
    
    It's clear that the set $\actualU$ returned in line 9 has size $[1,s]$, therefore any group reported has size $[s,2s)$. After the final merge the size of $T_p$ remains less than $3s$.
\end{proof}

\subsection{Proof of the Data Structure}
\label{subsec:proof-of-data-structure}
\begin{lemma}[\cref{lemma:data-structure} restated]
\label{lemma:data-structure-restate}
Given at most $N$ key/value pairs involved, a parameter $M$, and an upper bound $\expectB$, there exists a data structure $\ds$ that supports the following operations:
\begin{description}
    \item[Insert] Insert a key/value pair. If the key already exists, keep the one with smaller value.
    \item[Merge] For another data structure $\ds'$ with smaller parameter $M'(< M/3)$ as specified in this lemma, such that all the values in $\ds'$ are smaller than all the values in $\ds$, insert all pairs of $\ds'$ into $\ds$ (keep the smaller value for duplicate keys).
    % If $\ds'$ is empty, do nothing; otherwise, we require $\abs{\ds'} = \Omega(M)$.
    \item[Pull] Return a subset $S'$ of keys where $\abs{S'}\leq M$ associated with the smallest $\abs{S'}$ values and an upper bound $x$ that separates $S'$ from the remaining values in $\ds$. Specifically, if there are no remaining values, $x = \expectB$; otherwise, $\abs{S'} = M$.
\end{description}
% \begin{description}
%     \item[Insert] Insert a key/value pair in $O(\log(N/M))$ amortized time. If the key already exists, keep the smaller.
%     \item[Merge] For another data structure $\ds'$ with smaller parameter $M'(< M/3)$ as specified in this lemma, such that all the values in $\ds'$ are smaller than all the values in $\ds$, insert all elements of $\ds'$ into $\ds$ (keep the smaller value if there are duplicate keys) in $O(\abs{\ds'})$ amortized time.    
%     \item[Pull] Return a subset $S'$ of keys where $\abs{S'}\leq M$ associated with the smallest $\abs{S'}$ values and an upper bound $x$ that separates $S'$ from the remaining values in $\ds$, in $O(\abs{S'})$ amortized time. Specifically, if there are no remaining values, $x$ should be $\expectB$. Otherwise, we have $\abs{S'} = M$.
% \end{description}

If $M = 1$, \ins\ and \pull\ take $O(\log N)$ time with \mergeds\ unsupported, for base case only (\cref{subsec:base-case}).

If $M > 1$, we require $M\geq \log(N / M)$. \ins\ takes $O(\log (N/M))$ amortized time; other operations are linear: \mergeds\ takes $O(\abs{\ds'})$ time and \pull\ takes $O(\abs{S'})$ time.
\end{lemma}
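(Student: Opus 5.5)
We construct $\ds$ as a self-balancing binary search tree (e.g.\ a red-black tree) whose nodes are \emph{blocks}. Each block is an unsorted list of key/value pairs, and every value in one block is at most every value in the next block. Each block stores its upper separator value, and the tree is keyed by these separators. A hash map (or an array indexed by vertex) records, for every key, the block holding it and its current value. If $M=1$, every block has size one, so the structure is just an ordinary balanced BST on the pairs; \ins\ and \pull\ (extract the minimum, return the next minimum or $\expectB$ as the bound) then take $O(\log N)$ time. For $M>1$ we maintain two invariants. First, every block has size at most $2M$. Second, at any time the number of blocks is $O(N/M + \#\text{small blocks})$; to enforce this we keep sizes in $[M/2,2M]$ except for $O(1)$ boundary blocks, or we charge small blocks to the operations that created them.

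\textbf{Insert.} Given $\pair{a,b}$, we first check the key map. If $a$ is present with a value $\le b$, we do nothing. Otherwise we delete the old copy from its block in $O(1)$ time, using position pointers and swap-with-last. We then search the BST for the block whose range contains $b$, in $O(\log(\#\text{blocks}))$ time, and append the pair there. When a block exceeds $2M$ elements, we split it: a linear-time median selection gives two halves of size about $M$, costing $O(M)$ plus one BST update of $O(\log(N/M))$. The $O(M)$ split cost is amortized over the $\ge M$ insertions since the block was last at size $M$, i.e.\ $O(1)$ each. Bounding the number of blocks by $O(N/M)$ plus small blocks keeps the search depth at $O(\log(N/M))$.

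\textbf{Pull.} We repeatedly take the minimum block from the BST until we have collected at least $M$ elements or the structure is empty. Each BST removal costs $O(\log(N/M))\le O(M)$. We charge this removal to the $\Omega(M)$ elements of the block, or, for an underfull block, to the insertion or merge that created it. If we have collected more than $M$ elements, we run linear-time selection to keep the smallest $M$. The rest go back as a new first block; its size is below $2M+M$, and it can be split once if needed. The separator $x$ is the minimum value of the remaining structure; it can be taken as the new first block's minimum, computed in linear time, or $\expectB$ if nothing remains. The total cost is $O(|S'|)$ amortized.

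\textbf{Merge.} Since all values in $\ds'$ are below all values in $\ds$, we perform a batch prepend. We take all pairs of $\ds'$ (its blocks have size below $2M'<M$) and discard duplicates by checking the key map, keeping the smaller value; because values in $\ds'$ are smaller, this means deleting the old copy in $\ds$ in $O(1)$. We then concatenate the pairs into new blocks of size $\Theta(M)$ in arbitrary order. They can be unsorted relative to one another only if they form a single prepended block, so we group everything into one or more blocks via repeated linear-time median splitting into $M$-sized pieces. This costs $O(|\ds'|\log(|\ds'|/M))$, which is linear when $|\ds'|=O(M)$; in general we instead keep $\ds'$'s existing blocks, already ordered, and just glue consecutive ones up to size $M$. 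The prepends cost $O(\log(N/M))$ per new block, which is $O(M')$-per-block-free by $M\ge\log(N/M)$, and the last small block is charged as above.

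The main obstacle is the amortization bookkeeping: every BST operation of cost $\log(N/M)$ must be charged to $\Omega(\log(N/M))$ work. The requirement $M\ge\log(N/M)$ makes a full block sufficient, but underfull blocks created by \mergeds\ or left over by \pull\ need care. We plan a potential argument: $\Phi=c\log(N/M)\cdot(\text{number of blocks of size}<M/2)$, with adjacent small blocks merged lazily when encountered.
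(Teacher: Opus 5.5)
Your construction is essentially the paper's: a balanced BST over unsorted blocks of size $\Theta(M)$, a key-to-position table, linear-time selection for \pull\ and for splits, and gluing the already-ordered blocks of $\ds'$ into chunks for a large \mergeds. However, the case you flag as the main obstacle is left unresolved, and the mechanism you sketch for it fails.

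When $0<\abs{\ds'}$ is much smaller than $\log(N/M)$ (e.g.\ $\abs{\ds'}=1$), you create a new prepended block and insert it into the BST. That costs $\Theta(\log(N/M))$, which is not $O(\abs{\ds'})$. You propose charging this ``last small block'' to ``the insertion or merge that created it'', but that is circular: this \mergeds\ is exactly the operation that must pay. Your potential $\Phi=c\log(N/M)\cdot\#\{\text{blocks of size}<M/2\}$ points the wrong way: creating a small block raises $\Phi$ by $c\log(N/M)$, which a tiny \mergeds\ cannot afford.

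Worse, repeated tiny merges with no intervening \pull\ each add a block. The block count is then no longer $O(N/M)$ (it can reach $\Theta(N)$), so the search depth underlying your $O(\log(N/M))$ bound for \ins\ degrades to $\Theta(\log N)$. ``Merging adjacent small blocks lazily when encountered'' happens only during \pull, so it does not prevent this.

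The missing idea, which is how the paper handles it, is never to create a BST node for a small $\ds'$. Keep an $O(1)$ pointer to the first (minimum) block. If $\abs{\ds'}<M$, simply append its pairs to that block's list. This is legal precisely because every value of $\ds'$ lies below every value of $\ds$, so the first block's interval just extends downward. It costs $O(\abs{\ds'})$, involves no BST work, and adds no block. When the first block later overflows, splitting it costs $O(M)+O(\log(N/M))=O(M)$, amortized over the $\Theta(M)$ pairs appended since it was last normalized. Only when $\abs{\ds'}\ge M$ do you insert new blocks, as chunks of at least $M/3$ pairs, each paying $O(\log(N/M))\le O(M)$.

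With that change, every block creation (by split, \pull, or large \mergeds) is charged to $\Omega(M)$ pairs, so the number of blocks stays $O(N/M)$. The rest of your amortization then goes through, including charging each block's eventual removal in \pull\ to its creation. The paper additionally joins undersized blocks to a neighbor during normalization, but that is secondary to the fix above.
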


\begin{proof}
We use a self-balanced binary search tree (e.g. Red-Black Tree \cite{red-black}) of blocks of size $\Theta(M)$.

If $M=1$ in base case, a normal BST suffices, and below we assume $M \leq \log(N/M)$.

We require blocks holding disjoint intervals that values stored in each block suit its interval, and blocks are ordered by their intervals.
Inside each block, key/value pairs are maintained by unordered linked lists.

We store a table from key to the block linked list node that corresponds to it, to support $O(1)$ membership inspection and deletion;
keep track of 
the first BST block \footnote{
This is used when $\abs{\ds'}$ is too small described in the merge section.
%and should be easy to implement.
}
for $O(1)$ access 
and the total number of key/value pairs present in $\ds$ used below.
In the beginning, $\ds$ is initialized with a single empty block of $[0, \expectB)$.

\begin{description}
    \item[Insert] Check if $v$ already exists, and continue only if the new value is smaller.
    Insert the pair to the new block and delete its old occurrence (if any). This takes $O(\log(N/M))$ time.
    \item[Merge] If $\ds'$ is empty, do nothing. If $0 < \abs{\ds'} < M$, restructure $\ds'$ to one block and append to the first block of $\ds$,
    possibly inducing block normalization introduced below. Now we assume $\abs{\ds'} \geq M$.
    
    Scan blocks of $\ds'$ in ascending order, keep the smaller value for duplicates;
    whenever we collect at least $M/3$ elements or $\ds'$ is exhausted, insert them into $\ds$.
    Causing $O(\abs{\ds'} / M)$ insertions of $O(\log (N/M))=O(M)$ time each, in total we spend $O(\abs{\ds'})$ time.
    \item[Pull] If $\ds$ contains no more than $M$ elements, pulling all of them takes linear time.
    Otherwise, continuously extract the smallest block from $\ds$ until we have at least $M+1$ but at most $2M$ elements.
    Find the $(M+1)-$th element $x$ in linear time using median-find algorithm (\cite{median-find}).
    Output the smaller $M$ elements and $x$.
    Join remaining elements larger than $x$ to the current smallest block.
    If the size of that block exceeds $M$, split it into two blocks and insert them back to $\ds$.
    All operations above take $O(\log (N/M) + M) = O(M)$ as $\log (N/M) \leq M$.
    \item[Normalization] We normalize block sizes for better performance guarantees after any operation above, resulting in $O(1)$ extra time per operated key/value pair.
    After \ins\ and \mergeds\ operation, we keep the operated block size in interval $[M/3, M]$; after \pull\ and normalization, in interval $[M/2, 2M/3]$;
    or if in total there are less than $M/3$ elements, restructure $\ds$ to contain only one block.
    
    Each inserted and merged pair results in at most one pair membership change in at most two blocks.
    When the constraint is violated, the block has interacted with $\Theta(M)$ newly inserted or merged pairs since last pull or normalization.
    % {\color{red} deletion?} \lh{I mentioned ``membership change'' which includes insertion and deletion}
    We join it to an adjacent block if it is under-sized,
    or split the block into two equally sized blocks using median-find algorithm if it is over-sized.
    In total each normalization takes $O(M)$ time and can be amortized to $O(1)$ time per inserted or merged pair.
\end{description}
\end{proof}
%\input{discussion}

%\newpage
\bibliographystyle{alpha}
\bibliography{references.bib}

@InProceedings{HT20,
author="Hagerup, Torben",
editor="Montanari, Ugo
and Rolim, Jos{\'e} D. P.
and Welzl, Emo",
title="Improved Shortest Paths on the Word \text{RAM}",
booktitle="Automata, Languages and Programming",
year="2000",
publisher="Springer Berlin Heidelberg",
address="Berlin, Heidelberg",
pages="61--72",
isbn="978-3-540-45022-1"
}

@article{TM00,
author = {Thorup, Mikkel},
title = {On \text{RAM} Priority Queues},
journal = {SIAM Journal on Computing},
volume = {30},
number = {1},
pages = {86-109},
year = {2000},
doi = {10.1137/S0097539795288246},

eprint = { 
    
        https://doi.org/10.1137/S0097539795288246
    
    

}
}

@article{FW93,
title = {Surpassing the information theoretic bound with fusion trees},
journal = {Journal of Computer and System Sciences},
volume = {47},
number = {3},
pages = {424-436},
year = {1993},
issn = {0022-0000},
doi = {https://doi.org/10.1016/0022-0000(93)90040-4},
author = {Michael L. Fredman and Dan E. Willard}
}

@Article{Dij59,
  author =   "E.~W. Dijkstra",
  title =    "A note on two problems in connexion with graphs",
  journal =  "Numerische Mathematik",
  volume =   "1",
  pages =    "269--271",
  year =     "1959",
}

@Article{FT87,
  author =       "M. L. Fredman and R. E. Tarjan",
  title =        "Fibonacci Heaps and Their Uses in Improved Network
                 Optimization Algorithms",
  pages =        "596--615",
  journal =      "JACM",
  volume =       "34",
  number =       "3",
  year =         "1987",
}

@article{PR05,
author = {Pettie, Seth and Ramachandran, Vijaya},
title = {A Shortest Path Algorithm for Real-Weighted Undirected Graphs},
journal = {SIAM Journal on Computing},
volume = {34},
number = {6},
pages = {1398-1431},
year = {2005},
doi = {10.1137/S0097539702419650},
eprint = { 
        https://doi.org/10.1137/S0097539702419650
}

}

@article{Thorup00, author = {Thorup, Mikkel}, title = {Undirected Single-Source Shortest Paths with Positive Integer Weights in Linear Time}, year = {1999}, issue_date = {May 1999}, publisher = {Association for Computing Machinery}, address = {New York, NY, USA}, volume = {46}, number = {3}, issn = {0004-5411}, doi = {10.1145/316542.316548}, journal = {J. ACM}, month = {May}, pages = {362–394}, numpages = {33} }

@article{Thorup04,
title = {Integer priority queues with decrease key in constant time and the single source shortest paths problem},
journal = {Journal of Computer and System Sciences},
volume = {69},
number = {3},
pages = {330-353},
year = {2004},
note = {Special Issue on STOC 2003},
issn = {0022-0000},
doi = {https://doi.org/10.1016/j.jcss.2004.04.003},
author = {Mikkel Thorup}
}

@article{SSBP18,
  author    = {Ran Duan and
               Kaifeng Lyu and
               Hongxun Wu and
               Yuanhang Xie},
  title     = {Single-Source Bottleneck Path Algorithm Faster than Sorting for Sparse
               Graphs},
  journal   = {CoRR},
  volume    = {abs/1808.10658},
  year      = {2018},
  archivePrefix = {arXiv},
  eprint    = {1808.10658},
  bibsource = {dblp computer science bibliography, https://dblp.org}
}

@article{FW94,
title = {Trans-dichotomous algorithms for minimum spanning trees and shortest paths},
journal = {Journal of Computer and System Sciences},
volume = {48},
number = {3},
pages = {533-551},
year = {1994},
issn = {0022-0000},
doi = {https://doi.org/10.1016/S0022-0000(05)80064-9},
author = {Michael L. Fredman and Dan E. Willard}
}

@article{Thorup96, author = {Thorup, Mikkel}, title = {Floats, Integers, and Single Source Shortest Paths}, year = {2000}, issue_date = {May 2000}, publisher = {Academic Press, Inc.}, address = {USA}, volume = {35}, number = {2}, issn = {0196-6774}, doi = {10.1006/jagm.2000.1080}, journal = {J. Algorithms}, month = {May}, pages = {189–201}, numpages = {13}, url = {https://doi.org/10.1006/jagm.2000.1080}, }

@inproceedings{Raman96, author = {Raman, Rajeev}, title = {Priority Queues: Small, Monotone and Trans-Dichotomous}, year = {1996}, isbn = {3540616802}, publisher = {Springer-Verlag}, address = {Berlin, Heidelberg}, booktitle = {Proceedings of the Fourth Annual European Symposium on Algorithms}, pages = {121–137}, numpages = {17}, series = {ESA '96} }

@article{Raman97, author = {Raman, Rajeev}, title = {Recent Results on the Single-Source Shortest Paths Problem}, year = {1997}, issue_date = {June 1997}, publisher = {Association for Computing Machinery}, address = {New York, NY, USA}, volume = {28}, number = {2}, issn = {0163-5700}, doi = {10.1145/261342.261352}, journal = {SIGACT News}, month = {June}, pages = {81–87}, numpages = {7} }

@INPROCEEDINGS{flow,
  author={Chen, Li and Kyng, Rasmus and Liu, Yang P. and Peng, Richard and Gutenberg, Maximilian Probst and Sachdeva, Sushant},
  booktitle={2022 IEEE 63rd Annual Symposium on Foundations of Computer Science (FOCS)}, 
  title={Maximum Flow and Minimum-Cost Flow in Almost-Linear Time}, 
  year={2022},
  volume={},
  number={},
  pages={612-623},
  doi={10.1109/FOCS54457.2022.00064}}

@INPROCEEDINGS{BNW22,
  author={Bernstein, Aaron and Nanongkai, Danupon and Wulff-Nilsen, Christian},
  booktitle={2022 IEEE 63rd Annual Symposium on Foundations of Computer Science (FOCS)}, 
  title={Negative-Weight Single-Source Shortest Paths in Near-linear Time}, 
  year={2022},
  volume={},
  number={},
  pages={600-611},
  doi={10.1109/FOCS54457.2022.00063}}

@inproceedings{Frederickson83, author = {Frederickson, Greg N.}, title = {Data Structures for On-Line Updating of Minimum Spanning Trees}, year = {1983}, isbn = {0897910990}, publisher = {Association for Computing Machinery}, address = {New York, NY, USA}, doi = {10.1145/800061.808754}, booktitle = {Proceedings of the Fifteenth Annual ACM Symposium on Theory of Computing}, pages = {252–257}, numpages = {6}, series = {STOC '83} }

@INPROCEEDINGS {DMSY23,
author = {R. Duan and J. Mao and X. Shu and L. Yin},
booktitle = {2023 IEEE 64th Annual Symposium on Foundations of Computer Science (FOCS)},
title = {A Randomized Algorithm for Single-Source Shortest Path on Undirected Real-Weighted Graphs},
year = {2023},
volume = {},
issn = {},
pages = {484-492},
doi = {10.1109/FOCS57990.2023.00035},
url = {https://doi.ieeecomputersociety.org/10.1109/FOCS57990.2023.00035},
publisher = {IEEE Computer Society},
address = {Los Alamitos, CA, USA},
month = {November}
}

@INPROCEEDINGS{HHRTT24,
  author={Bernhard Haeupler and Richard Hlad\'{\i}k and V\'aclav Rozho\v{n} and Robert E. Tarjan and Jakub T\v{e}tek},
  booktitle={2024 IEEE 65th Annual Symposium on Foundations of Computer Science (FOCS)}, 
  title={Universal Optimality of Dijkstra via Beyond-Worst-Case Heaps}, 
  year={2024},
  volume={},
  number={},
  pages={},
  url={https://arxiv.org/abs/2311.11793}, }

@INPROCEEDINGS {BCF23,
author = {Bringmann, Karl and Cassis, Alejandro and Fischer, Nick },
booktitle = { 2023 IEEE 64th Annual Symposium on Foundations of Computer Science (FOCS) },
title = {{ Negative-Weight Single-Source Shortest Paths in Near-Linear Time: Now Faster! }},
year = {2023},
volume = {},
ISSN = {},
pages = {515-538},
doi = {10.1109/FOCS57990.2023.00038},
url = {https://doi.ieeecomputersociety.org/10.1109/FOCS57990.2023.00038},
publisher = {IEEE Computer Society},
address = {Los Alamitos, CA, USA},
month = {November}
}

@inproceedings{Fineman24,
author = {Fineman, Jeremy T.},
title = {Single-Source Shortest Paths with Negative Real Weights in $\tilde{O}(mn^{8/9})$ Time},
year = {2024},
address = {Chicago, IL, USA},
booktitle = {Proceedings of the 56th Annual ACM Symposium on Theory of Computing},
pages = {3–14},
numpages = {12}
}

@inproceedings{HJQ25,
author = {Huang, Yufan and Jin, Peter and Quanrud, Kent},
title = {Faster single-source shortest paths with negative real weights via proper hop distance},
year = {2025},
address = {Chicago, IL, USA},
booktitle = {Proceedings of the 2025 Annual ACM-SIAM Symposium on Discrete Algorithms (SODA)},
pages = {5239-5244},
numpages = {6}
}

@article{HJQ25b,
      title={Faster negative length shortest paths by bootstrapping hop reducers}, 
      author = {Huang, Yufan and Jin, Peter and Quanrud, Kent},
      year={2025},
      journal={CoRR},
      volume={abs/2506.00428},
      eprint={2560.00428},
      archivePrefix={arXiv},
      primaryClass={cs.DS},
      url={https://arxiv.org/abs/2506.00428}, 
}

@INPROCEEDINGS{red-black,
  author={Guibas, Leo J. and Sedgewick, Robert},
  booktitle={19th Annual Symposium on Foundations of Computer Science (sfcs 1978)}, 
  title={A dichromatic framework for balanced trees}, 
  year={1978},
  volume={},
  number={},
  pages={8-21},
  doi={10.1109/SFCS.1978.3}}

@article{median-find,
title = {Time bounds for selection},
journal = {Journal of Computer and System Sciences},
volume = {7},
number = {4},
pages = {448-461},
year = {1973},
issn = {0022-0000},
doi = {https://doi.org/10.1016/S0022-0000(73)80033-9},
url = {https://www.sciencedirect.com/science/article/pii/S0022000073800339},
author = {Manuel Blum and Robert W. Floyd and Vaughan Pratt and Ronald L. Rivest and Robert E. Tarjan},
}

@article{Bellman1958,
  title={ON A ROUTING PROBLEM},
  author={Richard Bellman},
  journal={Quarterly of Applied Mathematics},
  year={1958},
  volume={16},
  pages={87-90},
  url={https://api.semanticscholar.org/CorpusID:123639971}
}

@inproceedings{SSSP25,
author = {Duan, Ran and Mao, Jiayi and Mao, Xiao and Shu, Xinkai and Yin, Longhui},
title = {Breaking the Sorting Barrier for Directed Single-Source Shortest Paths},
year = {2025},
isbn = {9798400715105},
publisher = {Association for Computing Machinery},
address = {New York, NY, USA},
url = {https://doi.org/10.1145/3717823.3718179},
doi = {10.1145/3717823.3718179},
booktitle = {Proceedings of the 57th Annual ACM Symposium on Theory of Computing},
pages = {36–44},
numpages = {9},
location = {Prague, Czechia},
series = {STOC '25}
}

@inproceedings{Yan25,
      title={Lossless Derandomization for Undirected Single-Source Shortest Paths and Approximate Distance Oracles}, 
      author={Shuyi Yan},
      booktitle = {SIAM Symposium on Simplicity in Algorithms (SOSA 2026), Vancouver, Canada, January 12-14, 2026},
      pages = {},
      publisher = {{SIAM}},
      year = {2026}
}

@article{DI04APSP,
author = {Demetrescu, Camil and Italiano, Giuseppe F.},
title = {A new approach to dynamic all pairs shortest paths},
year = {2004},
issue_date = {November 2004},
publisher = {Association for Computing Machinery},
address = {New York, NY, USA},
volume = {51},
number = {6},
issn = {0004-5411},
url = {https://doi.org/10.1145/1039488.1039492},
doi = {10.1145/1039488.1039492},
journal = {J. ACM},
month = nov,
pages = {968–992},
numpages = {25}
}

%\appendix

\end{document}